\newcommand{\typeof}{1} %
\newcommand{\longversion}[1]{\ifthenelse{\equal{\typeof}{0}}{}{#1}}
\newcommand{\shortversion}[1]{\ifthenelse{\equal{\typeof}{0}}{#1}{}}
\newcommand{\longshortversion}[2]{\ifthenelse{\equal{\typeof}{0}}{#2}{#1}}
	\title{On Dynamic Lifting and Effect Typing\\in Circuit Description Languages (Extended Version)}
	\author{Andrea Colledan and Ugo Dal Lago}
	\date{}
	\theoremstyle{plain}
	\newtheorem{theorem}{Theorem}[section]
	\newtheorem{proposition}[theorem]{Proposition}
	\newtheorem{lemma}[theorem]{Lemma}
	\theoremstyle{definition}
	\newtheorem{definition}[theorem]{Definition}
	\newtheorem{example}{Example}
\newcommand{\PQ}{\textsf{Proto-Quipper}}
\newcommand{\PQM}{\textsf{Proto-Quipper-M}}
\newcommand{\PQL}{\textsf{Proto-Quipper-L}}
\newcommand{\PQS}{\textsf{Proto-Quipper-S}}
\newcommand{\PQD}{\textsf{Proto-Quipper-D}}
\newcommand{\PQK}{\textsf{Proto-Quipper-K}}
\newcommand{\QCL}{\textsf{QCL}}
\newcommand{\QML}{\textsf{QML}}
\newcommand{\Haskell}{\texttt{Haskell}}
\newcommand{\Quipper}{\texttt{Quipper}}
\newcommand{\wtype}[1]{\fname{#1}}
\newcommand{\gateid}[1]{\fname{#1}}
\newcommand{\fname}[1]{\operatorname{\mathsf{#1}}}
\newcommand{\setname}[1]{\mathit{#1}}
\newcommand{\labelfresh}{\fname{fresh}}
\newcommand{\cto}{\triangleright}
\newcommand{\bseval}{\Downarrow}
\newcommand{\bsdiverge}{\Uparrow}
\newcommand{\simpbseval}{\bseval}
\newcommand{\simpbsdiverge}{\bsdiverge}
\newcommand{\qubitt}{\wtype{Qubit}}
\newcommand{\bitt}{\wtype{Bit}}
\newcommand{\bang}{\;!}
\newcommand{\circt}[1]{\operatorname{\mathsf{Circ}}_{#1}}
\newcommand{\unitt}{\mathbbm{1}}
\newcommand{\genericSetOne}{X} \newcommand{\genericSetTwo}{Y}
\newcommand{\indexsetOne}{I}
\newcommand{\gateset}{\mathscr{G}}
\newcommand{\labelset}{\mathscr{L}}
\newcommand{\dlvalset}{\mathscr{V}}
\newcommand{\wtypeset}{\mathscr{W}}
\newcommand{\lcset}{\mathscr{Q}}
\newcommand{\dltreeset}{\mathscr{T}}
\newcommand{\assset}{\mathscr{A}}
\newcommand{\cassset}{\mathscr{P}}
\newcommand{\contextset}{\setname{CONTEXT}}
\newcommand{\termset}{\setname{TERM}}
\newcommand{\valset}{\setname{VAL}}
\newcommand{\mvalset}{\setname{MVAL}}
\newcommand{\typeset}{\setname{TYPE}}
\newcommand{\mtypeset}{\setname{MTYPE}}
\newcommand{\genericObjectOne}{x} \newcommand{\genericObjectTwo}{y}
\newcommand{\renamingOne}{\rho}
\newcommand{\alpharenamingOne}{\pi}
\newcommand{\gateOne}{g} 
\newcommand{\circuitOne}{C} \newcommand{\circuitTwo}{D}
\newcommand{\labOne}{\ell} \newcommand{\labTwo}{k} 
\newcommand{\dlvalOne}{u} \newcommand{\dlvalTwo}{s}
\newcommand{\labsubOne}{\EuScript{L}} 
\newcommand{\dlvalsubOne}{\EuScript{V}} 
\newcommand{\wtypeOne}{w}
\newcommand{\lcOne}{Q} \newcommand{\lcTwo}{L}
\newcommand{\varOne}{x} \newcommand{\varTwo}{y} \newcommand{\varThree}{z}
\newcommand{\assOne}{a} \newcommand{\assTwo}{b} \newcommand{\assThree}{c}
\newcommand{\contextOne}{\Gamma}
\newcommand{\pcontextOne}{\Phi}
\newcommand{\termOne}{M} \newcommand{\termTwo}{N}   
\newcommand{\valOne}{V} \newcommand{\valTwo}{W} \newcommand{\valThree}{Z}
\newcommand{\typeOne}{A} \newcommand{\typeTwo}{B}  
\newcommand{\ptypeOne}{P} \newcommand{\ptypeTwo}{R}
\newcommand{\mtypeOne}{T} \newcommand{\mtypeTwo}{U}
\newcommand{\MongenericObjectOne}{\xi}
\newcommand{\MongenericObjectTwo}{\theta}
\newcommand{\MonlcOne}{\Delta} \newcommand{\MonlcTwo}{\Lambda}
\newcommand{\MontermOne}{\mu}  
\newcommand{\MonvalOne}{\phi}
\newcommand{\MonvalTwo}{\psi}
\newcommand{\MonmvalOne}{\lambda} 
\newcommand{\MontypeOne}{\alpha} \newcommand{\MontypeTwo}{\beta}  
 \newcommand{\MonmtypeTwo}{\upsilon}
\newcommand{\Moncontext}{\gamma}
\newcommand{\emptyassignment}{\emptyset}
\newcommand{\emptylc}{\emptyset}
\newcommand{\emptycontext}{\emptyset}
\newcommand{\condmonad}{\operatorname{\EuScript{K}}}
\newcommand{\powerset}{\operatorname{\EuScript{P}}}
\newcommand{\invert}[1]{{#1}^{-1}}
\newcommand{\unitv}{*}
\newcommand{\tuple}[1]{(#1)}
\newcommand{\lift }{\operatorname{\mathsf{lift}}}
\newcommand{\force}{\operatorname{\mathsf{force}}}
\newcommand{\justboxt}{\operatorname{\mathsf{box}}}
\newcommand{\boxt}[2]{\justboxt_{#1}}
\newcommand{\boxedcirc}[4]{(#1,#2,#3)_{#4}}
\newcommand{\apply}[1]{\operatorname{\mathsf{apply}}_{#1}}
\newcommand{\letin}[3]{\mathsf{let}\; #1 = #2 \;\mathsf{in}\; #3}
\newcommand{\return}{\operatorname{\mathsf{return}}}
\newcommand{\cinput}[1]{\fname{input}(#1)}
\newcommand{\clift}{\fname{lift}}
\newcommand{\cond}{\operatorname{?}}
\newcommand{\lto}{\Rightarrow}
\newcommand{\dom}{\fname{dom}}
\newcommand{\assmerge}{\cup}
\newcommand{\renamed}[2]{#1|#2|}
\newcommand{\alpharenamed}[2]{#1\langle#2\rangle}
\newcommand{\gateinputs}{\fname{inType}}
\newcommand{\gateoutputs}{\fname{outType}}
\newcommand{\cconcat}{::}
\newcommand{\freshlabels}{\operatorname{\mathsf{freshlabels}}}
\newcommand{\append}{\operatorname{\mathsf{append}}}
\newcommand{\munit}{\operatorname{\eta}}
\newcommand{\Monmerge}[1]{\stackrel{#1}{,}}
\newcommand{\namesof}[1]{\fname{var}_{#1}}
\newcommand{\Moncomp}[4]{{#1}[{#2}]_{#3}^{#4}}
\newcommand{\Monsplit}[1]{\Yleft^{#1}}
\newcommand{\Monunit}[1]{\treeLeaf{#1}}
\newcommand{\Monconst}[2]{#1^{#2}}
\newcommand{\Monflatten}[1]{\lfloor#1\rfloor}
\newcommand{\circjudgment}[4]{#1 \vdash^{#2} #3 \cto #4}
\newcommand{\mjudgment}[3]{#1 \vdash_m #2 : #3}
\newcommand{\compjudgment}[5]{#1;#2 \vdash_c^{#3} #4 : #5}
\newcommand{\valjudgment}[4]{#1;#2 \vdash_v #3 : #4}
\newcommand{\lbscfgjudgment}[6]{#1 \vdash_{#2}^{#3} #4 : #5 ; #6}
\newcommand{\rbscfgjudgment}[7]{#1 \vdash_{#2}^{#4} #5 : #6 ; #7}
\newcommand{\simpjudgment}[3]{\vdash^{#2}#1:#3}
\newcommand{\Moncompjudgment}[5]{#1;#2 \Vdash_c^{#3} #4 : #5}
\newcommand{\Monvaljudgment}[5]{#1;#2 \Vdash_v^{#3} #4 : #5}
\newcommand{\dlvaltreeOne}{\mathfrak{t}} \newcommand{\dlvaltreeTwo}{\mathfrak{r}}
\newcommand{\emptytree}{\epsilon}
\newcommand{\treeNode}[3]{#1\,\{#2\}\{#3\}}
\newcommand{\treeLeaf}[1]{\{#1\}}
\newcommand{\bscfgl}[3]{(#1,#2,#3)}
\newcommand{\bscfgr}[2]{(#1,#2)}
\newcommand{\unbox}{\operatorname{\mathsf{unbox}}}
\newcommand{\ite}[3]{\mathsf{if}\; #1 \;{\mathsf{then}}\; #2 \;{\mathsf{else}}\; #3}
\newcommand{\for}{\operatorname{\textnormal{ for }}}
\newcommand{\void}{\phantom{H}}
\newcommand{\rulespace}{\vspace{4pt}}
\newcommand{\forceindent}{\phantom{_}}
\newcommand{\measlift}{\mathit{ML}}
\newcommand{\when}[2]{\mathsf{when}\; #1 \;\mathsf{then} #2}
\newcommand{\lindex}{{\ell}}
\newcommand{\rindex}{{r}}
	\title{On Dynamic Lifting and Effect Typing in Circuit Description 
	Languages}
	\author{Andrea Colledan}{University of Bologna, Italy \and INRIA Sophia Antipolis, France}{andrea.colledan@unibo.it}{https://orcid.org/0000-0002-0049-0391}{}	
	\author{Ugo {Dal Lago}}{University of Bologna, Italy \and INRIA Sophia Antipolis, France}{ugo.dallago@unibo.it}{https://orcid.org/
		0000-0001-9200-070X}{}
	\authorrunning{A. Colledan and U. Dal Lago}
	\keywords{Circuit-Description Languages, $\lambda$-calculus, Dynamic lifting, Type and effect systems} 
\begin{document}

\maketitle

\begin{abstract}
In the realm of quantum computing, circuit description languages represent a valid alternative to traditional QRAM-style languages. They indeed allow for finer control over the output circuit, without sacrificing flexibility nor modularity. We introduce a generalization of the paradigmatic lambda-calculus \PQM, itself modeling the core features of the quantum circuit description language \Quipper. The extension, called \PQK, is meant to capture a very general form of dynamic lifting. This is made possible by the introduction of a rich type and effect system in which not only \textit{computations}, but also the very \textit{types} are effectful. The main results we give for the introduced language are the classic type soundness results, namely subject reduction and progress.
\end{abstract}

\section{Introduction}

Despite the undeniable fact that large-scale, error-free quantum hardware has  yet to be built \cite{nisq}, research into programming languages specifically  designed to be  compiled towards architectures including quantum hardware has  taken hold in recent years \cite{quantum-languages}. Most of the proposals in this sense (see~\cite{survey-gay,survey-selinger,ying}  for some surveys) concern languages that either express or can be  compiled into some form of \emph{quantum circuit}~\cite{nielsen-chuang}, which can then be executed by quantum hardware. This reflects the need to have tighter control over the kind and amount of quantum resources that programs require. 

In this scenario, the idea of considering high-level languages that are  specifically designed to \emph{describe} circuits and in which the latter are  treated as first-class citizens is particularly appealing. A typical example  of this class of languages is \Quipper~\cite{quipper-intro,quipper}, whose underlying design  principle is precisely that of enriching a very expressive and powerful  functional language like \Haskell\ with the possibility of manipulating quantum  circuits. In other words, programs do not just build circuits, but also treat them as data, as can be seen in the example from Figure \ref{fig: alice}. \Quipper's meta-theory has been studied in recent years through the  introduction of a family of research languages that correspond to suitable  \Quipper\ fragments and extensions, which usually take the form of linear  $\lambda$-calculi. We are talking about a family of languages whose members include \PQS~\cite{proto-quipper-s}, \PQM~\cite{proto-quipper-m}, \PQD~\cite{proto-quipper-d} and \PQL~\cite{proto-quipper-l}. In this paper, we define an extension of \PQM.

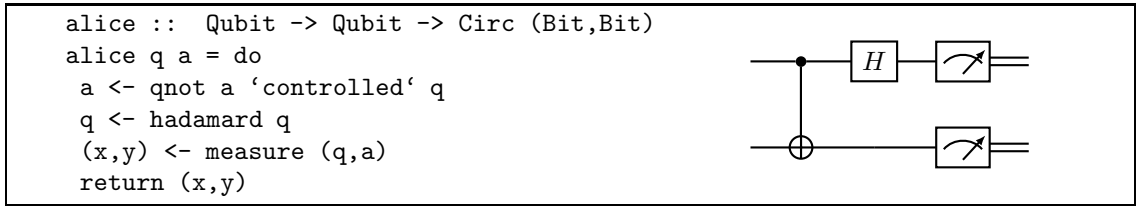
\begin{figure}[tb]
	\fbox{\begin{minipage}{.98\textwidth}
			\centering
			\begin{minipage}{.6\textwidth}
				\texttt{alice :: Qubit -> Qubit -> Circ (Bit,Bit)\\
					alice q a = do\\
					\forceindent a <- qnot a `controlled` q\\
					\forceindent q <- hadamard q\\
					\forceindent (x,y) <- measure (q,a)\\
					\forceindent return (x,y)}
			\end{minipage}
			\begin{minipage}{.3\textwidth}
				\begin{quantikz}
					& \ctrl{1}	& \gate{H}	& \meter{} & \cw \\
					& \targ{}	&  \qw 		& \meter{} & \cw
				\end{quantikz}
			\end{minipage}
	\end{minipage}}
	\caption{Alice's part of the quantum teleportation circuit. The \Quipper\ 
		program on the left builds the circuit on the right, but while doing so it also manipulates the smaller circuit \texttt{qnot a}, enriching it with control.}
	\label{fig: alice}
\end{figure}

An aspect that so far has only marginally been considered by the research community is 
the study of the meta-theory of so-called \emph{dynamic lifting}, i.e. the 
possibility of allowing the (classical) value flowing in 
one of the wires of the underlying circuit, naturally unknown at circuit 
building time, to be \emph{visible} in the host program for control flow. As 
an example, one could append a unitary to some of the wires \emph{only if} a previously performed measurement operation 
has yielded a certain outcome. This is commonly achieved in many quantum 
algorithms via classical control\longversion{ (see Figure \ref{fig: teleportation})}, but Quipper also offers a higher-level solution precisely in the form of dynamic lifting, 
as can be seen in the example program in Figure \ref{fig: teleportation-dl}. Notably, such a program cannot be captured by any of the calculi in the \PQ\ 
family, with the exception of Lee et al.'s \PQL~\cite{proto-quipper-l}, arguably the most 
recent addition to the family.

\longversion{
	\begin{figure}[tb]
		\fbox{\begin{minipage}{.98\textwidth}
				\centering
				\begin{minipage}{.8\textwidth}
					\texttt{teleport :: Qubit -> Qubit -> Qubit -> Circ Qubit\\
						teleport b q a = do\\
						\forceindent a <- qnot a `controlled` q\\
						\forceindent q <- hadamard q\\
						\forceindent (x,y) <- measure (q,a)\\
						\forceindent b <- gate_X b `controlled` y\\
						\forceindent b <- gate_Z b `controlled` x\\
						\forceindent return b}
				\end{minipage}
				\hspace{-10em}%
				\begin{minipage}{.4\textwidth}
					\begin{quantikz}[row sep = 2pt, column sep = 7pt]
						&\qw		&\qw		&\qw		&\gate{X}	&\gate{Z}	&\qw\\
						&\ctrl{1}	&\gate{H}	&\meter{} 	&\cw 		&\cwbend{-1}	&\cw\\
						&\targ{}	&\qw		&\meter{} 	&\cwbend{-2}	&\cw		&\cw
					\end{quantikz}
				\end{minipage}
		\end{minipage}}
		\caption{Quantum teleportation circuit with classical control}
		\label{fig: teleportation}
	\end{figure}
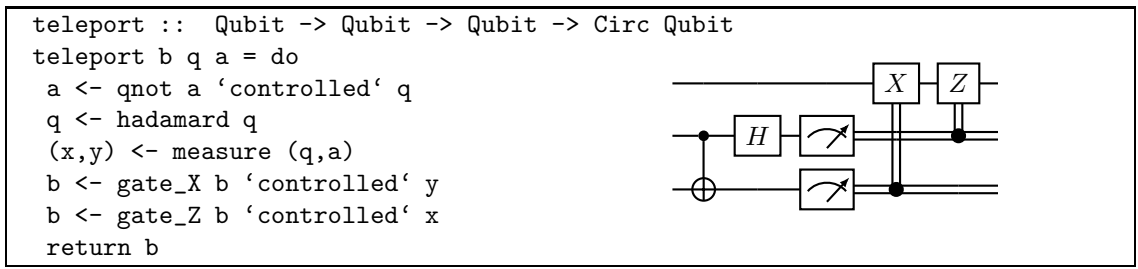
}

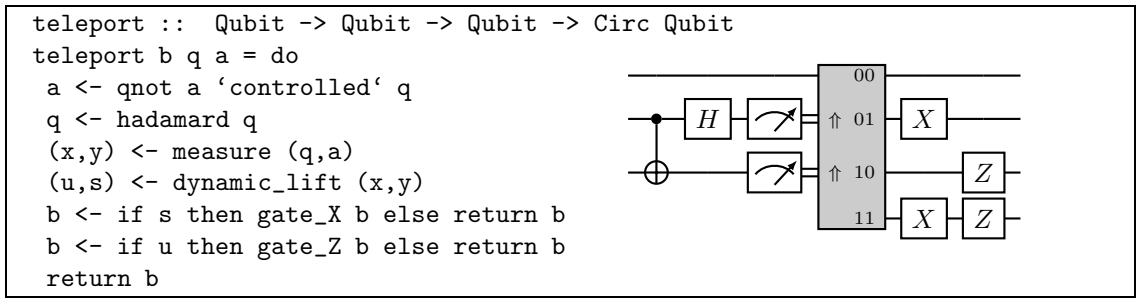
\begin{figure}[tb]
	\fbox{\begin{minipage}{.98\textwidth}
			\centering
			\begin{minipage}{.8\textwidth}
				\texttt{teleport :: Qubit -> Qubit -> Qubit -> Circ Qubit\\
					teleport b q a = do\\
					\forceindent a <- qnot a `controlled` q\\
					\forceindent q <- hadamard q\\
					\forceindent (x,y) <- measure (q,a)\\
					\forceindent (u,s) <- dynamic_lift (x,y)\\
					\forceindent b <- if s then gate_X b else return b\\
					\forceindent b <- if u then gate_Z b else return b\\
					\forceindent return b}
			\end{minipage}
			\longshortversion{\hspace{-12em}}{\hspace{-10em}}
			\begin{minipage}{.44\textwidth}
				\begin{quantikz}[row sep = 2pt, column sep = 6pt]
					& \qw		& \qw		&  \qw 		&\gate[wires={4}, style={fill=black!20},nwires={4},cwires={2,3}][25pt]{}\gateoutput{00}	& \qw		& \qw	& \qw\\
					& \ctrl{1}	& \gate{H}	&  \meter{} &\gateinput{$\Uparrow$}\gateoutput{01}		& \gate{X} & \qw	& \qw\\
					& \targ{}	&  \qw		& \meter{} 	&\gateinput{$\Uparrow$}\gateoutput{10}			& \qw	& \gate{Z} & \qw\\
					&			&			&			&\gateoutput{11}			& \gate{X} 	& \gate{Z} & \qw
				\end{quantikz}
			\end{minipage}
	\end{minipage}}
	\caption{Quantum teleportation circuit with dynamic lifting. The gray box is not a gate, but rather represents the dynamic lifting of the bit wires marked with $\Uparrow$ and the extension of the circuit with one of four possible continuations for the remaining wire, depending on the outcome of the intermediate measurements.}
	\label{fig: teleportation-dl}
\end{figure}

Looking at the \Quipper\ program in Figure \ref{fig: teleportation-dl}, one  immediately realizes that the two branches of both occurrences  of the  \texttt{if} operator change the underlying circuit in a uniform way, i.e. the  number and type of the wires are the same in either branch. What if, for instance, we wanted to condition the execution of a measurement on a lifted value, like in Figure~\ref{fig: one-way}? Situations such as the one just mentioned can occur, for example in one-way quantum computing, and the so-called measurement calculus~\cite{measurement-calculus} indeed allows the execution of a measurement to be conditioned on the result of a previously performed measurement. Unfortunately, \Quipper\ does \emph{not} allow the program in Figure \ref{fig: one-way} to be typed, and it is thus natural to wonder whether this is an intrinsic limitation, or if a richer type system can deal with a more general form of circuits.

\begin{figure}[tb]
	\fbox{\begin{minipage}{.98\textwidth}
			\centering
			\begin{minipage}{.62\textwidth}
				
				\texttt{oneWay :: Qubit -> Qubit -> ???\\
					oneWay q a = do\\
					\forceindent q <- hadamard q\\
					\forceindent x <- measure q\\
					\forceindent u <- dynamic_lift x\\
					\forceindent out <- if u then measure a else return a\\
					\forceindent return out}
			\end{minipage}
			\begin{minipage}{.33\textwidth}
				\begin{quantikz}[row sep = 2pt, column sep = 6pt]
					& \gate{H}	& 	\meter{}	& \gate[wires={2}, style={fill=black!20},cwires={1}][25pt]{}\gateinput{$\Uparrow$}\gateoutput{0}	& \qw	& \qw\\
					& \qw 		& \qw 			& \gateoutput{1}	& 	\meter{}	&\cw
				\end{quantikz}
			\end{minipage}
	\end{minipage}}
	\caption{An example of conditional measurement. This program is ill-typed in \Quipper.}
	\label{fig: one-way}
\end{figure}
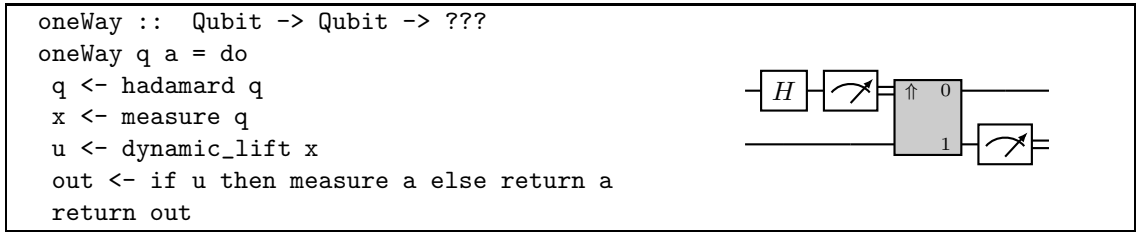

In this paper, we introduce a generalization of \PQM, called  \PQK, in which  dynamic lifting is available in a very  general form. This newly introduced language is capable of producing  not only circuits like the one in Figure \ref{fig: teleportation-dl}, but rather a more general class of circuits whose  structure and type \emph{essentially depend} on the values flowing through the  lifted channels, like the one in Figure~\ref{fig: one-way}. We show along the paper that this asks for a non-trivial  generalization of \PQM's type system, in which types reflect the different behaviors a circuit can have. This is achieved through a type and effect system \cite{types-and-effects} which assigns any \PQK\ computation (possibly) distinct types depending on the state of the lifted variables. The main results, beside the introduction of the language itself, its type  system, and its operational semantics, are the type soundness results of  subject reduction and progress, which together let us conclude that well-typed \PQK\ programs do not go wrong. \shortversion{A long version of this paper with more details and proofs is available \cite{extended-ver}.}

\section{Circuits and Dynamic Lifting: a Bird's-Eye View}
\label{sec: circuit-building languages}

This section is meant to provide an informal introduction to the peculiarities of the 
\PQ\ family of paradigmatic programming languages, for the non-specialists. The host language, namely \Haskell, is 
modeled as a linearly typed $\lambda$-calculus. Terms, in addition to manipulating the usual data structures and working with higher-order functions, 
act on an underlying circuit, which we usually refer to as $\circuitOne$. During program evaluation, $\circuitOne$ can be modified with the addition of wires, gates or entire sub-circuits. This is 
made possible through a dedicated operator called $\apply{}$. But how can the 
programmer specify \emph{where} in $\circuitOne$ these modifications have to be carried out? This is possible thanks to the presence of \emph{labels}, that is, names that identify distinct output wires of $\circuitOne$. 
These labels are ordinary terms which can be passed around, but have to be treated linearly. 
Among other things, they can be passed to $\apply{}$, together with the 
specification of which gate or sub-circuit is to be appended to the underlying circuit $\circuitOne$.

But is $\apply{}$ the only way of manipulating circuits? The answer is 
negative. Circuits, once built by means of a term, can be ``boxed'', potentially 
copied, and passed to other parts of the program, where they can be used, usually by appending them to multiple parts of the underlying circuit. From a linguistic point of view, this is possible thanks to an additional operator, called $\justboxt$, which is responsible for turning a circuit-building term $\termOne$ of type $\bang(\mtypeOne\multimap\mtypeTwo)$ -- the type of duplicable functions from label tuples to label tuples -- into a term of type $\circt{}(\mtypeOne,\mtypeTwo)$ -- the type of circuits. The term 
$\justboxt\termOne$ does not touch the underlying circuit, but rather evaluates $\termOne$ ``behind the scenes'', in a sandboxed environment, to obtain a standalone circuit which is then returned as a \emph{boxed circuit}.

\paragraph*{Measure as a Label-Lifting Operation.}
The above considerations are agnostic to the kind of circuits being built. In  fact, any type of circuit-like structure can be produced in output by programs of 
the \PQ\ family of languages, provided that it can be interpreted as a morphism in an underlying 
symmetric monoidal category \cite{proto-quipper-m}.  If we imagine that the produced structure 
is an \textit{actual} quantum circuit, however, it is only natural to wonder whether all the examples of 
circuits that we talked about informally in the introduction can be captured by an instance of \PQ. 
Unsurprisingly, the program in Figure~\ref{fig: alice} is not at all 
problematic, and can be handled easily by all languages in the \PQ\ family (see Figure \ref{fig: pqk alice}).

\begin{figure}[tb]
	\fbox{\begin{minipage}{.98\textwidth}
			\vspace{-1em}
			\begin{align*}
				\lambda q_{\qubitt} .\lambda a_{\qubitt}.
				&\letin{\tuple{q,a}}{\apply{}(\gateid{CNOT},\tuple{q,a})}{\\
					&\letin{q}{\apply{}(\gateid{H},q)}{\apply{}(\gateid{Meas2},\tuple{q,a})}}
			\end{align*}
	\end{minipage}}
	\caption{A \PQM\ program describing the circuit shown in Figure \protect\ref{fig: alice}. We assume that we have a constant boxed circuit $\gateid{CNOT},\gateid{H}$, etc. for every available native gate.}
	\label{fig: pqk alice}
\end{figure}

The program in Figure~\ref{fig: teleportation-dl}, on the other hand, can only be handled by \PQL\ (see Figure \ref{fig: pql teleportation-dl}). In this instance of the \PQ\ family, a measurement is not limited to consuming a qubit to return a bit. Rather, a \PQL\ measurement can return the \textit{Boolean value} corresponding to the outcome of the measurement. When this happens, the ongoing computation is split into two branches: one in which the Boolean output is true and one in which it is false. This way, further circuit-altering operations down the line can depend on the classical information produced by the intermediate measurement.

\begin{figure}[tb]
	\fbox{\begin{minipage}{.98\textwidth}
			\vspace{-1em}
			\begin{align*}
				\lambda b.\lambda q.\lambda a.&\letin{\tuple{q,a}}{(\unbox\gateid{CNOT})\,\tuple{q,a}}{\\
					&\letin{q}{(\unbox\gateid{H})\,q}{\\
						&\letin{\tuple{\dlvalOne,\dlvalTwo}}{(\unbox\gateid{MeasLift2})\,\tuple{q,a}}{\\
							&\letin{b}{\ite{\dlvalTwo}{(\unbox\gateid{X})\,b}{b}}{\\
								&\letin{b}{\ite{\dlvalOne}{(\unbox\gateid{Z})\,b}{b}}{b}}}}}
			\end{align*}
	\end{minipage}}
	\caption{A \PQL\ program describing the circuit shown in Figure \protect\ref{fig: teleportation-dl}. Informally, terms such as $(\unbox\gateid{H})\,q$ corresponds to $\apply{}(\gateid{H},q)$ in \PQM.}
	\label{fig: pql teleportation-dl}
\end{figure}

How about the program in Figure~\ref{fig: one-way}? Unfortunately, this case exceeds the expressiveness of even \PQL, in that it requires the distinct execution branches to yield values of different \textit{types}. \PQL, on the other hand, requires all branches of a computation to share the same type. This is where our contribution starts.

\paragraph*{The Basic Ideas Underlying \PQK}

The approach to dynamic lifting that we follow in this paper is radical. The 
evaluation of a term $\termOne$ involving the $\apply{}$ operator can give rise to 
the lifting of a bit value into a variable $\dlvalOne$ and consequently produce in output not a 
single result in the set $\valset$ of values, but possibly one distinct result for each possible 
value of $\dlvalOne$. Therefore, it is natural to think of $\termOne$ as a 
computation that results in an object of the set 
$\condmonad_{\{\dlvalOne\}}(\valset)$, where $ \condmonad_{\{\dlvalOne\}} = X 
\mapsto (\{\dlvalOne\} \rightarrow \{0,1\}) \rightarrow \genericSetOne$ is a 
functor such that for each possible assignment of a Boolean value to $\dlvalOne$, 
$\condmonad_{\{\dlvalOne\}}(X)$ returns an element of $X$.

What if \emph{more than one} variable is lifted? For example, a program could lift 
$\dlvalTwo$ after having lifted $\dlvalOne$, but \emph{only if} the latter has 
value $0$. This shows that one cannot just take 
$\condmonad_{\{\dlvalOne,\dlvalTwo\}}= X \mapsto (\{\dlvalOne,\dlvalTwo\} 
\rightarrow \{0,1\}) \rightarrow \genericSetOne$, simply because not all 
assignments in $\{\dlvalOne,\dlvalTwo\}\rightarrow \{0,1\}$ are relevant. 
Instead, one should just focus on the three assignments $(\dlvalOne=0,\dlvalTwo=0),(\dlvalOne=0,\dlvalTwo=1)$ and $(\dlvalOne=1)
$, namely those assignments which are consistent with the tree in Figure 
\ref{fig: lifting tree}, which we call a \emph{lifting tree}. This is a key 
concept in this work, which we will discuss in detail in Section \ref{sec: generalized circuits}. Our type system captures the lifting pattern of an
underlying well-typed program through a lifting tree $\dlvaltreeOne$ and the result of the corresponding computation is an element of $\condmonad_{\dlvaltreeOne}(\valset)$ where $\condmonad_{\dlvaltreeOne}=\genericSetOne\mapsto (\cassset_\dlvaltreeOne \to \genericSetOne)$ and $\cassset_\dlvaltreeOne$ is the 
set of all assignments of variables which describe a root-to-leaf path in $\dlvaltreeOne$.
Since by design we want to handle situations in which a circuit, and by 
necessity the term building it, can produce 
results which have distinct \emph{types} -- and not only distinct \emph{values} --
depending on the values of the lifted variables, we also employ (in the
spirit of the type and effects paradigm \cite{types-and-effects}) an effectful notion of \emph{type}, in which computations are typed according to an element of $\condmonad_{\dlvaltreeOne}(\typeset)$, where $\typeset$ is the set of \PQK\ types.

\begin{figure}[tb]
	\fbox{\begin{minipage}{.98\textwidth}
			\centering
			\begin{forest}
				[$\dlvalOne$,grow=east [$\dlvalTwo$,grow=east,edge label={node[midway,below,font=\scriptsize]{0}} [$\emptytree$,edge label={node[midway,below,font=\scriptsize]{0}}] [$\emptytree$,edge label={node[midway,above,font=\scriptsize]{1}}]] [$\emptytree$,edge label={node[midway,above,font=\scriptsize]{1}}]]
			\end{forest}
	\end{minipage}}
	\caption{An example of a lifting tree, where $\emptytree$ is the empty tree}
	\label{fig: lifting tree}
\end{figure}
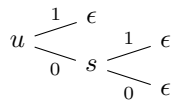

\section{Generalized Quantum Circuits}
\label{sec: generalized circuits}

A quantum circuit describes a quantum computation by means of the application 
of \emph{quantum gates}, which represent basic unitary operations, to a number 
of typed \textit{input wires}, to obtain a number of typed \textit{output wires}.
In this section, we introduce a general form of circuits in which the 
application of \emph{any} gate to one or more wires can be carried out
\emph{conditionally} on the classical value flowing in a lifted channel. This is possible 
even when the gate inputs are \textit{not} the same number and type of the gate 
outputs. This implies that the number and type of the outputs of a circuit can depend on the values flowing in its wires.
		
\paragraph*{A Syntax for Circuits}
We represent the inputs and outputs of circuits as \emph{label contexts}, that is, partial mappings from the set $\labelset$ of \emph{label names} to the set $\wtypeset = \{\wtype{Bit}, \wtype{Qubit}\}$ of wire types. Set $\labelset$ contains precisely the kind of labels that we mentioned in Section \ref{sec: circuit-building languages}, therefore a label context is a way to attach type information labeled wires. We write the set of all label contexts as $\lcset$.
		
Whereas the order of wires in a circuit as a whole is irrelevant, the order of wires in a gate application is crucial. For this reason, we perform gate applications not on label contexts, but rather on \emph{label tuples}, which imbue label contexts with a specific ordering via a simple form of typing judgment. The grammar and typing rules for label tuples, which we call \textit{M-values} following~\cite{proto-quipper-m}, are given in Figure \ref{table: mvalues and mtypes}, where $\labOne\in\labelset,\wtypeOne\in\wtypeset$, and $\lcOne$ and $\lcTwo$ are label contexts whose disjoint union is denoted by $\lcOne,\lcTwo$.
		
\begin{figure}[tb]
	\centering
	\fbox{\begin{minipage}{.98\textwidth}\centering
			
			\begin{tabular}{l l l}
				M-types & $\mtypeOne,\mtypeTwo$  &  $::= \unitt \mid \wtypeOne \mid \mtypeOne \otimes \mtypeTwo.$\\
				M-values & $\vec{\labOne},\vec{\labTwo}$ &  $::= \unitv \mid \labOne \mid \tuple{\vec{\labOne},\vec{\labTwo}}.$
			\end{tabular}
			$$
			\inference[unit]
			{}
			{\mjudgment{\emptycontext}{\unitv}{\unitt}}
			\qquad
			\inference[label]
			{}
			{\mjudgment{\labOne:\wtypeOne}{\labOne}{\wtypeOne}}
			\qquad
			\inference[tuple]
			{
				\mjudgment{\lcOne}{\vec\labOne}{\mtypeOne}
				&
				\mjudgment{\lcTwo}{\vec\labTwo}{\mtypeTwo}
			}
			{\mjudgment{\lcOne,\lcTwo}{\tuple{\vec\labOne,\vec\labTwo}}{\mtypeOne\otimes\mtypeTwo}}
			$$
	\end{minipage}}
	\caption{Syntax and rule system for M-types and M-values}
	\label{table: mvalues and mtypes}
\end{figure}

\begin{definition}[Gate Set]
	Let $\gateset$ be a set of \emph{gates}, equipped with two functions $\gateinputs: \gateset \to \mtypeset$ and $\gateoutputs: \gateset \to \mtypeset$. We denote by $\gateset(\mtypeOne,\mtypeTwo)$ the set of gates with input type $\mtypeOne$ and output type $\mtypeTwo$.\longversion{ More formally:
		\begin{equation}
			\gateset(\mtypeOne,\mtypeTwo)=\{\gateOne \in\gateset \mid \gateinputs(\gateOne)=\mtypeOne \wedge \gateoutputs(\gateOne)=\mtypeTwo\}.
	\end{equation}}
\end{definition}
Besides the set of labels $\labelset$, there is also another set of names, called $\dlvalset$, which is disjoint from it and contains the lifted variables. An \emph{assignment} of lifted variables is then simply a finite sequence of equalities $(\dlvalOne_1 = b_1,\ldots,\dlvalOne_n = b_n)$ which assign the values $b_1,\dots,b_n\in\{0,1\}$ to the distinct variables $\dlvalOne_1,\dots,\dlvalOne_n\in\dlvalset$, respectively. We usually indicate assignments with metavariables like $\assOne,\assTwo$ and $\assThree$.
		
\longversion{
\begin{definition}[Assignment]
	Given a subset of labels $\dlvalsubOne \subseteq \dlvalset$, an \emph{assignment of the variables in $\dlvalsubOne$} is a mapping $\dlvalsubOne \to \{0,1\}$.
	The empty assignment is denoted by $\emptyassignment$, whereas the union of two assignments $\assOne$ and $\assTwo$ with disjoint domains is written as $\assOne\assmerge\assTwo$.
\end{definition}
}
		
We now introduce a low-level language to describe quantum circuits at the gate level, which will serve as a target for circuit building in \PQK. We call it \textit{circuit representation language} (CRL) and define it via the following grammar:
\begin{equation}
	\circuitOne, \circuitTwo ::= \cinput{\lcOne}
	\mid \circuitOne; a \cond \gateOne(\vec\labOne) \to \vec\labTwo
	\mid \circuitOne; a \cond \lift(\labOne) \Rightarrow \dlvalOne.
\end{equation}
The base case $\cinput{\lcOne}$ corresponds to the trivial identity circuit that takes as 
input the wires represented by $\lcOne$ and returns them unchanged.
The notation $\assOne \cond \gateOne(\vec\labOne) \to \vec\labTwo$ denotes the 
application of a gate $\gateOne$ to the wires identified by
$\vec\labOne$ to obtain the wires in $\vec\labTwo$, 
provided that the condition expressed by $\assOne$ is met. We simply write 
$\gateOne(\vec\labOne) \to \vec\labTwo$ when a gate is applied unconditionally 
(i.e. when $\assOne = \emptyassignment$). Figure \ref{fig: CRL alice} shows a 
simple example of a CRL circuit consisting exclusively of gate applications.

On the other hand, $\assOne \cond \clift(\labOne) \Rightarrow \dlvalOne$ 
represents the dynamic lifting of the bit wire $\labOne$ if the 
condition expressed by $\assOne$ is met. When we perform dynamic lifting on a 
bit, we promote its contents to a Boolean value that is bound to the lifted variable $\dlvalOne$.
This variable can then be mentioned in subsequent assignments to control 
whether further operations in the circuit are executed or not.
The introduction of $\dlvalOne$ thus naturally leads to two distinct execution branches: one in which $\dlvalOne=0$ and one in which $\dlvalOne=1$. As we do with gate applications, we write $\clift(\labOne)\Rightarrow\dlvalOne$ when a lifting operation is unconditional. A CRL circuit that performs dynamic lifting is shown in Figure \ref{fig: CRL teleportation-dl}.
		
\begin{figure}[tb]
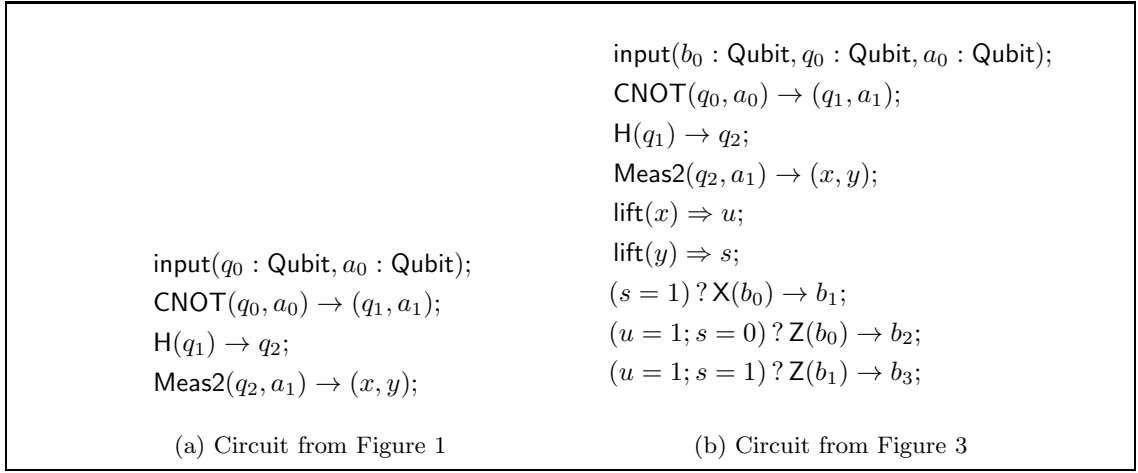

	\fbox{\begin{minipage}{.98\textwidth}\centering
			\begin{subfigure}[b]{.46\textwidth}
				\begin{equation*}
					\begin{aligned}
						&\cinput{q_0:\qubitt,a_0:\qubitt};\\
						&\gateid{CNOT}(q_0,a_0) \to (q_1,a_1);\\
						&\gateid{H}(q_1) \to q_2;\\
						&\gateid{Meas2}(q_2,a_1) \to (x,y);
					\end{aligned}
				\end{equation*}
				\caption{Circuit from Figure \protect{\ref{fig: alice}}}
				\label{fig: CRL alice}
			\end{subfigure}
			\begin{subfigure}[b]{.46\textwidth}
				\begin{equation*}
					\begin{aligned}
						&\cinput{b_0:\qubitt,q_0:\qubitt,a_0:\qubitt};\\
						&\gateid{CNOT}(q_0,a_0) \to (q_1,a_1);\\
						&\gateid{H}(q_1) \to q_2;\\
						&\gateid{Meas2}(q_2,a_1) \to (x,y);\\
						&\clift(x) \lto \dlvalOne;\\
						&\clift (y) \lto \dlvalTwo;\\
						&(\dlvalTwo = 1) \cond \gateid{X}(b_0) \to b_1;\\
						&(\dlvalOne = 1 ; \dlvalTwo = 0) \cond \gateid{Z}(b_0) \to b_2;\\
						&(\dlvalOne = 1 ; \dlvalTwo = 1)\cond \gateid{Z}(b_1) \to b_3;
					\end{aligned}
				\end{equation*}
				\caption{Circuit from Figure \protect{\ref{fig: teleportation-dl}}}
				\label{fig: CRL teleportation-dl}
			\end{subfigure}
	\end{minipage}}
	\caption{Two examples of CRL descriptions of a quantum circuit}
\end{figure}

\paragraph*{Lifting Trees} Naturally, not all CRL expressions denote reasonable circuits. For 
example, conditioning the application of a gate on the value of a lifted 
variable which has not yet been introduced should be avoided, for obvious 
reasons. Capturing this idea at the type level is non trivial, since 
the presence of a variable can itself depend on previous liftings. This is where the concept of lifting 
tree, which we introduced informally in Section~\ref{sec: circuit-building languages}, really comes into play.

\begin{definition}[Lifting Tree]
	We define the set $\dltreeset$ of \emph{lifting trees}, along with their \emph{variable set on branch $\assOne$}, seen as a function $\namesof{\assOne}:\dltreeset\to\powerset(\dlvalset)$, as the smallest set of expressions and functions such that
	\begin{itemize}
		\item $\emptytree\in\dltreeset$ with $\namesof{\assOne}(\emptytree)=\emptyset$.
		\item If $\dlvaltreeOne_0\in\dltreeset$ and $\dlvaltreeOne_1\in\dltreeset$, then for every $\dlvalOne$ not in $\namesof{\emptyassignment}(\dlvaltreeOne_0) \cup \namesof{\emptyassignment}(\dlvaltreeOne_1)$ we have $\treeNode{\dlvalOne}{\dlvaltreeOne_0}{\dlvaltreeOne_1}\in\dltreeset$ and
		\longshortversion
		{\begin{equation}
			\namesof{\assOne}(\treeNode{\dlvalOne}{\dlvaltreeOne_0}{\dlvaltreeOne_1})= \{\dlvalOne\} \cup \begin{cases}
				\namesof{\assOne}\left(\dlvaltreeOne_{0}\right) & \textnormal{if } \assOne(\dlvalOne)=0,\\
				\namesof{\assOne}\left(\dlvaltreeOne_{1}\right) & \textnormal{if } \assOne(\dlvalOne)=1,\\
				\namesof{\assOne}(\dlvaltreeOne_0)\cup\namesof{\assOne}(\dlvaltreeOne_1) & \textnormal{if $\assOne(\dlvalOne)$ is undefined}.
			\end{cases}
		\end{equation}}
		{$\namesof{\assOne}(\treeNode{\dlvalOne}{\dlvaltreeOne_0}{\dlvaltreeOne_1})=\{\dlvalOne\}\cup\namesof{\assOne}\left(\dlvaltreeOne_{\assOne(\dlvalOne)}\right)$ if $\assOne(\dlvalOne)$ is defined, and $\{\dlvalOne\}\cup\namesof{\assOne}(\dlvaltreeOne_0)\cup\namesof{\assOne}(\dlvaltreeOne_1)$ otherwise.}
	\end{itemize}
\end{definition}
	
We often write $\namesof{}(\dlvaltreeOne)$ as shorthand for $\namesof{\emptyassignment}(\dlvaltreeOne)$ to denote all the variables mentioned in $\dlvaltreeOne$.
By way of lifting trees, we can keep track of whether an assignment, representing a condition, is consistent with the current state of the lifted variables.
Given a lifting tree $\dlvaltreeOne$, we call $\assset_\dlvaltreeOne$ the set of such assignments\longshortversion
{.
	\begin{definition}[Assignment Set]
		Given a lifting tree $\dlvaltreeOne$, we define its \emph{assignment set}, written $\assset_\dlvaltreeOne$, as
		\begin{equation}
			\begin{aligned}
				\assset_\emptytree &= \{\emptyassignment\},\\
				\assset_{\treeNode{\dlvalOne}{\dlvaltreeOne_{0}}{\dlvaltreeOne_{1}}} &= \{\assOne,\assTwo,(\dlvalOne=0)\assmerge\assOne,(\dlvalOne=1)\assmerge\assTwo \mid \assOne\in\assset_{\dlvaltreeOne_{0}}, \assTwo\in\assset_{\dlvaltreeOne_{1}} \}.
			\end{aligned}
		\end{equation}
	\end{definition}
}
{ (which is easily defined by induction on $\dlvaltreeOne$ \cite{extended-ver}).}
Among these consistent assignments, there are some which are \emph{maximal}, i.e. that cannot be further extended: they describe root-to-leaf paths in $\dlvaltreeOne$ and correspond exactly to the elements of the set $\cassset_\dlvaltreeOne$ which we introduced in Section \ref{sec: circuit-building languages}.
\longversion{
	\begin{definition}[Path Set]
		Given a lifting tree $\dlvaltreeOne$, we define its \emph{path set}, written $\cassset_\dlvaltreeOne$, as
		\begin{equation}
			\begin{aligned}
				\cassset_\emptytree &= \{\emptyassignment\},\\
				\cassset_{\treeNode{\dlvalOne}{\dlvaltreeOne_{0}}{\dlvaltreeOne_{1}}} &= \{(\dlvalOne=0)\assmerge\assOne,(\dlvalOne=1)\assmerge\assTwo \mid \assOne\in\cassset_{\dlvaltreeOne_{0}}, \assTwo\in\cassset_{\dlvaltreeOne_{1}} \}.
			\end{aligned}
		\end{equation}
	\end{definition}
}

Finally, we can properly define one of the key notions of this paper, not only for 
circuits, but also for programs: given a generic set $\genericSetOne$, $\condmonad_{\dlvaltreeOne}(\genericSetOne)$ indicates the 
set $\cassset_{\dlvaltreeOne}\to\genericSetOne$, which we call the \emph{lifting} of $\genericSetOne$, and whose elements we refer to as 
\emph{lifted objects}. Despite the fact that lifted objects are formally mappings, seeing them as decorations of lifting trees whose leaves are labeled with objects from $\genericSetOne$ is perhaps more intuitive. In this perspective, given $\genericObjectOne\in\genericSetOne$, we indicate by $\treeLeaf{\genericObjectOne}$ the trivial lifted object in $\condmonad_\emptytree(\genericSetOne)$ defined as the mapping $\emptyassignment\mapsto\genericObjectOne$, and by $\treeNode{\dlvalOne}{\MongenericObjectOne_0}{\MongenericObjectOne_1}$ the object in $\condmonad_{\treeNode{\dlvalOne}{\dlvaltreeOne_0}{\dlvaltreeOne_1}}(\genericSetOne)$ defined as $\assOne\mapsto \MongenericObjectOne_{\assOne(\dlvalOne)}(\assOne')$, where $\MongenericObjectOne_0\in\condmonad_{\dlvaltreeOne_0}(\genericSetOne),\MongenericObjectOne_1\in\condmonad_{\dlvaltreeOne_1}(\genericSetOne)$ and $\assOne'\in\cassset_{\dlvaltreeOne_{\assOne(\dlvalOne)}}$ is obtained from $\assOne$ by excluding $\dlvalOne$ from its domain. A graphical representation of this intuition can be found in the form of the trees shown in figures \ref{fig: composition} and \ref{fig: flattening}.

\begin{example}
	A CRL circuit that does \textit{not} perform dynamic lifting has an output of the form $\treeLeaf{\lcOne}$ for some label context $\lcOne$. On the other hand, a CRL circuit like the one in Figure \ref{fig: CRL teleportation-dl} has output $\treeNode{\dlvalOne}{\treeNode{\dlvalTwo}{b_0:\qubitt}{b_1:\qubitt}}{\treeNode{\dlvalTwo}{b_2:\qubitt}{b_3:\qubitt}}$.
\end{example}

The same intuition informs the various operations that we define homogeneously on lifting trees and lifted objects. The first is a very natural one:
if $\dlvaltreeOne$ is a lifting tree and $\{\genericObjectOne_\assOne\}_{\assOne\in\cassset_\dlvaltreeOne}$ is 
a family of elements in $\genericSetOne$ indexed on the root-to-leaf paths of $\dlvaltreeOne$, then the expression $\Moncomp{\dlvaltreeOne}{\genericObjectOne_\assOne}{\assOne}{\cassset_\dlvaltreeOne}$ stands for the element of $\condmonad_{\dlvaltreeOne}(\genericSetOne)$ obtained by sticking each $\genericObjectOne_\assOne$ to the leaf of $\dlvaltreeOne$ identified by $\assOne$. Be aware that in general we employ this operation, which we call \emph{composition}, in a very flexible way. Namely, we generally perform composition on a \textit{subset} of leaves, in which case we write $\Moncomp{\dlvaltreeOne}{\genericObjectOne_\assOne}{\assOne}{\indexsetOne}$ for some $\indexsetOne\subseteq\cassset_\dlvaltreeOne$. When $\indexsetOne$ is a singleton $\{\assTwo\}$ we omit the subscript $\assOne$ and write $\Moncomp{\dlvaltreeOne}{\genericObjectOne}{}{\{\assTwo\}}$, whereas when $\indexsetOne=\cassset_\dlvaltreeOne$ we omit the index set and write $\Moncomp{\dlvaltreeOne}{\genericObjectOne_\assOne}{\assOne}{}$. We also allow already specified lifted objects to appear on the left of a composition, in which case we overwrite the interested branches. For instance, if $\MonlcOne$ is a lifted label context, we often write the expression $\Moncomp{\MonlcOne}{\lcOne}{}{\{\assOne\}}$ to denote the lifted label context that is $\lcOne$ on $\assOne$ and is otherwise equal to $\MonlcOne$ on all the other branches.
		
\longversion{
	\begin{definition}[Composition]
		Given $\MongenericObjectOne\in\condmonad_{\dlvaltreeOne}(\genericSetOne)$, an index set $\indexsetOne\subseteq\cassset_\dlvaltreeOne$ and a family $\{\genericObjectOne_\assOne\}_{\assOne\in\indexsetOne}$ of elements in $\genericSetOne$, we define the \emph{composition of $\MongenericObjectOne$ and $\{\genericObjectOne_\assOne\}_{\assOne\in\indexsetOne}$}, written $\Moncomp{\MongenericObjectOne}{\genericObjectOne_\assOne}{\assOne}{\indexsetOne}$, as
		\begin{equation}
			\begin{aligned}
				\Moncomp{\Monunit{\genericObjectTwo}}{\genericObjectOne}{}{\emptyset} &= \Monunit{\genericObjectTwo},\\
				\Moncomp{\Monunit{\genericObjectTwo}}{\genericObjectOne}{}{\{\emptyassignment\}} &= \Monunit{\genericObjectOne},\\
				\Moncomp{\treeNode{\dlvalOne}{\MongenericObjectOne_0}{\MongenericObjectOne_1}}{\genericObjectOne_\assOne}{\assOne}{\indexsetOne} &= \treeNode{\dlvalOne}{\Moncomp{\MongenericObjectOne_0}{\genericObjectOne_\assOne}{\assOne}{\indexsetOne_0}}{\Moncomp{\MongenericObjectOne_1}{\genericObjectOne_\assOne}{\assOne}{\indexsetOne_1}},
			\end{aligned}
		\end{equation}
		where $\indexsetOne_b=\{\assOne|_{\namesof{}(\dlvaltreeOne)\setminus\{\dlvalOne\}} \mid \assOne\in\indexsetOne \wedge \assOne(\dlvalOne)=b\}$ and $\assOne|_{\namesof{}(\dlvaltreeOne)\setminus\{\dlvalOne\}}$ denotes the exclusion of $\dlvalOne$ from the domain of $\assOne$.
	\end{definition}
}

\begin{example}
\label{ex: composition}
	Let $\dlvaltreeOne=\treeNode{\dlvalOne}{\treeNode{\dlvalTwo}{\emptytree}{\emptytree}}{\emptytree}$ and let $\MonlcOne = \treeNode{\dlvalOne}{\treeNode{\dlvalTwo}{q_0:\qubitt}{c_1:\bitt}}{c_2:\bitt} \in \condmonad_\dlvaltreeOne(\lcset)$ be a lifted label context, which graphically corresponds to the tree shown in Figure \ref{fig: composition (before)}. We have $\Moncomp{\MonlcOne}{c_0:\bitt}{}{\{\dlvalOne=0,\dlvalTwo=0\}} = \treeNode{\dlvalOne}{\treeNode{\dlvalTwo}{c_0:\bitt}{c_1:\bitt}}{c_2:\bitt} \in \condmonad_\dlvaltreeOne(\lcset)$, which corresponds to the tree shown in Figure \ref{fig: composition (after)}.
\end{example}

\begin{figure}[tb]\centering
	\fbox{\begin{minipage}{.98\textwidth}\centering
			\begin{subfigure}[c]{.4\textwidth}
			\begin{forest}
				[$\dlvalOne$,grow=east [$\dlvalTwo$,grow=east,edge label={node[midway,below,font=\scriptsize]{0}} [$q_0:\qubitt$,edge label={node[midway,below,font=\scriptsize]{0}}] [$c_1:\bitt$,edge label={node[midway,above,font=\scriptsize]{1}}]]
				[$c_2:\bitt$,edge label={node[midway,above,font=\scriptsize]{1}}]]
			\end{forest}
			\subcaption{Before composition}
			\label{fig: composition (before)}
			\end{subfigure}
			\begin{subfigure}[c]{.28\textwidth}
			\begin{forest}
				[$\dlvalOne$,grow=east [$\dlvalTwo$,grow=east,edge label={node[midway,below,font=\scriptsize]{0}} [$c_0:\bitt$,edge label={node[midway,below,font=\scriptsize]{0}}] [$c_1:\bitt$,edge label={node[midway,above,font=\scriptsize]{1}}]]
				[$c_2:\bitt$,edge label={node[midway,above,font=\scriptsize]{1}}]]
			\end{forest}
			\subcaption{After composition}
			\label{fig: composition (after)}
			\end{subfigure}
	\end{minipage}}
	\caption{An example of composition with overwriting}
	\label{fig: composition}
\end{figure}
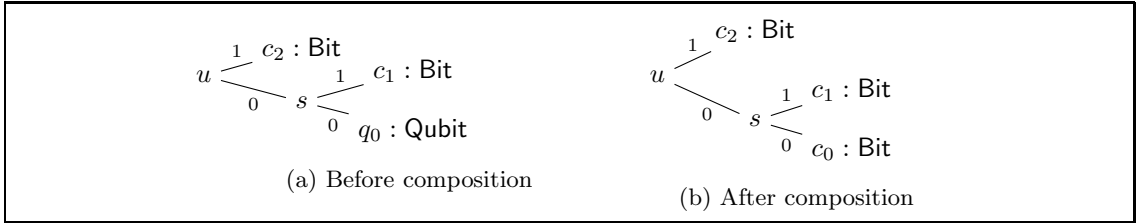
	
The second operation is a \textit{flattening} operation, which we indicate with $\Monflatten{\cdot}$. Intuitively, if we have a lifted object whose leaves are themselves trees, the flattening operation ``unfolds'' the trees in the leaves so that they become sub-trees of the original tree. Given a family of trees $\{\dlvaltreeTwo_\assOne\}_{\assOne\in\cassset_\dlvaltreeOne}$ the difference between $\Moncomp{\dlvaltreeOne}{\dlvaltreeTwo_\assOne}{\assOne}{}$ and $\Monflatten{\Moncomp{\dlvaltreeOne}{\dlvaltreeTwo_\assOne}{\assOne}{}}$ is therefore that the former is an element of $\condmonad_\dlvaltreeOne(\dltreeset)$, whereas the latter is a proper element of $\dltreeset$. This operation is well-defined when $\namesof{\assOne}(\dlvaltreeOne)\cap\namesof{}(\dlvaltreeTwo_\assOne)=\emptyset$ for every $\assOne\in\cassset_\dlvaltreeOne$. We can also flatten when we have have lifted objects on the right of a composition: if we have $\Moncomp{\dlvaltreeOne}{\MongenericObjectOne_\assOne}{\assOne}{}$, where $\MongenericObjectOne_\assOne\in\condmonad_{\dlvaltreeTwo_\assOne}(\genericSetOne)$ for every $\assOne\in\cassset_\dlvaltreeOne$, then $\Monflatten{\Moncomp{\dlvaltreeOne}{\MongenericObjectOne_\assOne}{\assOne}{}}\in\condmonad_{\Monflatten{\Moncomp{\dlvaltreeOne}{\dlvaltreeTwo_\assOne}{\assOne}{}}}(\genericSetOne)$.

\longversion{
	A formal definition of flattening requires us to consider a slightly more general operation $\Monflatten{\cdot}^\dlvalsubOne$, where $\dlvalsubOne$ is a finite set of lifted variables which are accumulated as a lifted object is traversed. At the leaf level, $\Monflatten{\Monunit{\genericObjectOne}}^\dlvalsubOne=\genericObjectOne$ only if $\genericObjectOne$ is a lifted object in which none of the lifted variables in $\dlvalsubOne$ occur. More formally:
	\begin{definition}[Flattening]
		Given $\MongenericObjectOne\in\condmonad_{\dlvaltreeOne}(\genericSetOne)$, we define the \emph{flattening of $\MongenericObjectOne$ under $\dlvalsubOne$}, written $\Monflatten{\MongenericObjectOne}^\dlvalsubOne$, as
		\begin{equation}
			\begin{aligned}
				\Monflatten{\Monunit{\genericObjectTwo}}^\dlvalsubOne &= \begin{cases}
					\genericObjectTwo & \textnormal{if } \exists\dlvaltreeTwo,\genericSetTwo.\genericObjectTwo\in\condmonad_{\dlvaltreeTwo}(\genericSetTwo) \textnormal{ and } \dlvalsubOne\cap\namesof{}(\dlvaltreeTwo)=\emptyset,\\
					\Monunit{\genericObjectTwo} & \textnormal{if } \nexists\dlvaltreeTwo,\genericSetTwo.\genericObjectTwo\in\condmonad_{\dlvaltreeTwo}(\genericSetTwo),
				\end{cases}\\
				\Monflatten{\treeNode{\dlvalOne}{\MongenericObjectOne_0}{\MongenericObjectOne_1}}^\dlvalsubOne &= \treeNode{\dlvalOne}{\Monflatten{\MongenericObjectOne_0}^{\dlvalsubOne\cup\{\dlvalOne\}}}{\Monflatten{\MongenericObjectOne_1}^{\dlvalsubOne\cup\{\dlvalOne\}}}.
			\end{aligned}
		\end{equation}
	\end{definition}

The actual flattening operation that we employ in the rest of the paper can then be defined as $\Monflatten{\cdot} = \Monflatten{\cdot}^\emptyset$.
}

\begin{example}
	\label{ex: flattening}
	Reconsider $\dlvaltreeOne$ and $\treeNode{\dlvalOne}{\treeNode{\dlvalTwo}{c_0:\bitt}{c_1:\bitt}}{c_2:\bitt}=\MonlcOne'$ from Example \ref{ex: composition}. Let $\MongenericObjectOne=\Moncomp{\MonlcOne'}{\treeNode{\dlvalTwo}{c_2:\bitt}{c_3:\bitt}}{}{\{\dlvalOne=1\}} = \treeNode{\dlvalOne}{\treeNode{\dlvalTwo}{c_0:\bitt}{c_1:\bitt}}{\Monunit{\treeNode{\dlvalTwo}{c_2:\bitt}{c_3:\bitt}}} \in \condmonad_\dlvaltreeOne(\lcset\cup\condmonad_{\treeNode{\dlvalTwo}{\emptytree}{\emptytree}}(\lcset))$. Note that this object, corresponding to Figure \ref{fig: flattening (before)}, is \textit{not} a lifted label context, as one of its leaves is itself a lifted label context. Because $\dlvalTwo$ does not occur in $\namesof{\dlvalOne=1}(\dlvaltreeOne)=\{\dlvalOne\}$, we can write $\Monflatten{\MongenericObjectOne}= \treeNode{\dlvalOne}{\treeNode{\dlvalTwo}{c_0:\bitt}{c_1:\bitt}}{\treeNode{\dlvalTwo}{c_2:\bitt}{c_3:\bitt}} \in \condmonad_{\dlvaltreeOne'}(\lcset)$, for $\dlvaltreeOne'=\treeNode{\dlvalOne}{\treeNode{\dlvalTwo}{\emptytree}{\emptytree}}{\treeNode{\dlvalTwo}{\emptytree}{\emptytree}}$. \textit{This} is an actual lifted label context, which corresponds to Figure \ref{fig: flattening (after)}.
\end{example}

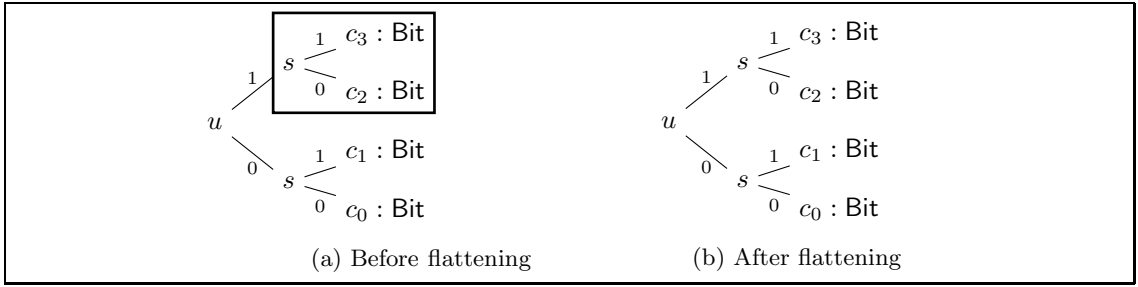
\begin{figure}[tb]\centering
	\fbox{\begin{minipage}{.98\textwidth}\centering
			\begin{subfigure}[c]{.4\textwidth}
				\begin{forest}
					[$\dlvalOne$,grow=east [$\dlvalTwo$,grow=east,edge label={node[midway,below,font=\scriptsize]{0}} [$c_0:\bitt$,edge label={node[midway,below,font=\scriptsize]{0}}] [$c_1:\bitt$,edge label={node[midway,above,font=\scriptsize]{1}}]]
					[$\dlvalTwo$,grow=east,edge label={node[midway,above,font=\scriptsize]{1}}, tikz={\node [draw, line width=1pt, inner sep=0,fit to=tree] {};} [$c_2:\bitt$,edge label={node[midway,below,font=\scriptsize]{0}}] [$c_3:\bitt$,edge label={node[midway,above,font=\scriptsize]{1}}]]]
				\end{forest}
				\subcaption{Before flattening}
				\label{fig: flattening (before)}
			\end{subfigure}
			\begin{subfigure}[c]{.26\textwidth}
				\begin{forest}
					[$\dlvalOne$,grow=east [$\dlvalTwo$,grow=east,edge label={node[midway,below,font=\scriptsize]{0}} [$c_0:\bitt$,edge label={node[midway,below,font=\scriptsize]{0}}] [$c_1:\bitt$,edge label={node[midway,above,font=\scriptsize]{1}}]]
					[$\dlvalTwo$,grow=east,edge label={node[midway,above,font=\scriptsize]{1}} [$c_2:\bitt$,edge label={node[midway,below,font=\scriptsize]{0}}] [$c_3:\bitt$,edge label={node[midway,above,font=\scriptsize]{1}}]]]
				\end{forest}
				\subcaption{After flattening}
				\label{fig: flattening (after)}
			\end{subfigure}
	\end{minipage}}
	\caption{An example of flattening}
	\label{fig: flattening}
\end{figure}

In conjunction, these two operations greatly simplify our treatment of dynamic lifting, as they allow us to describe in detail the desired shape of a lifted object. For example, in later sections we often write $\MonlcOne=\Monflatten{\Moncomp{\MonlcOne'}{\MonlcOne'',\MonlcTwo}{}{\{\assOne\}}}$ to say that the lifted label context $\MonlcOne$ is such that if we start from its root and follow the path described by $\assOne$, we find a sub-tree $\MonlcOne'',\MonlcTwo$ (the disjoint union is lifted in the natural way, as the branch-wise disjoint union of the leaves of $\MonlcOne''$ and $\MonlcTwo$) and that we are interested in only \textit{some} elements of the corresponding lifted label context, for instance those that occur in $\MonlcTwo$ and not in $\MonlcOne''$.

\paragraph*{A Formal System for Circuit Signatures}

Now that we have introduced lifting trees and lifted object as a means to reason about dynamic lifting, we are ready to introduce the notion of \emph{signature} of a circuit.
\begin{definition}[Circuit Signature]
	Given a circuit $\circuitOne$, a lifting tree $\dlvaltreeOne$, a label context $\lcOne$ and a lifted label context $\MonlcOne$, a \emph{circuit signature} ia a judgment of the form
	$
	\circjudgment{\circuitOne}{\dlvaltreeOne}{\lcOne}{\MonlcOne}.
	$
\end{definition}

A circuit signature is valid when it can be derived by the rules in Figure \ref{table: circ rules}. Note that $\MonlcOne\Monmerge{\assOne}\lcOne'$ is shorthand for $\Moncomp{\MonlcOne'}{\lcOne',\lcOne_{\assTwo}}{\assTwo}{\cassset_\dlvaltreeOne^\assOne}$ if $\MonlcOne=\Moncomp{\MonlcOne'}{\lcOne_\assTwo}{\assTwo}{\cassset_\dlvaltreeOne^\assOne}$, where $\cassset_\dlvaltreeOne^\assOne$ contains the paths in $\cassset_\dlvaltreeOne$ which extend $\assOne$.\longversion{ In other words, $\cassset_\dlvaltreeOne^\assOne$ contains all those paths in $\cassset_\dlvaltreeOne$ which are defined and equal to $\assOne$ on $\dom(\assOne)$.} On the other hand, for any $\MongenericObjectOne\in\condmonad_{\dlvaltreeOne}(\genericSetOne)$ and $\dlvaltreeTwo\in\dltreeset$ such that $\namesof{\assOne}(\dlvaltreeOne)\cap\namesof{}(\dlvaltreeTwo)=\emptyset$, we write $\MongenericObjectOne\Monsplit{\assOne}\dlvaltreeTwo$ as shorthand for $\Monflatten{\Moncomp{\MongenericObjectOne'}{\Moncomp{\dlvaltreeTwo}{\genericObjectOne_\assTwo}{\assThree}{}}{\assTwo}{\cassset_\dlvaltreeOne^\assOne}}$, if  $\MongenericObjectOne=\Moncomp{\MongenericObjectOne'}{\genericObjectOne_\assTwo}{\assTwo}{\cassset_\dlvaltreeOne^\assOne}$.

\begin{figure}[tb]
	\centering
	\fbox{\begin{minipage}{.98\textwidth}
		$$
		\inference[id]{}
		{\circjudgment{\cinput{\lcOne}}{\emptytree}{\lcOne}{\Monunit{\lcOne}}}
		\qquad
		\inference[lift]{
			\circjudgment{\circuitOne}{\dlvaltreeOne}{\lcOne}{\MonlcOne\Monmerge{\assOne}\labOne:\bitt}
			&
			\assOne\in\assset_{\dlvaltreeOne}
			&
			\dlvalOne\notin\namesof{\assOne}(\dlvaltreeOne)
		}
		{\circjudgment{\circuitOne;\assOne \cond \clift(\labOne)\Rightarrow\dlvalOne}{\dlvaltreeOne\Monsplit{\assOne}\treeNode{\dlvalOne}{\emptytree}{\emptytree}}{\lcOne}{\MonlcOne\Monsplit{\assOne}\treeNode{\dlvalOne}{\emptytree}{\emptytree}}}
		$$
		\rulespace
		$$
		\inference[gate]
		{\circjudgment{\circuitOne}{\dlvaltreeOne}{\lcOne}{\MonlcOne\Monmerge{\assOne}\lcOne'}
			&
			\assOne\in\assset_{\dlvaltreeOne}
			&
			\gateOne\in\gateset(\mtypeOne,\mtypeTwo)
			&
			\mjudgment{\lcOne'}{\vec\labOne}{\mtypeOne}
			&
			\mjudgment{\lcTwo}{\vec\labTwo}{\mtypeTwo}
			&
			\labelfresh(\vec\labTwo,\circuitOne)}
		{\circjudgment{\circuitOne;\assOne \cond \gateOne(\vec\labOne) \to \vec\labTwo}{\dlvaltreeOne}{\lcOne}{\MonlcOne\Monmerge{\assOne}\lcTwo}}
		$$
	\end{minipage}}
	\caption{The rules for CRL circuit signatures}
	\label{table: circ rules}
\end{figure}
	
\section{\PQK}

We are finally ready to introduce \PQK, a programming language meant exactly to manipulate the kind of circuits that we presented in the previous section.
		
\subsection{Types and Terms}
The types and syntax of \PQK\ are given in Figure \ref{table: types and syntax}, where $\varOne$ and $\varTwo$ range over the set of regular variable names, $\dlvalOne_1,\dots,\dlvalOne_n$ over the set $\dlvalset$ of lifted variable names, $\dlvaltreeOne$ over the set $\dltreeset$ of lifting trees and Greek letters generally indicate lifted objects. More precisely, $\MontypeOne,\MontypeTwo\in\condmonad_{\dlvaltreeOne}(\typeset),\MonmtypeTwo\in\condmonad_{\dlvaltreeOne}(\mtypeset),\MontermOne\in\condmonad_\dlvaltreeOne(\termset)$ and $\MonmvalOne\in\condmonad_{\dlvaltreeOne}(\mvalset)$, each for  some $\dlvaltreeOne$.
Note that the M-values and M-types that we introduced in Section \ref{sec: generalized circuits} are now a proper subset of \PQK's values and types, respectively.
		
A value of the form $\boxedcirc{\vec\labOne}{\circuitOne}{\MonmvalOne}{\dlvaltreeOne}$ is what we called a \textit{boxed circuit} in Section \ref{sec: circuit-building languages}, that is, a datum representation of a circuit $\circuitOne$ that takes as input the labels in the label tuple $\vec\labOne$ and outputs one of the possible label tuples in $\MonmvalOne$ depending on the lifted variables in $\dlvaltreeOne$. Correspondingly, types of the form $\circt{\dlvaltreeOne}(\mtypeOne,\MonmtypeTwo)$ are called \emph{circuit types} and represent boxed circuits. Both boxed circuits and circuit types abstract over the lifted variables in $\namesof{}(\dlvaltreeOne)$ and thus enjoy a notion of $\alpha$-equivalence which we consider them to be identical under. The $\justboxt$ and $\apply{}$ constructs are those described in Section \ref{sec: circuit-building languages} and they respectively introduce and consume boxed circuits. Specifically, the programmer is never expected to write values of the form $\boxedcirc{\vec\labOne}{\circuitOne}{\MonmvalOne}{\dlvaltreeOne}$ by hand, but is rather  expected to introduce the desired circuit by supplying an appropriate circuit-building term $\termOne$ to the $\justboxt$ operator, obtaining a term of the form $\boxt{\mtypeOne}{}(\lift\termOne)$. The use of $\lift$ guarantees that $\termOne$ does not make use of any linear resources from the current environment, i.e. that it can be safely evaluated in a sandboxed environment to produce $\circuitOne$.
After a boxed circuit $\boxedcirc{\vec\labOne}{\circuitOne}{\MonmvalOne}{\dlvaltreeOne}$ is introduced, the programmer can potentially copy it and apply it to the underlying circuit $\circuitTwo$ via a term of the form $\apply{\dlvalOne_1,\dots,\dlvalOne_n}(\boxedcirc{\vec\labOne}{\circuitOne}{\MonmvalOne}{\dlvaltreeOne},\vec\labTwo)$. Such a term ``unboxes'' $\circuitOne$, finds the wires identified by $\vec\labTwo$ among the outputs of $\circuitTwo$ and appends $\circuitOne$ to them. In this process, any lifted variables in $\namesof{}(\dlvaltreeOne)$, which were abstracted in $\boxedcirc{\vec\labOne}{\circuitOne}{\MonmvalOne}{\dlvaltreeOne}$, have to be instantiated with concrete names. To this end, the programmer supplies the $n=|\namesof{}(\dlvaltreeOne)|$ lifted variables $\dlvalOne_1,\dots,\dlvalOne_n$, which are expected to be fresh.
		
\begin{figure}[tb]
	\centering
	\fbox{\begin{minipage}{.98\textwidth}\centering
		\begin{tabular}{l l l}
			Types & $\typeOne,\typeTwo$ & $::=  \unitt \mid \wtypeOne \mid \typeOne \multimap_\dlvaltreeOne \MontypeTwo \mid \bang \MontypeOne \mid \circt{\dlvaltreeOne}(\mtypeOne, \MonmtypeTwo) \mid \typeOne \otimes \typeTwo.$\\
			
			Parameter Types & $\ptypeOne,\ptypeTwo$ & $::= \unitt \mid \bang \MontypeOne \mid \circt{\dlvaltreeOne}(\mtypeOne, \MonmtypeTwo) \mid \ptypeOne \otimes \ptypeTwo.$\\
			
			M-types\quad & $\mtypeOne,\mtypeTwo$ & $::= \unitt \mid \wtypeOne \mid \mtypeOne \otimes \mtypeTwo.$\\
			
			&&\\
			
			Terms & $\termOne,\termTwo$ & $::= \valOne\valTwo \mid \letin{\varOne}{\termOne}{\MontermOne} \mid \letin{\tuple{\varOne,\varTwo}}{\valOne}{\termOne}$\\
			&& $\mid \force \valOne \mid \boxt{\mtypeOne}{\dlvalsubOne} \valOne \mid \apply{\dlvalOne_1,\dots,\dlvalOne_n}(\valOne, \valTwo) \mid \return \valOne.$\\
			
			Values & $\valOne,\valTwo$ & $::= \unitv \mid \varOne \mid \labOne \mid \lambda \varOne_\typeOne.\termOne \mid \lift \termOne \mid \boxedcirc{\vec{\labOne}}{\circuitOne}{\MonmvalOne}{\dlvaltreeOne} \mid \tuple{\valOne,\valTwo}.$\\
			
			M-values & $\vec{\labOne},\vec{\labTwo}$ & $::= \unitv \mid \labOne \mid \tuple{\vec{\labOne},\vec{\labTwo}}$.
		\end{tabular}					
	\end{minipage}}
	\caption{Types and terms of \PQK}
	\label{table: types and syntax}
\end{figure}

Notice that there exists a strong distinction between values and terms, the latter representing effectful computations that can introduce new lifted variables as a consequence of dynamic lifting. This choice does not detract from the expressiveness of the language, since first and foremost a value $\valOne$ can always be turned into an effectless computation $\return \valOne$. In the other direction, we have that terms such as $\termOne\termTwo$ can still be written in \PQK\ as $\letin{\varOne}{\termOne}{\{\letin{\varTwo}{\termTwo}{\{\varOne\varTwo\}}\}}$. The $\mathsf{let}$ construct is in fact a central construct in \PQK, as on top of serving as a sequencing operator it also doubles as a conditional statement. When we evaluate the term $\letin{\varOne}{\termOne}{\MontermOne}$, we first carry out the computation described by $\termOne$, which performs dynamic lifting according to some lifting tree $\dlvaltreeOne$ and consequently results in a lifted value $\MonvalOne\in\condmonad_{\dlvaltreeOne}(\valset)$. At this point, we are not limited to passing each and every possible value of $\MonvalOne$ to the \textit{same} continuation. Rather, we can define a different continuation for every possible outcome of the liftings in $\dlvaltreeOne$. To this effect, the programmer supplies a lifted term $\MontermOne\in\condmonad_{\dlvaltreeOne}(\termset)$, which matches $\MonvalOne$'s lifting tree and thus effectively provides such a roster of continuations. The following example is meant to help convey the role of $\mathsf{let}$ as a control flow operator.
		
\begin{example}
\label{ex: let}
	Imagine we wanted to measure qubit $\labOne$, dynamically lift its value into a variable $\dlvalOne$ and then apply $\gateid{H}$ to a second qubit $\labTwo$ \textit{only if} $u=1$. Suppose we had CRL definitions $\measlift=\cinput{\labOne:\qubitt};\gateid{Meas}(\labOne)\to\labOne';\clift(\labOne')\lto\dlvalOne$ and $H=\cinput{\labOne:\qubitt};\gateid{H}(\labOne)\to\labOne'$, corresponding to the circuit that measures and then lifts a qubit and the circuit that just applies the Hadamard gate to its input, respectively. We would write the following \PQK\ program:
	\begin{equation}
		\begin{aligned}
			&\letin{\_}{\apply{\dlvalOne}(\boxedcirc{\labOne}{\measlift}{\treeNode{\dlvalOne}{\unitv}{\unitv}}{\dlvalOne},\labOne)}{}\\
			&\quad\treeNode{\dlvalOne}{\return\labTwo}{\apply{}(\boxedcirc{\labOne}{H}{\Monunit{\labOne'}}{},\labTwo)},
		\end{aligned}
	\end{equation}
	which appears much more familiar under the following syntactic sugar:
	\begin{equation}
		\begin{aligned}
			&\letin{\_}{\apply{\dlvalOne}(\boxedcirc{\labOne}{\measlift}{\treeNode{\dlvalOne}{\unitv}{\unitv}}{\dlvalOne},\labOne)}{}\\
			&\quad\when{\dlvalOne=1}{\apply{}(\boxedcirc{\labOne}{H}{\Monunit{\labOne'}}{},\labTwo)}.
		\end{aligned}
	\end{equation}
\end{example}

In light of this example, we can see how the circuit shown in Figure \ref{fig: one-way} can be described in \PQK\ through a program such as the one shown in Figure \ref{fig: pqk one-way}. To conclude this section, recall that we mentioned earlier that whenever we apply a boxed circuit we need to instantiate its lifted variables with concrete names. This process is formalized through the lifted variable analog of substitution, which we call \emph{renaming}.

\begin{figure}[tb]
	\fbox{\begin{minipage}{.98\textwidth}
			\vspace{-1em}
			\begin{align*}
				\lambda q_{\qubitt}.\return\lambda a_{\qubitt}.
				&\letin{q}{\apply{}(\boxedcirc{\labOne}{H}{\Monunit{\labOne'}}{},q)}{\Monunit{\\
					&\letin{\_}{\apply{u}(\boxedcirc{\labOne}{\measlift}{\treeNode{\dlvalOne}{\unitv}{\unitv}}{\dlvalOne},q)}{\\
						&\treeNode{u}{\return a}{\apply{}(\boxedcirc{\labOne}{\mathit{Meas}}{\Monunit{\labOne'}}{},a)}}}}
			\end{align*}
	\end{minipage}}
	\caption{A \PQK\ program describing the circuit shown in Figure \protect\ref{fig: one-way}}
	\label{fig: pqk one-way}
\end{figure}

\begin{definition}[Renaming of Lifted Variables]
	Given a lifted variable-bearing object $\genericObjectOne$ and a permutation $\alpharenamingOne$ of $\dlvalset$, we call $\alpharenamingOne$ a \emph{renaming of lifted variables} and we define $\alpharenamed{\genericObjectOne}{\alpharenamingOne}$ as $\genericObjectOne$ in which every occurrence of a lifted variable $\dlvalOne$ is replaced by $\alpharenamingOne(\dlvalOne)$.\longversion{
		In our case:
		
		\vspace{1em}
		\begin{tabular}{l l l}
			Assignments & $\assOne$ & $\alpharenamed{\assOne}{\alpharenamingOne} = \assOne \circ \invert{\alpharenamingOne}.$\\
			&&\\
			Circuits & $\circuitOne$ & $\alpharenamed{\cinput{\lcOne}}{\alpharenamingOne} = \cinput{\lcOne},$\\
			&& $\alpharenamed{(\circuitOne';\assOne\cond\gateOne(\vec\labOne)\to\vec\labTwo)}{\alpharenamingOne} = \alpharenamed{\circuitOne'}{\alpharenamingOne};\alpharenamed{\assOne}{\alpharenamingOne}\cond\gateOne(\vec\labOne)\to\vec\labTwo,
			$\\
			&& $\alpharenamed{(\circuitOne';\assOne\cond\clift(\labOne)\lto\dlvalOne)}{\alpharenamingOne} = \alpharenamed{\circuitOne'}{\alpharenamingOne};\alpharenamed{\assOne}{\alpharenamingOne}\cond\clift(\labOne)\to\alpharenamingOne(\dlvalOne).$\\
			&&\\
			Lifted objects & $\MongenericObjectOne$ & $\alpharenamed{\treeLeaf{\genericObjectOne}}{\alpharenamingOne}=\treeLeaf{\genericObjectOne},$\\
			&& $\alpharenamed{(\treeNode{\dlvalOne}{\MongenericObjectOne_1}{\MongenericObjectOne_2})}{\alpharenamingOne} = \treeNode{\alpharenamingOne(\dlvalOne)}{\alpharenamed{\MongenericObjectOne_1}{\alpharenamingOne}}{\alpharenamed{\MongenericObjectOne_2}{\alpharenamingOne}}.$\\
			&&\\
		\end{tabular}}

		We denote by $\dlvalTwo_1/\dlvalOne_1,\dots,\dlvalTwo_n/\dlvalOne_n$ a permutation that exchanges $\dlvalTwo_1$ for $\dlvalOne_1,\dots\dlvalTwo_n$ for $\dlvalOne_n$.
\end{definition}

\subsection{\PQK's Typing Rules}
\label{sec: type system}

At its core, \PQK's type system is a linear type system which distinguishes between \emph{computations} (i.e. terms), which can be effectful, and \emph{values}. We therefore introduce two kinds of typing judgments. One for terms, which are given lifted types, and one for values, which have regular types.

\longshortversion
{
	\begin{definition}[Computational Typing Judgment]
		When a typing context $\contextOne$ and a label context $\lcOne$ turn $\termOne$ into a term of lifted type $\MontypeOne\in\condmonad_{\dlvaltreeOne}(\typeset)$ depending on the lifted variables in $\dlvaltreeOne$, we write
		\begin{equation}
		\compjudgment{\contextOne}{\lcOne}{\dlvaltreeOne}{\termOne}{\MontypeOne}.
	\end{equation}
	\end{definition}
	\begin{definition}[Value Typing Judgments]
		When a typing context $\contextOne$ and a label context $\lcOne$ turn $\valOne$ into a value of type $\typeOne$, we write
		\begin{equation}
		\valjudgment{\contextOne}{\lcOne}{\valOne}{\typeOne}.
		\end{equation}
	\end{definition}
}
{
	\begin{definition}[Typing Judgments]
		When a typing context $\contextOne$ and a label context $\lcOne$ turn $\termOne$ into a term of lifted type $\MontypeOne$ depending on the lifted variables in $\dlvaltreeOne$, we write
		$
		\compjudgment{\contextOne}{\lcOne}{\dlvaltreeOne}{\termOne}{\MontypeOne}
		$.
		When $\contextOne$ and $\lcOne$ turn $\valOne$ into a value of type $\typeOne$, we write
		$
		\valjudgment{\contextOne}{\lcOne}{\valOne}{\typeOne}.
		$
	\end{definition}
}

When a typing context contains \textit{exclusively} variables with parameter types, we write it as $\pcontextOne$, but in general a typing context $\contextOne$ can contain both linear and parameter variables. Typing judgments are derived via the rules in Figure \ref{fig: type rules}, where $\dlvalTwo_1,\dots,\dlvalTwo_n$ range over $\dlvalset$ and the relational symbols $\Vdash_c$ and $\Vdash_v$ denote the intuitive lifting of the computation and value typing judgments to some tree $\dlvaltreeOne$\longshortversion
{.
	\begin{definition}[Lifting of Typing Judgments]
		Given a lifting tree $\dlvaltreeOne$, if for all $\assOne\in\cassset_{\dlvaltreeOne}$ we have $\compjudgment{\contextOne_\assOne}{\lcOne_\assOne}{\dlvaltreeTwo_\assOne}{\termOne_\assOne}{\MontypeOne_\assOne}$, then we write \begin{equation}
			\Moncompjudgment{\Moncomp{\dlvaltreeOne}{\contextOne_\assOne}{\assOne}{}}{\Moncomp{\dlvaltreeOne}{\lcOne_\assOne}{\assOne}{}}{\Moncomp{\dlvaltreeOne}{\dlvaltreeTwo_\assOne}{\assOne}{}}{\Moncomp{\dlvaltreeOne}{\termOne_\assOne}{\assOne}{}}{\Moncomp{\dlvaltreeOne}{\MontypeOne_\assOne}{\assOne}{}}.
		\end{equation}
		If for all $\assOne\in\cassset_{\dlvaltreeOne}$ we have $\valjudgment{\contextOne_\assOne}{\lcOne_\assOne}{\valOne_\assOne}{\typeOne_\assOne}$, then we write
		\begin{equation}
			\Monvaljudgment{\Moncomp{\dlvaltreeOne}{\contextOne_\assOne}{\assOne}{}}{\Moncomp{\dlvaltreeOne}{\lcOne_\assOne}{\assOne}{}}{\dlvaltreeOne}{\Moncomp{\dlvaltreeOne}{\valOne_\assOne}{\assOne}{}}{\Moncomp{\dlvaltreeOne}{\typeOne_\assOne}{\assOne}{}}.
		\end{equation}
	\end{definition}
}
{ \cite{extended-ver}.}
\longshortversion{For convenience, within}{Within} such judgments, every lifted object (e.g. $\MonlcOne$ in $\Monvaljudgment{\emptycontext}{\MonlcOne}{\dlvaltreeOne}{\MonmvalOne}{\MonmtypeTwo}$ in the \textit{circ} rule) is assumed to be backed by $\dlvaltreeOne$, whereas every component which is \textit{not} a lifted object (e.g. $\pcontextOne$ or $\varOne$ in $\Moncompjudgment{\pcontextOne,\contextOne_2,\varOne:\MontypeOne}{\lcOne_2}{\Moncomp{\dlvaltreeOne}{\dlvaltreeTwo_\assOne}{\assOne}{}}{\MontermOne}{\MongenericObjectTwo}$ in the \textit{let} rule) is assumed to be constant across the branches of $\dlvaltreeOne$.
Note that we assume that $\dlvalset$ is totally ordered, so that whenever we write $\namesof{}(\dlvaltreeOne)=\{\dlvalOne_1,\dots,\dlvalOne_n\}$ (e.g. in the \emph{apply} rule) we have $\dlvalOne_1\leq_\dlvalset\dlvalOne_2\leq_\dlvalset\dots\leq_\dlvalset\dlvalOne_n$. An example of a type derivation that leverages the full expressiveness of \PQK's type system is the one for the program shown in Figure \ref{fig: pqk one-way}, which can be found in Appendix \ref{app: type derivations}. \longversion{
	
}To conclude this section, notice how just like M-values and M-types are a subset of the values and types of \PQK, the type system for M-values is in a one-to-one correspondence with a subset of \PQK's type system. \longversion{This correspondence is stated in Proposition \ref{prop: m-judgment and val-judgment isomorphism}.}

\longversion{
	
	\paragraph*{}\begin{proposition}\label{prop: m-judgment and val-judgment}
		\label{prop: m-judgment and val-judgment isomorphism}
		The following hold:\begin{enumerate}
			\item If $\mjudgment{\lcOne}{\vec\labOne}{\mtypeOne}$ then for every $\pcontextOne$ we have $\valjudgment{\pcontextOne}{\lcOne}{\vec\labOne}{\mtypeOne}.$
			\item If there exists $\pcontextOne$ such that $\valjudgment{\pcontextOne}{\lcOne}{\vec\labOne}{\mtypeOne}$, then $\mjudgment{\lcOne}{\vec\labOne}{\mtypeOne}$.
		\end{enumerate}
	\end{proposition}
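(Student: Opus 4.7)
The plan is to prove both directions separately by straightforward induction, using the structure of M-values and the fact that the only value typing rules in \PQK{} applicable to an M-value (i.e.\ a term built solely from $\unitv$, labels, and tuples) are the unit, label, and pair rules.

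For direction~(1), I would proceed by induction on the derivation of $\mjudgment{\lcOne}{\vec\labOne}{\mtypeOne}$. In the \emph{unit} base case, the corresponding \PQK{} value typing rule for $\unitv$ yields $\valjudgment{\pcontextOne}{\emptycontext}{\unitv}{\unitt}$ for any parameter context $\pcontextOne$, since parameter variables can be introduced without constraint. The \emph{label} base case is analogous: the label rule of \PQK{} gives $\valjudgment{\pcontextOne}{\labOne:\wtypeOne}{\labOne}{\wtypeOne}$. In the \emph{tuple} inductive case, the induction hypotheses give $\valjudgment{\pcontextOne}{\lcOne}{\vec\labOne}{\mtypeOne}$ and $\valjudgment{\pcontextOne}{\lcTwo}{\vec\labTwo}{\mtypeTwo}$, and applying the pair typing rule of \PQK{} (which splits the linear part $\lcOne,\lcTwo$ while sharing the parameter context $\pcontextOne$) produces $\valjudgment{\pcontextOne}{\lcOne,\lcTwo}{\tuple{\vec\labOne,\vec\labTwo}}{\mtypeOne\otimes\mtypeTwo}$.

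For direction~(2), I would induct on the structure of $\vec\labOne$ and invert the derivation of $\valjudgment{\pcontextOne}{\lcOne}{\vec\labOne}{\mtypeOne}$ at each step. If $\vec\labOne=\unitv$, inversion forces $\mtypeOne=\unitt$ and $\lcOne=\emptycontext$, and the \emph{unit} rule of Figure~\ref{table: mvalues and mtypes} gives $\mjudgment{\emptycontext}{\unitv}{\unitt}$. If $\vec\labOne=\labOne$, inversion forces $\mtypeOne=\wtypeOne$ and $\lcOne=\labOne:\wtypeOne$, yielding the required M-judgment by the \emph{label} rule. If $\vec\labOne=\tuple{\vec\labOne',\vec\labTwo'}$, inversion gives $\mtypeOne=\typeTwo\otimes\typeThree$ together with derivations $\valjudgment{\pcontextOne}{\lcOne'}{\vec\labOne'}{\typeTwo}$ and $\valjudgment{\pcontextOne}{\lcTwo'}{\vec\labTwo'}{\typeThree}$ for some splitting $\lcOne=\lcOne',\lcTwo'$. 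By the induction hypothesis, $\mjudgment{\lcOne'}{\vec\labOne'}{\typeTwo}$ and $\mjudgment{\lcTwo'}{\vec\labTwo'}{\typeThree}$; in particular, $\typeTwo$ and $\typeThree$ are M-types, so $\mtypeOne=\typeTwo\otimes\typeThree$ is too, and the \emph{tuple} rule concludes.

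The main delicate point I anticipate is the inversion step in direction~(2): one has to confirm that no other value typing rule of \PQK{} (for instance a rule targeting a non-M-value constructor like $\lambda\varOne_\typeOne.\termOne$, $\lift\termOne$, or $\boxedcirc{\vec\labOne}{\circuitOne}{\MonmvalOne}{\dlvaltreeOne}$) could match an M-value, which is immediate from the disjointness of the syntactic forms but must be verified against the actual rule system of Figure~\ref{fig: type rules}. The rest is a straightforward syntax-directed induction, and the handling of $\pcontextOne$ is harmless because parameter contexts are shared across sub-derivations and can be freely weakened.
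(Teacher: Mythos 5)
Your proof is correct. Note that the paper states this proposition without giving any proof at all, so there is no argument of the authors' to compare against; the rule induction for direction~(1) and the structural induction plus inversion for direction~(2) are exactly the routine argument one would expect to fill that gap, and your key observation --- that the only rules of Figure~\ref{fig: type rules} whose conclusion can match an M-value are \emph{unit}, \emph{label} and \emph{tuple} (in particular \emph{var} cannot, since labels and variables are drawn from disjoint sets) --- is the only point that genuinely needs checking. Two small remarks on direction~(2). First, the order of reasoning in the tuple case is slightly backwards: you apply the induction hypothesis to the premises $\valjudgment{\pcontextOne}{\lcOne'}{\vec\labOne'}{\typeTwo}$ and then conclude that $\typeTwo$ is an M-type, but the statement being proved only speaks of judgments whose type is already an M-type, so the induction hypothesis is not directly applicable until you know this. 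The fix is immediate: since $\mtypeOne=\typeTwo\otimes\typeThree$ is an M-type by hypothesis, the M-type grammar forces $\typeTwo$ and $\typeThree$ to be M-types \emph{before} the IH is invoked (alternatively, strengthen the induction to conclude that the assigned type is an M-type). Second, inversion of the \emph{tuple} rule splits the typing context as $\pcontextOne,\contextOne_1,\contextOne_2$; one should observe that because the whole context is a parameter context, so are the two premise contexts, which is what licenses the inductive calls. Neither point is a real gap.
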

	
	In addition to that, we also have that M-values are the \textit{only} closed values that can be given an M-type.
	
	\begin{proposition}\label{lem: m-type implies m-val}
		If $\valjudgment{\emptycontext}{\lcOne}{\valOne}{\mtypeOne}$ for some M-type $\mtypeOne$, then $\valOne$ is an M-value.
	\end{proposition}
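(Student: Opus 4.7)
The plan is to proceed by induction on the structure of $\valOne$, performing case analysis on the possible forms of closed values. In each non-tuple case, I expect to use inversion on the typing derivation: the typing rule for that syntactic form determines the shape of the resulting type, and I can then check whether this shape is compatible with being an M-type (i.e. with the grammar $\mtypeOne,\mtypeTwo ::= \unitt \mid \wtypeOne \mid \mtypeOne \otimes \mtypeTwo$).

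The base cases $\unitv$ and $\labOne$ are immediate, since these are themselves M-values of M-types $\unitt$ and $\wtypeOne$. The variable case $\varOne$ is ruled out by the emptiness of the typing context. The cases $\lambda\varOne_\typeOne.\termOne$, $\lift\termOne$, and $\boxedcirc{\vec\labOne}{\circuitOne}{\MonmvalOne}{\dlvaltreeOne}$ should all be excluded on typing grounds: their typing rules respectively produce a linear function type $\typeOne\multimap_\dlvaltreeOne\MontypeTwo$, an exponential $\bang\MontypeOne$, and a circuit type $\circt{\dlvaltreeOne}(\mtypeOne,\MonmtypeTwo)$, none of which appear in the M-type grammar.

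The only inductive case is $\valOne=\tuple{\valTwo_1,\valTwo_2}$. The tuple rule derives a type of the form $\typeOne_1\otimes\typeOne_2$, and for this to coincide with the M-type $\mtypeOne$, inspection of the M-type grammar forces $\mtypeOne=\mtypeOne_1\otimes\mtypeOne_2$ with $\typeOne_i=\mtypeOne_i$ an M-type. The subderivations for $\valTwo_1$ and $\valTwo_2$ split the empty context and label context between them, hence both sub-contexts are empty and the induction hypothesis applies, giving that $\valTwo_1$ and $\valTwo_2$ are M-values; by the grammar of M-values, so is $\tuple{\valTwo_1,\valTwo_2}$.

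No step is really an obstacle here: the argument is pure syntactic inversion on the typing rules, and the only subtle point is making sure that in the tuple case we genuinely know the subcomponents are typed with M-types and with empty ordinary contexts, which follows from the shape of the tensor rule together with the emptiness of the ambient context. (This proposition is essentially the converse of part (1) of the preceding Proposition~\ref{prop: m-judgment and val-judgment isomorphism}, restricted to closed values.)
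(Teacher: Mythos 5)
Your proposal is correct: the argument is the routine structural induction with inversion on the (syntax-directed) value typing rules, ruling out abstractions, lifts, boxed circuits and variables by the shape of their types or by the emptiness of the context, and recursing through the tensor case. The paper states this proposition without an explicit proof, precisely because the intended argument is the one you give; the only point worth polishing in your write-up is that in the tuple case it is the \emph{variable} contexts that must be empty, while the label contexts $\lcOne_1,\lcOne_2$ may be nonempty and are simply passed to the induction hypothesis.
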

	
}

\begin{figure}[tb]
	\centering
	\fbox{\begin{minipage}{.98\textwidth}
			
			$$
			\inference[unit]
			{}
			{\valjudgment{\pcontextOne}{\emptylc}{\unitv}{\unitt}}
			\qquad
			\inference[var]
			{}
			{\valjudgment{\pcontextOne,\varOne:\typeOne}{\emptylc}{\varOne}{\typeOne}}
			\qquad
			\inference[label]
			{}
			{\valjudgment{\pcontextOne}{\labOne:\wtypeOne}{\labOne}{\wtypeOne}}
			$$
			\rulespace
			$$
			\inference[abs]
			{\compjudgment{\contextOne,\varOne:\typeOne}{\lcOne}{\dlvaltreeOne}{\termOne}{\MontypeTwo}}
			{\valjudgment{\contextOne}{\lcOne}{\lambda \varOne_\typeOne.\termOne}{\typeOne \multimap_\dlvaltreeOne \MontypeTwo}}
			\qquad
			\inference[app]
			{\valjudgment{\pcontextOne,\contextOne_1}{\lcOne_1}{\valOne}{\typeOne \multimap_\dlvaltreeOne \MontypeTwo}
				&
				\valjudgment{\pcontextOne,\contextOne_2}{\lcOne_2}{\valTwo}{\typeOne}}
			{\compjudgment{\pcontextOne,\contextOne_1,\contextOne_2}{\lcOne_1,\lcOne_2}{\dlvaltreeOne}{\valOne\valTwo}{\MontypeTwo}}
			$$
			\rulespace
			$$
			\inference[let]
			{\compjudgment{\pcontextOne,\contextOne_1}{\lcOne_1}{\dlvaltreeOne}{\termOne}{\MontypeOne}
				&
				\MontermOne\in\condmonad_{\dlvaltreeOne}(\termset)
				&
				\Moncompjudgment{\pcontextOne,\contextOne_2,\varOne:\MontypeOne}{\lcOne_2}{\Moncomp{\dlvaltreeOne}{\dlvaltreeTwo_\assOne}{\assOne}{}}{\MontermOne}{\MongenericObjectTwo}
			}
			{\compjudgment{\pcontextOne,\contextOne_1,\contextOne_2}{\lcOne_1,\lcOne_2}{\Monflatten{\Moncomp{\dlvaltreeOne}{\dlvaltreeTwo_\assOne}{\assOne}{}}}{\letin{\varOne}{\termOne}{\MontermOne}}{\Monflatten{\MongenericObjectTwo}}}
			$$
			\rulespace
			$$
			\inference[tuple]
			{\valjudgment{\pcontextOne,\contextOne_1}{\lcOne_1}{\valOne}{\typeOne}
				&
				\valjudgment{\pcontextOne,\contextOne_2}{\lcOne_2}{\valTwo}{\typeTwo}}
			{\valjudgment{\pcontextOne,\contextOne_1,\contextOne_2}{\lcOne_1,\lcOne_2}{\tuple{\valOne,\valTwo}}{\typeOne \otimes \typeTwo}}
			$$
			\rulespace
			$$
			\inference[dest]
			{\valjudgment{\pcontextOne,\contextOne_1}{\lcOne_1}{\valOne}{\typeOne \otimes \typeTwo}
				&
				\compjudgment{\pcontextOne,\contextOne_2,\varOne:\typeOne,\varTwo:\typeTwo}{\lcOne_2}{\dlvaltreeOne}{\termOne}{\MontypeOne}}
			{\compjudgment{\pcontextOne,\contextOne_1,\contextOne_2}{\lcOne_1,\lcOne_2}{\dlvaltreeOne}{\letin{\tuple{\varOne,\varTwo}}{\valOne}{\termOne}}{\MontypeOne}}
			$$
			\rulespace
			$$
			\inference[lift]
			{\compjudgment{\pcontextOne}{\emptylc}{\emptytree}{\termOne}{\MontypeOne}}
			{\valjudgment{\pcontextOne}{\emptylc}{\lift \termOne}{\bang \MontypeOne}}
			\qquad
			\inference[force]
			{\valjudgment{\contextOne}{\lcOne}{\valOne}{\bang \MontypeOne}}
			{\compjudgment{\contextOne}{\lcOne}{\emptytree}{\force \valOne}{\MontypeOne}}
			\qquad
			\inference[box]
			{\valjudgment{\contextOne}{\lcOne}{\valOne}{\bang\Monunit{\mtypeOne \multimap_\dlvaltreeOne \MonmtypeTwo}}}
			{\compjudgment{\contextOne}{\lcOne}{\emptytree}{\boxt{\mtypeOne}{\dlvalsubOne} \valOne}{\Monunit{\circt{\dlvaltreeOne}(\mtypeOne,\MonmtypeTwo)}}}
			$$
			\rulespace
			$$
			\inference[apply]
			{\valjudgment{\pcontextOne,\contextOne_1}{\lcOne_1}{\valOne}{\circt{\dlvaltreeOne}(\mtypeOne,\MonmtypeTwo)}
				&
				\valjudgment{\pcontextOne,\contextOne_2}{\lcOne_2}{\valTwo}{\mtypeOne}
				\\
				\namesof{}(\dlvaltreeOne)=\{\dlvalOne_1,\dots,\dlvalOne_n\}
				&
				\alpharenamingOne=\dlvalTwo_1/\dlvalOne_1,\dots,\dlvalTwo_n/\dlvalOne_n}
			{\compjudgment{\pcontextOne,\contextOne_1,\contextOne_2}{\lcOne_1,\lcOne_2}{\alpharenamed{\dlvaltreeOne}{\alpharenamingOne}}{\apply{\dlvalTwo_1,\dots,\dlvalTwo_n}(\valOne,\valTwo)}{\alpharenamed{\MonmtypeTwo}{\alpharenamingOne}}}
			$$
			\rulespace
			$$
			\inference[circ]
			{\circjudgment{\circuitOne}{\dlvaltreeOne}{\lcOne}{\MonlcOne}
				&
				\valjudgment{\emptyset}{\lcOne}{\vec{\labOne}}{\mtypeOne}
				&
				\Monvaljudgment{\emptycontext}{\MonlcOne}{\dlvaltreeOne}{\MonmvalOne}{\MonmtypeTwo}
			}
			{\valjudgment{\pcontextOne}{\emptylc}{\boxedcirc{\vec{\labOne}}{\circuitOne}{\MonmvalOne}{\dlvaltreeOne}}{\circt{\dlvaltreeOne}(\mtypeOne,\MonmtypeTwo)}}
			\qquad
			\inference[return]
			{\valjudgment{\contextOne}{\lcOne}{\valOne}{\typeOne}}
			{\compjudgment{\contextOne}{\lcOne}{\emptytree}{\return \valOne}{\Monunit{\typeOne}}}
			$$
	\end{minipage}}
	\caption{The typing rules of \PQK}
	\label{fig: type rules}
\end{figure}

\subsection{A Big-Step Operational Semantics}
\label{sec: operational semantics}
The big-step operational semantics of the language is based on an evaluation relation $\bscfgl{\circuitOne}{\assOne}{\termOne}\bseval\bscfgr{\circuitTwo}{\MonvalOne}$, where $\MonvalOne$ is a lifted value. This means that the term $\termOne$ evaluates to one of the possible values in $\MonvalOne$, depending on the outcome of intermediate measurements, and updates the underlying circuit $C$ upon branch $\assOne$ as a side effect, obtaining an updated circuit $\circuitTwo$. We call $\bscfgl{\circuitOne}{\assOne}{\termOne}$ a \textit{left configuration} and $\bscfgr{\circuitTwo}{\MonvalOne}$ a \textit{right configuration}. Before we give the actual rules of the semantics, we must first give some definitions\longshortversion{, starting with that of substitution.}
{.

First and foremost, note that we are not as interested in the actual names of labels within a boxed circuit as much as we are in the structure that they convey. In fact, when we apply circuits to one another, it might be necessary to rename some of the labels occurring in the applicand in order to avoid naming conflicts, all while preserving its structure. For this reason, whenever two circuits share the same structure and only differ by their respective labels, we consider them to be equivalent.
}

\longversion{\begin{definition}[Substitution]
		Let $\termOne$ $(\valOne)$ be a term (value) and let $\valTwo$ be a value such that none of the variables that occur free in $\valTwo$ occur bound in $\termOne$ $(\valOne)$. We define \emph{the substitution of $\valTwo$ for $\varOne$ in $\termOne$ $(\valOne)$}, written $\termOne[\valTwo/\varOne]$ $(\valOne[\valTwo/\varOne])$, as
		\begin{equation}
			\begin{aligned}
				\varTwo[\valTwo/\varOne] &= \begin{cases}
					\valTwo & \textnormal{if } \varOne\equiv\varTwo,\\
					\varTwo	& \textnormal{otherwise},
				\end{cases}\\
				(\lambda\varTwo_\typeOne.\termOne)[\valTwo/\varOne] &= \begin{cases}
					\lambda\varTwo_\typeOne.\termOne & \textnormal{if } \varOne\equiv\varTwo,\\
					\lambda\varTwo_\typeOne.\termOne[\valTwo/\varOne] &\textnormal{otherwise},
				\end{cases}\\
				(\letin{\varTwo}{\termOne}{\MontermOne})[\valTwo/\varOne] &= \begin{cases}
					\letin{\varOne}{\termOne[\valTwo/\varOne]}{\MontermOne} & \textnormal{if } \varOne\equiv\varTwo,\\
					\letin{\varOne}{\termOne[\valTwo/\varOne]}{\MontermOne[\valTwo/\varOne]} &\textnormal{otherwise},
				\end{cases}\\
				(\letin{\tuple{\varTwo,\varThree}}{\valThree}{\termOne})[\valTwo/\varOne] &= \begin{cases}
					\letin{\tuple{\varTwo,\varThree}}{\valThree[\valTwo/\varOne]}{\termOne} &\textnormal{if } \varOne\equiv\varTwo \vee \varOne\equiv\varThree,\\
					\letin{\tuple{\varTwo,\varThree}}{\valThree[\valTwo/\varOne]}{\termOne[\valTwo/\varOne]} & \textnormal{otherwise},
				\end{cases}\\
				\unitv[\valTwo/\varOne] &= \unitv,\\
				(\lift\termOne)[\valTwo/\varOne] &= \lift\termOne[\valTwo/\varOne],\\
				\boxedcirc{\vec\labOne}{\circuitOne}{\MonmvalOne}{\dlvaltreeOne}[\valTwo/\varOne] &= \boxedcirc{\vec\labOne}{\circuitOne}{\MonmvalOne}{\dlvaltreeOne},\\
				\tuple{\valThree_1,\valThree_2}[\valTwo/\varOne] &= \tuple{\valThree_1[\valTwo/\varOne],\valThree_2[\valTwo/\varOne]},\\
				(\valThree_1\valThree_2)[\valTwo/\varOne] &= \valThree_1[\valTwo/\varOne]\valThree_2[\valTwo/\varOne],\\
				(\force\valThree)[\valTwo/\varOne] &= \force\valThree[\valTwo/\varOne],\\
				(\boxt{\mtypeOne}{\dlvalsubOne}\valThree)[\valTwo/\varOne] &= \boxt{\mtypeOne}{\dlvalsubOne}\valThree[\valTwo/\varOne],\\
				\apply{\dlvalOne_1,\dots,\dlvalOne_n}(\valThree_1,\valThree_2)[\valTwo/\varOne] &= \apply{\dlvalOne_1,\dots,\dlvalOne_n}(\valThree_1[\valTwo/\varOne],\valThree_2[\valTwo/\varOne]),\\
				(\return \valThree)[\valTwo/\varOne] &= \return\valThree[\valTwo/\varOne].
			\end{aligned}
		\end{equation}
		The substitution operation is lifted naturally to a lifted term $\MontermOne$ as follows:
		\begin{equation}
			\MontermOne[\valOne/\varOne](\assOne) = \MontermOne(\assOne)[\valOne/\varOne].
		\end{equation}
\end{definition}}

\longshortversion
{
	Next, we need some definitions that are essential to work with circuits as a side-effect. Namely, in order to avoid conflicts during circuit application, we need a notion of \emph{renaming of labels}, which is similar to -- but nevertheless distinct from -- the concept of renaming of lifted variables.
	
	\begin{definition}[Renaming of Labels]
		Given a label-bearing object $\genericObjectOne$ and a permutation $\renamingOne$ of $\labelset$, we call $\renamingOne$ a \emph{renaming of labels} and define $\renamed{\genericObjectOne}{\renamingOne}$ as $\genericObjectOne$ in which every occurrence of a label $\labOne$ is replaced by $\renamingOne(\labOne)$. In our case:
		
		\vspace{1em}
		\begin{tabular}{l l l}
			Label tuples & $\vec\labOne$ & $\renamed{\unitv}{\renamingOne} = \unitv,$\\
			&& $\renamed{\labOne}{\renamingOne} = \renamingOne(\labOne),$\\
			&& $\renamed{\tuple{\vec{\labOne_1},\vec{\labOne_2}}}{\renamingOne} = \tuple{\renamed{\vec{\labOne_1}}{\renamingOne},\renamed{\vec{\labOne_2}}{\renamingOne}}.$\\
			&&\\
			Label contexts & $\lcOne$ & $\renamed{\lcOne}{\renamingOne} = \lcOne\circ\invert{\renamingOne}.$\\
			&&\\
			Circuits & $\circuitOne$ & $\renamed{\cinput{\lcOne}}{\renamingOne} = \cinput{\renamed{\lcOne}{\renamingOne}},$\\
			&& $\renamed{(\circuitOne';\assOne\cond\gateOne(\vec\labOne)\to\vec\labTwo)}{\renamingOne} = \renamed{\circuitOne'}{\renamingOne};\assOne\cond\gateOne(\renamed{\vec\labOne}{\renamingOne}) \to \renamed{\vec\labTwo}{\renamingOne},$\\
			&& $\renamed{(\circuitOne';\assOne\cond\clift(\labOne)\lto\dlvalOne)}{\renamingOne} = \renamed{\circuitOne'}{\renamingOne};\assOne\cond\clift(\renamingOne(\labOne))\lto\dlvalOne.$\\
			&&\\
			Lifted objects & $\MongenericObjectOne$ & $\renamed{\treeLeaf{x}}{\renamingOne} = \treeLeaf{\renamed{x}{\renamingOne}},$\\
			&& $\renamed{\treeNode{\dlvalOne}{\MongenericObjectOne_1}{\MongenericObjectOne_2}}{\renamingOne} = \treeNode{\dlvalOne}{\renamed{\MongenericObjectOne_1}{\renamingOne}}{\renamed{\MongenericObjectOne_2}{\renamingOne}}.$\\
		\end{tabular}
	\end{definition}
	
	This allows us to define the notion of \emph{equivalent boxed circuits}. Intuitively, two boxed circuits are equivalent when they share the same lifting and gate application structure. That is, when they only differ by a renaming of labels.
	
	\begin{definition}[Equivalent Boxed Circuits]
		We say that two boxed circuits $\boxedcirc{\vec\labOne}{\circuitOne}{\MonmvalOne}{\dlvaltreeOne},\boxedcirc{\vec{\labOne'}}{\circuitOne'}{\MonmvalOne'}{\dlvaltreeOne}$ are \emph{equivalent}, and we write $\boxedcirc{\vec\labOne}{\circuitOne}{\MonmvalOne}{\dlvaltreeOne} \cong \boxedcirc{\vec{\labOne'}}{\circuitOne'}{\MonmvalOne'}{\dlvaltreeOne}$, when there exists a renaming of labels $\renamingOne$ such that $\vec{\labOne'} = \renamed{\vec\labOne}{\renamingOne},\circuitOne'=\renamed{\circuitOne}{\renamingOne}$ and $\MonmvalOne' = \renamed{\MonmvalOne}{\renamingOne}$.
	\end{definition}
}
{
	\begin{definition}[Equivalent Boxed Circuits]
		We say that two boxed circuits $\boxedcirc{\vec\labOne}{\circuitOne}{\MonmvalOne}{\dlvaltreeOne}$ and $\boxedcirc{\vec{\labOne'}}{\circuitOne'}{\MonmvalOne'}{\dlvaltreeOne}$ are \emph{equivalent}, and we write $\boxedcirc{\vec\labOne}{\circuitOne}{\MonmvalOne}{\dlvaltreeOne} \cong \boxedcirc{\vec{\labOne'}}{\circuitOne'}{\MonmvalOne'}{\dlvaltreeOne}$, when they only differ by a renaming of labels.
	\end{definition}
}

\longversion{
	We give the following results about renamings of labels and their relationship with \PQK's type system. These results are crucial to the type soundness of the language, which we prove in Section \ref{sec: soundness}.
	
	\begin{lemma}\label{lem: apply-1a}
		For any renaming of labels $\renamingOne$, if $\circjudgment{\circuitOne}{\dlvaltreeOne}{\lcOne}{\MonlcOne}$, then $\circjudgment{\renamed{\circuitOne}{\renamingOne}}{\dlvaltreeOne}{\renamed{\lcOne}{\renamingOne}}{\renamed{\MonlcOne}{\renamingOne}}$.
	\end{lemma}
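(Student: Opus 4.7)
The plan is to proceed by induction on the derivation of $\circjudgment{\circuitOne}{\dlvaltreeOne}{\lcOne}{\MonlcOne}$, which has exactly three possible last rules (\emph{id}, \emph{lift}, \emph{gate}) corresponding to the three CRL constructors.

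For the base case (\emph{id}), the circuit is $\cinput{\lcOne}$ and the judgment is $\circjudgment{\cinput{\lcOne}}{\emptytree}{\lcOne}{\Monunit{\lcOne}}$. By definition of renaming, $\renamed{\cinput{\lcOne}}{\renamingOne} = \cinput{\renamed{\lcOne}{\renamingOne}}$ and $\renamed{\Monunit{\lcOne}}{\renamingOne} = \Monunit{\renamed{\lcOne}{\renamingOne}}$, so the conclusion is immediate by a fresh application of the \emph{id} rule. For the inductive cases I would first apply the induction hypothesis to the sub-derivation for $\circuitOne'$, obtaining $\circjudgment{\renamed{\circuitOne'}{\renamingOne}}{\dlvaltreeOne}{\renamed{\lcOne}{\renamingOne}}{\renamed{\MonlcOne}{\renamingOne}}$, and then re-apply either \emph{lift} or \emph{gate} to the renamed data.

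To re-apply these rules I need a handful of routine commutation facts which I would prove (or assume tacitly) as auxiliary observations: (i) label renaming does not touch assignments, so $\assOne \in \assset_\dlvaltreeOne$ and membership of $\dlvalOne$ in $\namesof{\assOne}(\dlvaltreeOne)$ are unaffected; (ii) renaming commutes with the tree operations, i.e.\ $\renamed{\Moncomp{\MongenericObjectOne}{\genericObjectOne_\assOne}{\assOne}{\indexsetOne}}{\renamingOne} = \Moncomp{\renamed{\MongenericObjectOne}{\renamingOne}}{\renamed{\genericObjectOne_\assOne}{\renamingOne}}{\assOne}{\indexsetOne}$ and $\renamed{\Monflatten{\MongenericObjectOne}}{\renamingOne} = \Monflatten{\renamed{\MongenericObjectOne}{\renamingOne}}$, from which $\renamed{(\MonlcOne \Monmerge{\assOne} \lcOne')}{\renamingOne} = \renamed{\MonlcOne}{\renamingOne} \Monmerge{\assOne} \renamed{\lcOne'}{\renamingOne}$ and the analogous identity for $\Monsplit{\assOne}$ follow by straightforward structural induction on the involved trees; (iii) for the \emph{gate} case, a sub-lemma stating that if $\mjudgment{\lcOne'}{\vec\labOne}{\mtypeOne}$ then $\mjudgment{\renamed{\lcOne'}{\renamingOne}}{\renamed{\vec\labOne}{\renamingOne}}{\mtypeOne}$, which is a direct induction on the M-value typing derivation (the M-type itself is unchanged since it carries no labels).

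The only non-routine step is verifying freshness in the \emph{gate} case: we need $\labelfresh(\renamed{\vec\labTwo}{\renamingOne}, \renamed{\circuitOne'}{\renamingOne})$ from $\labelfresh(\vec\labTwo, \circuitOne')$. This holds precisely because $\renamingOne$ is a permutation of $\labelset$ and hence a bijection, so it preserves distinctness of labels and maps labels fresh in $\circuitOne'$ to labels fresh in $\renamed{\circuitOne'}{\renamingOne}$. The remaining verifications reduce to pattern-matching the renamed data against the hypotheses of the rule, which the commutation lemmas above make transparent. The main conceptual obstacle is keeping track of the interplay between renaming and the composition/flattening operations on lifted label contexts; once those commutations are established, the three inductive cases go through uniformly.
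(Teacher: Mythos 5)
Your proposal is correct: the paper actually states Lemma~\ref{lem: apply-1a} without giving any proof, and the routine induction on the derivation of $\circjudgment{\circuitOne}{\dlvaltreeOne}{\lcOne}{\MonlcOne}$ that you describe---with the observations that label renaming leaves lifting trees and assignments untouched, commutes with $\Moncomp{\cdot}{\cdot}{}{}$, $\Monflatten{\cdot}$, $\Monmerge{\assOne}$ and $\Monsplit{\assOne}$, preserves M-value typing up to renaming, and preserves freshness because $\renamingOne$ is a bijection---is exactly the argument the authors are implicitly relying on. Nothing is missing.
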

	
	\begin{lemma}\label{lem: apply-1b}
		For any renaming of labels $\renamingOne$, if $\valjudgment{\pcontextOne}{\lcOne}{\vec\labOne}{\mtypeOne}$, then $\valjudgment{\pcontextOne}{\renamed{\lcOne}{\renamingOne}}{\renamed{\vec\labOne}{\renamingOne}}{\mtypeOne}$.
	\end{lemma}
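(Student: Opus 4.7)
The plan is to proceed by induction on the derivation of $\valjudgment{\pcontextOne}{\lcOne}{\vec\labOne}{\mtypeOne}$. Since $\pcontextOne$ contains only parameter-typed variables and $\mtypeOne$ is an M-type (which is not a parameter type, except for $\unitt$), no label-tuple can draw from a linear variable assumption, so the only typing rules that can have been applied are \emph{unit}, \emph{label} and \emph{tuple}. Equivalently, by the grammar, $\vec\labOne$ is itself an M-value, so a structural induction on $\vec\labOne$ suffices.

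For the \emph{unit} case, $\vec\labOne = \unitv$, $\lcOne = \emptylc$ and $\mtypeOne = \unitt$; since $\renamed{\unitv}{\renamingOne} = \unitv$ and $\renamed{\emptylc}{\renamingOne} = \emptylc$, the conclusion is immediate from the \emph{unit} rule. For the \emph{label} case, $\vec\labOne = \labOne$, $\lcOne = \labOne:\wtypeOne$ and $\mtypeOne = \wtypeOne$; by the definition of renaming on labels and on label contexts, $\renamed{\labOne}{\renamingOne} = \renamingOne(\labOne)$ and $\renamed{\lcOne}{\renamingOne} = \renamingOne(\labOne):\wtypeOne$, so the \emph{label} rule delivers exactly the desired judgment.

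For the \emph{tuple} case, $\vec\labOne = \tuple{\vec{\labOne_1},\vec{\labOne_2}}$, $\mtypeOne = \mtypeOne_1 \otimes \mtypeOne_2$, and $\lcOne = \lcOne_1,\lcOne_2$ with $\lcOne_1$ and $\lcOne_2$ disjoint, so that by inversion we have $\valjudgment{\pcontextOne}{\lcOne_i}{\vec{\labOne_i}}{\mtypeOne_i}$ for $i\in\{1,2\}$. The induction hypothesis yields $\valjudgment{\pcontextOne}{\renamed{\lcOne_i}{\renamingOne}}{\renamed{\vec{\labOne_i}}{\renamingOne}}{\mtypeOne_i}$ for each $i$. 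Because $\renamingOne$ is a permutation of $\labelset$, the domains of $\renamed{\lcOne_1}{\renamingOne}$ and $\renamed{\lcOne_2}{\renamingOne}$ remain disjoint and their union is $\renamed{(\lcOne_1,\lcOne_2)}{\renamingOne}$; similarly $\renamed{\tuple{\vec{\labOne_1},\vec{\labOne_2}}}{\renamingOne} = \tuple{\renamed{\vec{\labOne_1}}{\renamingOne},\renamed{\vec{\labOne_2}}{\renamingOne}}$ by definition. Applying the \emph{tuple} rule then closes the induction.

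There is no real obstacle here: the argument is a routine induction, and the only point requiring care is verifying that injectivity of $\renamingOne$ preserves the disjointness required by the \emph{tuple} rule. If desired, the result can alternatively be obtained by routing through Proposition~\ref{prop: m-judgment and val-judgment isomorphism}, proving the analogous statement for the M-value judgment $\mjudgment{\lcOne}{\vec\labOne}{\mtypeOne}$ (where the induction is identical, without the parameter context) and then converting back.
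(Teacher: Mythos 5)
Your proof is correct, and it matches what the paper intends: the paper states this lemma without an explicit proof, treating it as the routine structural induction on the M-value (equivalently, on the derivation, whose only applicable rules are \emph{unit}, \emph{label} and \emph{tuple}) that you carry out. The one point genuinely worth recording is the one you flag, namely that injectivity of the permutation $\renamingOne$ preserves the disjointness of $\renamed{\lcOne_1}{\renamingOne}$ and $\renamed{\lcOne_2}{\renamingOne}$ needed for the \emph{tuple} rule, together with the identity $\renamed{(\lcOne_1,\lcOne_2)}{\renamingOne}=\renamed{\lcOne_1}{\renamingOne},\renamed{\lcOne_2}{\renamingOne}$ following from $\renamed{\lcOne}{\renamingOne}=\lcOne\circ\invert{\renamingOne}$.
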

	
	\begin{proposition}\label{lem: apply-1}
		If $\valjudgment{\emptycontext}{\emptycontext}{\boxedcirc{\vec\labOne}{\circuitOne}{\MonmvalOne}{\dlvaltreeOne}}{\circt{\dlvaltreeOne}(\mtypeOne,\MonmtypeTwo)}$ and $\boxedcirc{\vec\labOne}{\circuitOne}{\MonmvalOne}{\dlvaltreeOne} \cong \boxedcirc{\vec{\labOne'}}{\circuitOne'}{\MonmvalOne'}{\dlvaltreeOne}$, then $\valjudgment{\emptycontext}{\emptycontext}{\boxedcirc{\vec{\labOne'}}{\circuitOne'}{\MonmvalOne'}{\dlvaltreeOne}}{\circt{\dlvaltreeOne}(\mtypeOne,\MonmtypeTwo)}$.
	\end{proposition}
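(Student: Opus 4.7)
The plan is to proceed by inversion on the given value typing derivation, applying the two preceding lemmas pointwise, and then re-applying the \emph{circ} rule. Since the only typing rule that can derive $\valjudgment{\emptycontext}{\emptycontext}{\boxedcirc{\vec\labOne}{\circuitOne}{\MonmvalOne}{\dlvaltreeOne}}{\circt{\dlvaltreeOne}(\mtypeOne,\MonmtypeTwo)}$ is \emph{circ}, inversion yields some label context $\lcOne$ and some lifted label context $\MonlcOne\in\condmonad_\dlvaltreeOne(\lcset)$ such that $\circjudgment{\circuitOne}{\dlvaltreeOne}{\lcOne}{\MonlcOne}$, $\valjudgment{\emptyset}{\lcOne}{\vec\labOne}{\mtypeOne}$, and the lifted judgment $\Monvaljudgment{\emptycontext}{\MonlcOne}{\dlvaltreeOne}{\MonmvalOne}{\MonmtypeTwo}$ all hold.

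Next, I would invoke the equivalence hypothesis to obtain a renaming of labels $\renamingOne$ such that $\vec{\labOne'}=\renamed{\vec\labOne}{\renamingOne}$, $\circuitOne'=\renamed{\circuitOne}{\renamingOne}$, and $\MonmvalOne'=\renamed{\MonmvalOne}{\renamingOne}$. Applying Lemma \ref{lem: apply-1a} to the first premise gives $\circjudgment{\circuitOne'}{\dlvaltreeOne}{\renamed{\lcOne}{\renamingOne}}{\renamed{\MonlcOne}{\renamingOne}}$, and applying Lemma \ref{lem: apply-1b} to the second gives $\valjudgment{\emptyset}{\renamed{\lcOne}{\renamingOne}}{\vec{\labOne'}}{\mtypeOne}$, since M-types contain only wire types and so are untouched by label renamings.

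For the third premise, which is a \emph{lifted} value typing judgment, I would unfold it into the family $\{\,\valjudgment{\emptycontext}{\MonlcOne(\assOne)}{\MonmvalOne(\assOne)}{\MonmtypeTwo(\assOne)}\,\}_{\assOne\in\cassset_\dlvaltreeOne}$, apply Lemma \ref{lem: apply-1b} at each branch to obtain $\valjudgment{\emptycontext}{\renamed{\MonlcOne(\assOne)}{\renamingOne}}{\renamed{\MonmvalOne(\assOne)}{\renamingOne}}{\MonmtypeTwo(\assOne)}$, and then repack the family into a lifted judgment. This repacking uses the fact that label renaming commutes with the tree-shaped lifting structure, i.e.\ $\renamed{\MonlcOne}{\renamingOne}(\assOne)=\renamed{\MonlcOne(\assOne)}{\renamingOne}$ and similarly for $\MonmvalOne$, which is immediate from the definition of $\renamed{\cdot}{\renamingOne}$ on lifted objects. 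The resulting judgment $\Monvaljudgment{\emptycontext}{\renamed{\MonlcOne}{\renamingOne}}{\dlvaltreeOne}{\MonmvalOne'}{\MonmtypeTwo}$ together with the two transformed premises can then be fed to the \emph{circ} rule to conclude $\valjudgment{\emptycontext}{\emptycontext}{\boxedcirc{\vec{\labOne'}}{\circuitOne'}{\MonmvalOne'}{\dlvaltreeOne}}{\circt{\dlvaltreeOne}(\mtypeOne,\MonmtypeTwo)}$.

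The only delicate point is the third step: one must verify that $\MonmtypeTwo$ is indeed preserved (not merely renamed to some $\renamed{\MonmtypeTwo}{\renamingOne}$), which hinges on the observation that the codomain type of a circuit consists of lifted M-types carrying only wire-type annotations, with no label occurrences to be renamed. Once this is settled, the rest is bookkeeping, propagating $\renamingOne$ uniformly through the tree structure of $\MonlcOne$ and $\MonmvalOne$.
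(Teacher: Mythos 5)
Your proposal is correct and follows essentially the same route as the paper, which proves this proposition precisely by expanding the definition of boxed-circuit equivalence to extract the renaming $\renamingOne$ and then invoking Lemmas \ref{lem: apply-1a} and \ref{lem: apply-1b} on the premises obtained by inversion of the \emph{circ} rule. Your additional observations -- that M-types contain no labels and are therefore fixed by $\renamingOne$, and that renaming commutes with the branch-wise structure of lifted judgments -- are exactly the bookkeeping the paper leaves implicit.
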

	\begin{proof}
		By expansion of the definition of $\boxedcirc{\vec\labOne}{\circuitOne}{\MonmvalOne}{\dlvaltreeOne} \cong \boxedcirc{\vec{\labOne'}}{\circuitOne'}{\MonmvalOne'}{\dlvaltreeOne}$ and lemmata \ref{lem: apply-1a} and \ref{lem: apply-1b}.
	\end{proof}
}

\longshortversion{Lastly}{Next}, we define the two operations that actually implement the semantics of $\apply{}$. The circuit insertion function $\cconcat_\assOne$ is just a simple concatenation function defined on CRL circuits and all the heavy lifting is actually performed by the $\append$ function. This function is responsible for the actual unboxing of a boxed circuit, the renaming of its labels (to match the outputs of the underlying circuit and avoid label conflicts), the instantiation of the abstracted lifted variable names within it and its insertion in the underlying circuit.
		
\begin{definition}[Insertion of Circuits]
	Suppose $\circuitOne$ and $\circuitTwo$ are two circuits. We define \emph{the insertion of $\circuitTwo$ in $\circuitOne$ on branch $\assOne$}, written $\circuitOne\cconcat_\assOne \circuitTwo$ as:
	\begin{equation}
		\begin{aligned}
			\circuitOne \cconcat_\assOne \cinput{\lcOne} &= \circuitOne,\\
			\circuitOne \cconcat_\assOne (\circuitTwo';\assTwo\cond\gateOne(\lcOne) \to \lcTwo) &= (\circuitOne \cconcat_\assOne \circuitTwo'); \assOne\assmerge\assTwo\cond \gateOne(\lcOne) \to \lcTwo,\\
			\circuitOne \cconcat_\assOne (\circuitTwo';\assTwo\cond\clift(\labOne)\lto\dlvalOne) &= (\circuitOne \cconcat_\assOne \circuitTwo'); \assOne\assmerge\assTwo\cond \lift(\labOne)\lto\dlvalOne\longshortversion{.}{,}
		\end{aligned}
	\end{equation}
	\shortversion{where $\assOne\assmerge\assTwo$ denotes the union of two assignments with disjoint domains.}
\end{definition}

\begin{definition}[$\append$]
	Suppose $\circjudgment{\circuitOne}{\dlvaltreeOne}{\lcOne}{\MonlcOne}$ is a circuit and $\boxedcirc{\vec\labOne}{\circuitTwo}{\MonmvalOne}{\dlvaltreeTwo}$ is a boxed circuit with $\namesof{}(\dlvaltreeTwo)=\{\dlvalOne_1,\dots,\dlvalOne_n\}$. Suppose $\assOne\in\cassset_{\dlvaltreeOne}$ and let $\vec\labTwo$ be a label tuple whose labels all occur in $\MonlcOne(\assOne)$. Finally, let $\dlvalTwo_1,\dots,\dlvalTwo_n$ be a sequence of distinct lifted variable names which do not occur in $\namesof{\assOne}(\dlvaltreeOne)$. We define $\append(\circuitOne,\assOne,\vec\labTwo,\boxedcirc{\vec\labOne}{\circuitTwo}{\MonmvalOne}{\dlvaltreeTwo},\dlvalTwo_1,\dots,\dlvalTwo_n)$
	as the function that
	\begin{enumerate}
		\item Finds $\boxedcirc{\vec\labTwo}{\circuitTwo'}{\MonmvalOne'}{\dlvaltreeOne} \cong \boxedcirc{\vec\labOne}{\circuitTwo}{\MonmvalOne}{\dlvaltreeOne}$ such that all the labels occurring in $\circuitTwo'$, but not in $\vec\labTwo$, are fresh in $\circuitOne$,
		\item Computes $\circuitTwo''=\alpharenamed{\circuitTwo'}{\dlvalTwo_1/\dlvalOne_1,\dots,\dlvalTwo_n/\dlvalOne_n}$ and $\MonmvalOne''=\alpharenamed{\MonmvalOne'}{\dlvalTwo_1/\dlvalOne_1,\dots,\dlvalTwo_n/\dlvalOne_n}$,
		\item Returns $(\circuitOne\cconcat_\assOne \circuitTwo'',\MonmvalOne'')$.
	\end{enumerate}
\end{definition}

The rules of the operational semantics can be found in Figure \ref{table: bs semantic rules}, where $\freshlabels(\mtypeOne)$ produces $(\lcOne,\vec\labOne)$ such that $\mjudgment{\lcOne}{\vec\labOne}{\mtypeOne}$.
We also consider a notion of \textit{divergence} of a configuration. Intuitively, a configuration $\bscfgl{\circuitOne}{\assOne}{\termOne}$ \emph{diverges}, and we write $\bscfgl{\circuitOne}{\assOne}{\termOne}\bsdiverge$, when its evaluation does not terminate.
\longshortversion{More formally, divergence can be derived by the rules in Figure \ref{table: bs divergence rules}.}{The rules for divergence, which we omit for brevity, closely resemble the corresponding rules of the big-step semantics, with the fundamental difference that they are given a coinductive reading. As a (non-exhaustive) example, consider the following rules for the divergence of the application, force and let constructs, respectively:
	
\begin{equation}
	\inference[app]
	{\bscfgl{\circuitOne}{\assOne}{\termOne[\valOne/\varOne]} \bsdiverge}
	{\bscfgl{\circuitOne}{\assOne}{(\lambda\varOne_\typeOne.\termOne)\valOne} \bsdiverge}
	\quad
	\inference[force]
	{\bscfgl{\circuitOne}{\assOne}{\termOne} \bsdiverge}
	{\bscfgl{\circuitOne}{\assOne}{\force(\lift\termOne)} \bsdiverge}
	\quad
	\inference[let-now]
	{\bscfgl{\circuitOne}{\assOne}{\termOne}\bsdiverge
	}
	{\bscfgl{\circuitOne}{\assOne}{\letin{\varOne}{\termOne}{\MontermOne}} \bsdiverge}
\end{equation}}
\begin{figure}[tb]
	\centering
	\fbox{\begin{minipage}{.98\textwidth}
			
			$$
			\inference[app]
			{\bscfgl{\circuitOne}{\assOne}{\termOne[\valOne/\varOne]} \bseval \bscfgr{\circuitTwo}{\MonvalOne}}
			{\bscfgl{\circuitOne}{\assOne}{(\lambda\varOne_\typeOne.\termOne)\valOne} \bseval \bscfgr{\circuitTwo}{\MonvalOne}}
			\qquad
			\inference[dest]
			{\bscfgl{\circuitOne}{\assOne}{\termOne[\valOne/\varOne,\valTwo/\varTwo]} \bseval \bscfgr{\circuitTwo}{\MonvalOne}}
			{\bscfgl{\circuitOne}{\assOne}{\letin{\tuple{\varOne,\varTwo}}{\tuple{\valOne,\valTwo}}{\termOne}} \bseval \bscfgr{\circuitTwo}{\MonvalOne}}
			$$
			\rulespace
			$$
			\inference[let]
			{\bscfgl{\circuitOne}{\assOne}{\termOne}\bseval\bscfgr{\circuitOne_1}{\MonvalOne}
				&
				\MonvalOne\in\condmonad_{\dlvaltreeOne}(\valset)
				&
				\MontermOne\in\condmonad_{\dlvaltreeOne}(\termset)
				\\
				\cassset_{\dlvaltreeOne} = \{\assOne_1,\dots,\assOne_n\}
				&
				\bscfgl{\circuitOne_i}{\assOne\cup\assOne_i}{\MontermOne(\assOne_i)[\MonvalOne(\assOne_i)/\varOne]} \bseval \bscfgr{\circuitOne_{i+1}}{\MonvalTwo_{\assOne_i}} \for i=1,\dots, n
			}
			{\bscfgl{\circuitOne}{\assOne}{\letin{\varOne}{\termOne}{\MontermOne}} \bseval \bscfgr{\circuitOne_{n+1}}{\Monflatten{\Moncomp{\dlvaltreeOne}{\MonvalTwo_\assOne}{\assOne}{}}}}
			$$
			\rulespace
			$$
			\inference[force]
			{\bscfgl{\circuitOne}{\assOne}{\termOne} \bseval \bscfgr{\circuitTwo}{\MonvalOne}}
			{\bscfgl{\circuitOne}{\assOne}{\force(\lift\termOne)} \bseval \bscfgr{\circuitTwo}{\MonvalOne}}
			\qquad
			\inference[apply]
			{(\circuitOne',\MonmvalOne') = \append(\circuitOne,\assOne,\vec\labTwo,\boxedcirc{\vec\labOne}{\circuitTwo}{\MonmvalOne}{\dlvaltreeOne},\dlvalTwo_1,\dots,\dlvalTwo_n)}
			{\bscfgl{\circuitOne}{\assOne}{\apply{\dlvalTwo_1,\dots,\dlvalTwo_n}(\boxedcirc{\vec\labOne}{\circuitTwo}{\MonmvalOne}{\dlvaltreeOne},\vec\labTwo)} \bseval \bscfgr{\circuitOne'}{\MonmvalOne'}}
			$$
			\rulespace
			$$
			\inference[box]
			{
				(\lcOne,\vec\labOne)=\freshlabels(\mtypeOne)
				&
				\bscfgl{\cinput{\lcOne}}{\emptyassignment}{\letin{\varOne}{\termOne}{ \Monunit{\varOne\vec\labOne}}} \bseval \bscfgr{\circuitTwo}{\MonmvalOne}
				&
				\MonmvalOne\in\condmonad_{\dlvaltreeOne}(\mvalset)}
			{\bscfgl{\circuitOne}{\assOne}{\boxt{\mtypeOne}{}(\lift\termOne)} \bseval \bscfgr{\circuitOne}{\Monunit{\boxedcirc{\vec\labOne}{\circuitTwo}{\MonmvalOne}{\dlvaltreeOne}}}}
			$$
			\rulespace
			$$
			\inference[return]
			{}
			{\bscfgl{\circuitOne}{\assOne}{\return \valOne}\bseval\bscfgr{\circuitOne}{\Monunit{\valOne}}}
			$$
	\end{minipage}}
	\caption{The big-step operational semantics of \PQK}
	\label{table: bs semantic rules}
\end{figure}
\longversion{
	\begin{figure}[tb]
		\centering
		\fbox{\begin{minipage}{.98\textwidth}
				$$
				\inference[app]
				{\bscfgl{\circuitOne}{\assOne}{\termOne[\valOne/\varOne]} \bsdiverge}
				{\bscfgl{\circuitOne}{\assOne}{(\lambda\varOne_\typeOne.\termOne)\valOne} \bsdiverge}
				\qquad
				\inference[dest]
				{\bscfgl{\circuitOne}{\assOne}{\termOne[\valOne/\varOne,\valTwo/\varTwo]} \bsdiverge}
				{\bscfgl{\circuitOne}{\assOne}{\letin{\tuple{\varOne,\varTwo}}{\tuple{\valOne,\valTwo}}{\termOne}} \bsdiverge}
				$$
				\rulespace
				$$
				\inference[force]
				{\bscfgl{\circuitOne}{\assOne}{\termOne} \bsdiverge}
				{\bscfgl{\circuitOne}{\assOne}{\force(\lift\termOne)} \bsdiverge}
				\qquad
				\inference[let-now]
				{\bscfgl{\circuitOne}{\assOne}{\termOne}\bsdiverge
				}
				{\bscfgl{\circuitOne}{\assOne}{\letin{\varOne}{\termOne}{\MontermOne}} \bsdiverge}
				$$
				\rulespace
				$$
				\inference[let-then]
				{\bscfgl{\circuitOne}{\assOne}{\termOne}\bseval\bscfgr{\circuitOne_1}{\MonvalOne}
					&
					\MonvalOne\in\condmonad_{\dlvaltreeOne}(\valset)
					&
					\MontermOne\in\condmonad_{\dlvaltreeOne}(\termset)
					\\
					\cassset_{\dlvaltreeOne} = \{\assOne_1,\dots,\assOne_n\}
					&
					\bscfgl{\circuitOne_i}{\assOne\assmerge\assOne_i}{\MontermOne(\assOne_i)[\MonvalOne(\assOne_i)/\varOne]} \bseval \bscfgr{\circuitOne_{i+1}}{\MonvalTwo_{\assOne_i}} \for i=1,\dots,j-1
					\\
					\bscfgl{\circuitOne_j}{\assOne\assmerge\assOne_j}{\MontermOne(\assOne_j)[\MonvalOne(\assOne_j)/\varOne]} \bsdiverge
				}
				{\bscfgl{\circuitOne}{\assOne}{\letin{\varOne}{\termOne}{\MontermOne}} \bsdiverge}
				$$
				\rulespace
				$$
				\inference[box]
				{
					(\lcOne,\vec\labOne)=\freshlabels(\mtypeOne)
					&
					\bscfgl{\cinput{\lcOne}}{\emptyassignment}{\letin{\varOne}{\termOne}{\Monunit{\varOne\vec\labOne}}} \bsdiverge}
				{\bscfgl{\circuitOne}{\assOne}{\boxt{\mtypeOne}{\dlvalsubOne}(\lift\termOne)} \bsdiverge}
				$$
		\end{minipage}}
		\caption{The big-step divergence rules of \PQK}
		\label{table: bs divergence rules}
	\end{figure}
}

\section{Type Soundness}
\label{sec: soundness}

\longversion{\subsection{General Computations}}

In general, the well-typedness of a \PQK\ program strongly depends on the underlying circuit. Specifically, a term $\termOne$ is well-typed when all the free labels occurring in it can be found with the appropriate type in the outputs of the underlying circuit, on the branch that $\termOne$ is manipulating. For this reason, we give the following notions of well-typedness on configurations.

\longshortversion{
	\begin{definition}[Well-Typed Left Configuration]
	We say that a left configuration $\bscfgl{\circuitOne}{\assOne}{\termOne}$ is \emph{well-typed with input $\lcOne$, past lifting tree $\dlvaltreeOne$, future lifting tree $\dlvaltreeTwo$, lifted type $\MontypeOne$ and outputs $\MonlcOne$}, and we write $\lbscfgjudgment{\lcOne}{\dlvaltreeOne}{\dlvaltreeTwo}{\bscfgl{\circuitOne}{\assOne}{\termOne}}{\MontypeOne}{\MonlcOne}$, when $\assOne\in\cassset_\dlvaltreeOne,\namesof{\assOne}(\dlvaltreeOne)\cap\namesof{}(\dlvaltreeTwo)=\emptyset,\circjudgment{\circuitOne}{\dlvaltreeOne}{\lcOne}{\MonlcOne\Monmerge{\assOne}\lcOne'}$ and $\compjudgment{\emptycontext}{\lcOne'}{\dlvaltreeTwo}{\termOne}{\MontypeOne}$.
	\end{definition}

	\begin{definition}[Well-Typed Right Configuration]
	We say that a right configuration $\bscfgr{\circuitOne}{\MonvalOne}$ is \emph{well-typed in the $\assOne$ branch with input $\lcOne$, overall lifting tree $\dlvaltreeOne$, lifted type $\MontypeOne$ and outputs $\MonlcOne$}, and we write $\rbscfgjudgment{\lcOne}{\dlvaltreeOne}{\dlvaltreeTwo}{\assOne}{\bscfgr{\circuitOne}{\MonvalOne}}{\MontypeOne}{\MonlcOne}$, when $\dlvaltreeOne=\Monflatten{\Moncomp{\dlvaltreeOne'}{\dlvaltreeTwo}{}{\{\assOne\}}},\MonlcOne=\Monflatten{\Moncomp{\MonlcOne'}{\MonlcOne''}{}{\{\assOne\}}},\circjudgment{\circuitOne}{\dlvaltreeOne}{\lcOne}{\Monflatten{\Moncomp{\MonlcOne'}{\MonlcOne'',\MonlcTwo}{}{\{\assOne\}}}}$ and $\Monvaljudgment{\emptycontext}{\MonlcTwo}{\dlvaltreeTwo}{\MonvalOne}{\MontypeOne}$.
	\end{definition}
}{
	\begin{definition}[Well-Typed Configuration]
		We say that
		\begin{itemize}
			\item a left configuration $\bscfgl{\circuitOne}{\assOne}{\termOne}$ is \emph{well-typed with input $\lcOne$, past lifting tree $\dlvaltreeOne$, future lifting tree $\dlvaltreeTwo$, lifted type $\MontypeOne$ and outputs $\MonlcOne$}, and we write $\lbscfgjudgment{\lcOne}{\dlvaltreeOne}{\dlvaltreeTwo}{\bscfgl{\circuitOne}{\assOne}{\termOne}}{\MontypeOne}{\MonlcOne}$, when $\assOne\in\cassset_\dlvaltreeOne,\namesof{\assOne}(\dlvaltreeOne)\cap\namesof{}(\dlvaltreeTwo)=\emptyset,\circjudgment{\circuitOne}{\dlvaltreeOne}{\lcOne}{\MonlcOne\Monmerge{\assOne}\lcOne'}$ and $\compjudgment{\emptycontext}{\lcOne'}{\dlvaltreeTwo}{\termOne}{\MontypeOne}$,
			
			\item a right configuration $\bscfgr{\circuitOne}{\MonvalOne}$ is \emph{well-typed in the $\assOne$ branch with input $\lcOne$, overall lifting tree $\dlvaltreeOne$, lifted type $\MontypeOne$ and outputs $\MonlcOne$}, and we write $\rbscfgjudgment{\lcOne}{\dlvaltreeOne}{\dlvaltreeTwo}{\assOne}{\bscfgr{\circuitOne}{\MonvalOne}}{\MontypeOne}{\MonlcOne}$, when $\dlvaltreeOne=\Monflatten{\Moncomp{\dlvaltreeOne'}{\dlvaltreeTwo}{}{\{\assOne\}}},\MonlcOne=\Monflatten{\Moncomp{\MonlcOne'}{\MonlcOne''}{}{\{\assOne\}}},\circjudgment{\circuitOne}{\dlvaltreeOne}{\lcOne}{\Monflatten{\Moncomp{\MonlcOne'}{\MonlcOne'',\MonlcTwo}{}{\{\assOne\}}}}$ and $\Monvaljudgment{\emptycontext}{\MonlcTwo}{\dlvaltreeTwo}{\MonvalOne}{\MontypeOne}$.
		\end{itemize}
	\end{definition}
}

\longversion{
	
	\subsubsection{Subject Reduction}
	
	To prove the subject reduction result, we avail ourselves of the following lemmata.
	
	\begin{lemma}\label{lem: apply-2.1}
		For any renaming of lifted variables $\alpharenamingOne$, if $\circjudgment{\circuitOne}{\dlvaltreeOne}{\lcOne}{\MonlcOne}$, then $\circjudgment{\alpharenamed{\circuitOne}{\alpharenamingOne}}{\alpharenamed{\dlvaltreeOne}{\alpharenamingOne}}{\lcOne}{\alpharenamed{\MonlcOne}{\alpharenamingOne}}$.
	\end{lemma}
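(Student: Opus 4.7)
The plan is to prove the statement by straightforward induction on the derivation of $\circjudgment{\circuitOne}{\dlvaltreeOne}{\lcOne}{\MonlcOne}$ using the three rules in Figure \ref{table: circ rules}. The key conceptual point is that a renaming of lifted variables is a purely syntactic operation on lifted variable names: it does not touch labels, label contexts, gates, or M-types, so the label context $\lcOne$ in the judgment is left invariant, which is consistent with the statement of the lemma.

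In the base case (\emph{id}), we have $\circuitOne = \cinput{\lcOne}$, $\dlvaltreeOne = \emptytree$ and $\MonlcOne = \Monunit{\lcOne}$. Unfolding the definition of $\alpharenamed{\cdot}{\alpharenamingOne}$ on each of these yields $\alpharenamed{\cinput{\lcOne}}{\alpharenamingOne} = \cinput{\lcOne}$, $\alpharenamed{\emptytree}{\alpharenamingOne} = \emptytree$ and $\alpharenamed{\Monunit{\lcOne}}{\alpharenamingOne} = \Monunit{\lcOne}$, so the conclusion is exactly the \emph{id} rule again. For the \emph{gate} case, applying the induction hypothesis to the premise gives $\circjudgment{\alpharenamed{\circuitOne}{\alpharenamingOne}}{\alpharenamed{\dlvaltreeOne}{\alpharenamingOne}}{\lcOne}{\alpharenamed{(\MonlcOne \Monmerge{\assOne} \lcOne')}{\alpharenamingOne}}$. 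One then reapplies the \emph{gate} rule with the renamed guard $\alpharenamed{\assOne}{\alpharenamingOne}$, the same gate $\gateOne$, and the same M-value typings (which are unaffected by $\alpharenamingOne$). The \emph{lift} case is analogous: one uses the IH on the premise and then reapplies \emph{lift}, with the fresh variable $\dlvalOne$ replaced by $\alpharenamingOne(\dlvalOne)$, which remains fresh precisely because $\alpharenamingOne$ is a permutation.

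The real work therefore reduces to checking that $\alpharenamed{\cdot}{\alpharenamingOne}$ commutes with each of the auxiliary pieces of notation appearing in the rules; concretely:
\begin{itemize}
\item $\alpharenamed{(\MonlcOne \Monmerge{\assOne} \lcOne')}{\alpharenamingOne} = \alpharenamed{\MonlcOne}{\alpharenamingOne} \Monmerge{\alpharenamed{\assOne}{\alpharenamingOne}} \lcOne'$, which follows by inspection of the definitions of composition and of lifted-variable renaming on lifted objects, using the fact that $\lcOne'$ contains no lifted variables.
\item $\alpharenamed{\assOne}{\alpharenamingOne} \in \assset_{\alpharenamed{\dlvaltreeOne}{\alpharenamingOne}}$ whenever $\assOne \in \assset_\dlvaltreeOne$, and similarly $\alpharenamingOne(\dlvalOne) \notin \namesof{\alpharenamed{\assOne}{\alpharenamingOne}}(\alpharenamed{\dlvaltreeOne}{\alpharenamingOne})$ whenever $\dlvalOne \notin \namesof{\assOne}(\dlvaltreeOne)$, both by a routine induction on $\dlvaltreeOne$.
\item $\alpharenamed{(\dlvaltreeOne \Monsplit{\assOne} \treeNode{\dlvalOne}{\emptytree}{\emptytree})}{\alpharenamingOne} = \alpharenamed{\dlvaltreeOne}{\alpharenamingOne} \Monsplit{\alpharenamed{\assOne}{\alpharenamingOne}} \treeNode{\alpharenamingOne(\dlvalOne)}{\emptytree}{\emptytree}$, and the analogous identity on $\MonlcOne$, which follow from the commutation of $\alpharenamed{\cdot}{\alpharenamingOne}$ with both composition and flattening.
\end{itemize}
I expect the main obstacle to be bookkeeping: making sure each of these bulleted commutation properties is explicitly established (as small auxiliary lemmas, likely by induction on the relevant tree) so that the induction step in the main proof can simply invoke them and reapply the rule. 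None of them is deep, but because composition and $\Monsplit{}$ are defined by cases on the index sets $\indexsetOne_b$ and on path sets $\cassset_\dlvaltreeOne^\assOne$, one must be careful that renaming the assignments indexing these operations produces exactly the corresponding index sets for the renamed tree---which is true precisely because $\alpharenamingOne$ acts as a bijection on $\dlvalset$.
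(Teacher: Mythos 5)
Your proposal is correct and follows the evident intended argument: the paper states this lemma without an explicit proof, but its companion results (e.g.\ the insertion-signature lemma) are proved by exactly this kind of induction on the derivation of the circuit signature, and your case analysis on the \emph{id}, \emph{gate} and \emph{lift} rules, together with the commutation of $\alpharenamed{\cdot}{\alpharenamingOne}$ with $\Monmerge{\assOne}$, $\Monsplit{\assOne}$, $\assset_{\dlvaltreeOne}$ and $\namesof{\assOne}$, is precisely the bookkeeping required. Your observation that labels, label contexts and M-value typings are untouched by a renaming of lifted variables (so $\lcOne$ and the freshness side conditions carry over unchanged) is the key point, and the bijectivity of $\alpharenamingOne$ correctly handles the freshness of $\alpharenamingOne(\dlvalOne)$ in the \emph{lift} case.
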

	
	\begin{lemma}\label{lem: apply-2.2}
		For any renaming of lifted variables $\alpharenamingOne$, if $\Monvaljudgment{\Moncontext}{\MonlcOne}{\dlvaltreeOne}{\MonvalOne}{\MontypeOne}$, where $\Moncontext\in\condmonad_\dlvaltreeOne(\contextset)$, then $\Monvaljudgment{\alpharenamed{\Moncontext}{\alpharenamingOne}}{\alpharenamed{\MonlcOne}{\alpharenamingOne}}{\alpharenamed{\dlvaltreeOne}{\alpharenamingOne}}{\alpharenamed{\MonvalOne}{\alpharenamingOne}}{\alpharenamed{\MontypeOne}{\alpharenamingOne}}$.
	\end{lemma}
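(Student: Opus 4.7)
The plan is to reduce the lifted statement to a pointwise statement about the ordinary value typing judgment, then to prove the pointwise statement by a mutual induction on derivations.

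First, I would unfold $\Monvaljudgment{\Moncontext}{\MonlcOne}{\dlvaltreeOne}{\MonvalOne}{\MontypeOne}$ into its defining family of judgments $\valjudgment{\Moncontext(\assOne)}{\MonlcOne(\assOne)}{\MonvalOne(\assOne)}{\MontypeOne(\assOne)}$ indexed by $\assOne\in\cassset_\dlvaltreeOne$. Next I would establish two bookkeeping facts by induction on the structure of lifting trees and lifted objects. (i) The map $\assOne\mapsto\alpharenamed{\assOne}{\alpharenamingOne}$ is a bijection between $\cassset_\dlvaltreeOne$ and $\cassset_{\alpharenamed{\dlvaltreeOne}{\alpharenamingOne}}$, since $\alpharenamingOne$ is a permutation of $\dlvalset$. (ii) For any lifted object $\MongenericObjectOne\in\condmonad_\dlvaltreeOne(\genericSetOne)$, the pointwise equality $\alpharenamed{\MongenericObjectOne}{\alpharenamingOne}(\alpharenamed{\assOne}{\alpharenamingOne})=\alpharenamed{(\MongenericObjectOne(\assOne))}{\alpharenamingOne}$ holds, which follows from the definition of renaming on $\treeLeaf{\cdot}$ and $\treeNode{\dlvalOne}{\cdot}{\cdot}$.

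The heart of the argument is an auxiliary renaming lemma for the non-lifted judgments: if $\valjudgment{\contextOne}{\lcOne}{\valOne}{\typeOne}$ then $\valjudgment{\alpharenamed{\contextOne}{\alpharenamingOne}}{\lcOne}{\alpharenamed{\valOne}{\alpharenamingOne}}{\alpharenamed{\typeOne}{\alpharenamingOne}}$, and analogously $\compjudgment{\contextOne}{\lcOne}{\dlvaltreeTwo}{\termOne}{\MontypeOne}\Rightarrow\compjudgment{\alpharenamed{\contextOne}{\alpharenamingOne}}{\lcOne}{\alpharenamed{\dlvaltreeTwo}{\alpharenamingOne}}{\alpharenamed{\termOne}{\alpharenamingOne}}{\alpharenamed{\MontypeOne}{\alpharenamingOne}}$. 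Note that $\lcOne$ is untouched because labels are disjoint from $\dlvalset$. I would prove these two statements by simultaneous induction on the typing derivations. The base cases (\textit{unit}, \textit{var}, \textit{label}) are immediate since nothing mentions lifted variables. The structural cases (\textit{tuple}, \textit{dest}, \textit{abs}, \textit{app}, \textit{force}, \textit{lift}, \textit{return}) follow by direct appeal to the induction hypothesis, using that renaming distributes over type constructors (in particular over $\typeOne\multimap_\dlvaltreeTwo\MontypeTwo$, $\bang\MontypeOne$, and over $\Monunit{\typeOne}$). The \textit{let} case additionally uses that renaming commutes with composition and flattening of lifted objects, which is a straightforward induction. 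The \textit{apply} case uses that composing two renamings of lifted variables is itself a renaming, so renaming the codomain tree of an application by $\alpharenamingOne$ is consistent with the rule's own renaming. The two pivotal cases are \textit{circ} and \textit{box}: for \textit{circ} one invokes Lemma \ref{lem: apply-2.1} on the underlying CRL circuit and the inductive hypothesis for the lifted value judgment $\Monvaljudgment{\emptycontext}{\MonlcOne}{\dlvaltreeOne}{\MonmvalOne}{\MonmtypeTwo}$ witnessing the boxed circuit's output; for \textit{box} one simply uses that the inner term's judgment is transported by induction.

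Finally, combining the pieces: apply the auxiliary lemma to each pointwise value judgment obtained in the unfolding, then use (i) and (ii) to re-fold the renamed pointwise judgments into $\Monvaljudgment{\alpharenamed{\Moncontext}{\alpharenamingOne}}{\alpharenamed{\MonlcOne}{\alpharenamingOne}}{\alpharenamed{\dlvaltreeOne}{\alpharenamingOne}}{\alpharenamed{\MonvalOne}{\alpharenamingOne}}{\alpharenamed{\MontypeOne}{\alpharenamingOne}}$. The main obstacle I anticipate is the \textit{circ} case of the auxiliary induction, since boxed-circuit values carry both a CRL circuit and a lifted output tuple whose typing is itself a lifted value judgment; handling it cleanly requires the mutual induction to be stated so that the lifted judgment is available inductively alongside Lemma \ref{lem: apply-2.1}. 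Once that case is set up correctly, everything else is routine.
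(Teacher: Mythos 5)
The paper states Lemma~\ref{lem: apply-2.2} without proof (it is one of several auxiliary renaming facts left implicit), so there is no official argument to diverge from; your strategy of unfolding the lifted judgment pointwise over $\cassset_\dlvaltreeOne$, proving that $\assOne\mapsto\alpharenamed{\assOne}{\alpharenamingOne}$ is a bijection onto $\cassset_{\alpharenamed{\dlvaltreeOne}{\alpharenamingOne}}$, and running a mutual induction on the value/computation typing derivations (with Lemma~\ref{lem: apply-2.1} discharging the \textit{circ} case) is exactly the routine argument the authors evidently have in mind, and it is sound. Your handling of the delicate cases (\textit{apply} via closure of renamings under composition, \textit{let} via commutation of renaming with composition and flattening, \textit{circ} via the mutual induction hypothesis) identifies precisely the points that need care, so nothing is missing.
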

	
	\begin{lemma}[Insertion Signature]\label{lem: apply-3}
		If $\circjudgment{\circuitOne}{\dlvaltreeOne}{\lcOne}{\Moncomp{\MonlcOne}{\lcOne',\lcOne''}{}{\{\assOne\}}}$ for some $\assOne\in\cassset_\dlvaltreeOne$ and $\circjudgment{\circuitTwo}{\dlvaltreeTwo}{\lcOne'}{\MonlcTwo}$ for some $\circuitTwo$ such that the labels that occur in $\circuitTwo$, but not in $\lcOne'$, are fresh in $\circuitOne$ and $\namesof{\assOne}(\dlvaltreeOne)\cap\namesof{}(\dlvaltreeTwo)=\emptyset$, then $\circjudgment{\circuitOne\cconcat_\assOne\circuitTwo}{\Monflatten{\Moncomp{\dlvaltreeOne}{\dlvaltreeTwo}{}{\{\assOne\}}}}{\lcOne}{\Monflatten{\Moncomp{\MonlcOne}{\MonlcTwo,\lcOne''}{}{\{\assOne\}}}}$.
	\end{lemma}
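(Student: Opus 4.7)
The plan is to proceed by induction on the derivation of $\circjudgment{\circuitTwo}{\dlvaltreeTwo}{\lcOne'}{\MonlcTwo}$, which mirrors the recursive structure of $\cconcat_\assOne$. The three cases exactly match the three rules (\emph{id}, \emph{gate}, \emph{lift}) of Figure \ref{table: circ rules}, so the induction follows the definition of insertion step by step.

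In the base case $\circuitTwo=\cinput{\lcOne'}$, we have $\dlvaltreeTwo=\emptytree$ and $\MonlcTwo=\Monunit{\lcOne'}$, so by the definition of $\cconcat_\assOne$ we get $\circuitOne\cconcat_\assOne\cinput{\lcOne'}=\circuitOne$. Unfolding the definitions of composition and flattening on a singleton index set with the trivial tree $\emptytree$ shows that $\Monflatten{\Moncomp{\dlvaltreeOne}{\emptytree}{}{\{\assOne\}}}=\dlvaltreeOne$ and $\Monflatten{\Moncomp{\MonlcOne}{\Monunit{\lcOne'},\lcOne''}{}{\{\assOne\}}}=\Moncomp{\MonlcOne}{\lcOne',\lcOne''}{}{\{\assOne\}}$, so the goal is precisely the first hypothesis.

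For the inductive step with a gate application, $\circuitTwo=\circuitTwo';\assTwo\cond\gateOne(\vec\labTwo)\to\vec\labThree$, inversion on the \emph{gate} rule gives $\circjudgment{\circuitTwo'}{\dlvaltreeTwo}{\lcOne'}{\MonlcTwo'\Monmerge{\assTwo}\lcTwo'}$ with $\assTwo\in\assset_{\dlvaltreeTwo}$, $\gateOne\in\gateset(\mtypeOne,\mtypeTwo)$, the appropriate M-value typings, and the freshness condition $\labelfresh(\vec\labThree,\circuitTwo')$. Applying the induction hypothesis (with $\MonlcTwo$ replaced by $\MonlcTwo'\Monmerge{\assTwo}\lcTwo'$) yields a signature for $\circuitOne\cconcat_\assOne\circuitTwo'$. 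By definition of insertion, $\circuitOne\cconcat_\assOne\circuitTwo=(\circuitOne\cconcat_\assOne\circuitTwo');\assOne\assmerge\assTwo\cond\gateOne(\vec\labTwo)\to\vec\labThree$, and I re-apply the \emph{gate} rule with the new condition $\assOne\assmerge\assTwo$; this is well-defined because $\assOne\assmerge\assTwo\in\assset_{\Monflatten{\Moncomp{\dlvaltreeOne}{\dlvaltreeTwo}{}{\{\assOne\}}}}$ (the disjoint union follows from the hypothesis $\namesof{\assOne}(\dlvaltreeOne)\cap\namesof{}(\dlvaltreeTwo)=\emptyset$) and $\vec\labThree$ remains fresh in $\circuitOne\cconcat_\assOne\circuitTwo'$ by the freshness hypothesis on labels of $\circuitTwo$. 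Matching the output label context against the required $\Monflatten{\Moncomp{\MonlcOne}{\MonlcTwo,\lcOne''}{}{\{\assOne\}}}$ follows from the compatibility of $\Monmerge{\assTwo}$ with flattening on branch $\assOne$. The \emph{lift} case is entirely analogous, using the \emph{lift} rule and the freshness of $\dlvalOne$ in $\namesof{\assOne\assmerge\assTwo}(\Monflatten{\Moncomp{\dlvaltreeOne}{\dlvaltreeTwo}{}{\{\assOne\}}})$, which again rests on the disjointness hypothesis on lifted variable names.

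The main obstacle is bookkeeping: one must verify that the operations $\Monmerge{\cdot}$ and $\Monsplit{\cdot}$ used in the circuit rules interact correctly with the nested $\Monflatten{\cdot}$ and $\Moncomp{\cdot}{\cdot}{\cdot}{\cdot}$ expressions obtained after insertion. Specifically, the equality $\Monflatten{\Moncomp{\MonlcOne}{(\MonlcTwo'\Monmerge{\assTwo}\lcTwo'),\lcOne''}{}{\{\assOne\}}}=\Monflatten{\Moncomp{\MonlcOne}{\MonlcTwo',\lcOne''}{}{\{\assOne\}}}\Monmerge{\assOne\assmerge\assTwo}\lcTwo'$ is what lets the inductive step fit the shape required by the gate/lift rule. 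Once this commutation is verified (straightforwardly by induction on $\dlvaltreeOne$ using the definitions of composition and flattening), the three cases close without further difficulty.
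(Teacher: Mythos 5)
Your proposal is correct and follows exactly the route the paper takes: the paper's proof is simply ``by induction on $\circjudgment{\circuitTwo}{\dlvaltreeTwo}{\lcOne'}{\MonlcTwo}$'', with the three cases mirroring the \emph{id}, \emph{gate} and \emph{lift} rules and the definition of $\cconcat_\assOne$, just as you describe. The commutation of $\Monmerge{\assTwo}$ (and $\Monsplit{\assTwo}$) with composition and flattening on branch $\assOne$ that you flag as the bookkeeping obstacle is exactly the content of the paper's Lemma~\ref{lem: comma extraction}, so your treatment matches the intended argument.
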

	\begin{proof}
		By induction on $\circjudgment{\circuitTwo}{\dlvaltreeTwo}{\lcOne'}{\MonlcTwo}$.
	\end{proof}
	
	\begin{lemma}\label{lem: comma extraction}
		Suppose $\MongenericObjectOne\in\condmonad_{\dlvaltreeOne}(\genericSetOne),\MongenericObjectTwo\in\condmonad_{\dlvaltreeTwo}(\genericSetOne),\assOne\in\cassset_\dlvaltreeOne$ and $\assTwo\in\assset_{\dlvaltreeTwo}$. We have $\Monflatten{\Moncomp{\MongenericObjectOne}{\Moncomp{\MongenericObjectTwo}{\genericObjectOne_\assThree}{\assThree}{\cassset_{\dlvaltreeTwo}^\assTwo}}{}{\{\assOne\}}} = \Moncomp{\Monflatten{\Moncomp{\MongenericObjectOne}{\MongenericObjectTwo}{}{\{\assOne\}}}}{\genericObjectOne_\assThree}{\assThree}{\cassset_{\Monflatten{\Moncomp{\dlvaltreeOne}{\dlvaltreeTwo}{}{\{\assOne\}}}}^{\assOne\assmerge\assTwo}}$.
	\end{lemma}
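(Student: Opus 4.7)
My plan is to prove the identity by induction on the structure of the lifting tree $\dlvaltreeOne$ backing $\MongenericObjectOne$, exploiting the branch-local nature of both composition and flattening. The intuition is that the two sides of the equation represent two ways of ``threading'' the family $\{\genericObjectOne_\assThree\}$ down through $\MongenericObjectOne$ and $\MongenericObjectTwo$: either we first collect the family inside $\MongenericObjectTwo$ and then hoist the resulting object into $\MongenericObjectOne$ via a flattening at $\assOne$, or we first glue $\MongenericObjectTwo$ into $\MongenericObjectOne$ at $\assOne$ and then reach the already-exposed leaves of the composite with the family, re-indexed as paths extending $\assOne\assmerge\assTwo$.

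For the base case $\dlvaltreeOne=\emptytree$, we have $\MongenericObjectOne=\Monunit{\genericObjectTwo}$ for some $\genericObjectTwo$ and $\assOne=\emptyassignment$ is the unique path in $\cassset_\emptytree$. Unfolding the defining equations of $\Moncomp{\Monunit{\genericObjectTwo}}{\cdot}{}{\{\emptyassignment\}}$ and $\Monflatten{\Monunit{\cdot}}$ on both sides reduces the identity to $\Moncomp{\MongenericObjectTwo}{\genericObjectOne_\assThree}{\assThree}{\cassset_{\dlvaltreeTwo}^\assTwo} = \Moncomp{\MongenericObjectTwo}{\genericObjectOne_\assThree}{\assThree}{\cassset_{\dlvaltreeTwo}^{\emptyassignment\assmerge\assTwo}}$, which holds because $\emptyassignment\assmerge\assTwo=\assTwo$ and $\Monflatten{\Moncomp{\emptytree}{\dlvaltreeTwo}{}{\{\emptyassignment\}}}=\dlvaltreeTwo$.

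For the inductive step, write $\dlvaltreeOne=\treeNode{\dlvalOne}{\dlvaltreeOne_0}{\dlvaltreeOne_1}$ and $\MongenericObjectOne=\treeNode{\dlvalOne}{\MongenericObjectOne_0}{\MongenericObjectOne_1}$. Since $\assOne\in\cassset_\dlvaltreeOne$, we have $\assOne(\dlvalOne)$ defined; without loss of generality assume $\assOne(\dlvalOne)=0$, so that $\assOne=(\dlvalOne=0)\assmerge\assOne'$ with $\assOne'\in\cassset_{\dlvaltreeOne_0}$. The recursive equation for composition on a singleton index $\{\assOne\}$ sends the composition into the $0$-subtree and leaves the $1$-subtree unchanged, and flattening commutes with the $\treeNode{\dlvalOne}{\cdot}{\cdot}$ constructor branchwise. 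Consequently, both the LHS and the RHS reduce to expressions of the form $\treeNode{\dlvalOne}{X}{\MongenericObjectOne_1}$ where $X$ is, respectively, the LHS and RHS of the statement applied to $\MongenericObjectOne_0$, $\dlvaltreeOne_0$, $\assOne'$ in place of $\MongenericObjectOne$, $\dlvaltreeOne$, $\assOne$. The induction hypothesis then closes the case.

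The main obstacle, and the only part that requires genuine care rather than rote unfolding, is the bookkeeping of index sets: one must verify that $\cassset_{\Monflatten{\Moncomp{\dlvaltreeOne}{\dlvaltreeTwo}{}{\{\assOne\}}}}^{\assOne\assmerge\assTwo}$ is in canonical bijection with $\cassset_{\Monflatten{\Moncomp{\dlvaltreeOne_0}{\dlvaltreeTwo}{}{\{\assOne'\}}}}^{\assOne'\assmerge\assTwo}$ via prepending $(\dlvalOne=0)$, so that applying the IH to the $0$-subtree produces precisely the right-hand composition on the composite tree. This follows from the definition of $\cassset$ on $\treeNode{\dlvalOne}{\cdot}{\cdot}$ together with the hypothesis $\namesof{\assOne}(\dlvaltreeOne)\cap\namesof{}(\dlvaltreeTwo)=\emptyset$, which is tacitly ensured by the well-definedness of the flattening on the left-hand side. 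Once this bijection is established, the inductive step is a mechanical assembly and the proof concludes.
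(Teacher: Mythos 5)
The paper states this lemma without an accompanying proof, so there is no official argument to compare against; judged on its own, your induction on the structure of $\dlvaltreeOne$ is the natural route and it is correct. The base case and the reduction of both sides to $\treeNode{\dlvalOne}{X}{\MongenericObjectOne_{1}}$ with the $1$-subtree untouched (since the singleton index $\{\assOne\}$ and the index set $\cassset_{\Monflatten{\Moncomp{\dlvaltreeOne}{\dlvaltreeTwo}{}{\{\assOne\}}}}^{\assOne\assmerge\assTwo}$ both project to $\emptyset$ on the branch not taken by $\assOne$) check out against the recursive definitions of composition and flattening. You also correctly isolate the two points that need genuine care: the re-indexing bijection between $\cassset_{\dlvaltreeTwo}^{\assTwo}$ and $\cassset_{\Monflatten{\Moncomp{\dlvaltreeOne}{\dlvaltreeTwo}{}{\{\assOne\}}}}^{\assOne\assmerge\assTwo}$ obtained by prepending $\assOne$, and the fact that when flattening descends past the root node $\dlvalOne$ the accumulated set becomes $\{\dlvalOne\}$, so one needs $\dlvalOne\notin\namesof{}(\dlvaltreeTwo)$ --- which indeed follows from the disjointness condition $\namesof{\assOne}(\dlvaltreeOne)\cap\namesof{}(\dlvaltreeTwo)=\emptyset$ required for the left-hand flattening to be well defined. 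The only cosmetic caveat is that the statement silently identifies the family member $\genericObjectOne_{\assThree'}$ for $\assThree'\in\cassset_{\dlvaltreeTwo}^{\assTwo}$ with $\genericObjectOne_{\assOne\assmerge\assThree'}$ on the right-hand side; making that identification explicit at the start of your proof would remove any ambiguity, but it does not affect correctness.
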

	
	\begin{lemma}\label{lem: meta-lifted-judgment}
		If $\Monvaljudgment{\Moncontext_\assOne}{\MonlcOne_\assOne}{\dlvaltreeTwo_\assOne}{\MonvalOne_\assOne}{\MontypeOne_\assOne}$ for all $\assOne\in\cassset_{\dlvaltreeOne}$, then $\Monvaljudgment{\Monflatten{\Moncomp{\dlvaltreeOne}{\Moncontext_\assOne}{\assOne}{}}}{\Monflatten{\Moncomp{\dlvaltreeOne}{\MonlcOne_\assOne}{\assOne}{}}}{\Monflatten{\Moncomp{\dlvaltreeOne}{\dlvaltreeTwo_\assOne}{\assOne}{}}}{\Monflatten{\Moncomp{\dlvaltreeOne}{\MonvalOne_\assOne}{\assOne}{}}}{\Monflatten{\Moncomp{\dlvaltreeOne}{\MontypeOne_\assOne}{\assOne}{}}}$.
	\end{lemma}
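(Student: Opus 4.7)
The plan is to reduce the goal to a pointwise verification, using the fact that the lifted value judgment $\Vdash_v$ is by definition the pointwise lifting of the ordinary value judgment $\vdash_v$ over the paths of its underlying lifting tree. Concretely, what the conclusion really asserts is that for every path $\assThree \in \cassset_{\Monflatten{\Moncomp{\dlvaltreeOne}{\dlvaltreeTwo_\assOne}{\assOne}{}}}$, the ordinary judgment
\[
  \valjudgment{\Monflatten{\Moncomp{\dlvaltreeOne}{\Moncontext_\assOne}{\assOne}{}}(\assThree)}{\Monflatten{\Moncomp{\dlvaltreeOne}{\MonlcOne_\assOne}{\assOne}{}}(\assThree)}{\Monflatten{\Moncomp{\dlvaltreeOne}{\MonvalOne_\assOne}{\assOne}{}}(\assThree)}{\Monflatten{\Moncomp{\dlvaltreeOne}{\MontypeOne_\assOne}{\assOne}{}}(\assThree)}
\]
holds. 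So the crux is to understand how paths in a flattened composition correspond to pairs of paths in the two constituent trees.

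First, I would prove the following auxiliary bijection, by induction on the structure of $\dlvaltreeOne$: every $\assThree \in \cassset_{\Monflatten{\Moncomp{\dlvaltreeOne}{\dlvaltreeTwo_\assOne}{\assOne}{}}}$ decomposes uniquely as $\assThree = \assOne \assmerge \assTwo$ with $\assOne \in \cassset_{\dlvaltreeOne}$ and $\assTwo \in \cassset_{\dlvaltreeTwo_\assOne}$. The base case ($\dlvaltreeOne = \emptytree$) is immediate, since the only path in $\cassset_\emptytree$ is $\emptyassignment$ and flattening reduces to $\dlvaltreeTwo_{\emptyassignment}$. The inductive case ($\dlvaltreeOne = \treeNode{\dlvalOne}{\dlvaltreeOne_0}{\dlvaltreeOne_1}$) follows by unfolding the definitions of composition and flattening and applying the induction hypothesis to each subtree, then observing that paths through the root variable $\dlvalOne$ split $\cassset_\dlvaltreeOne$ cleanly into the $\dlvalOne=0$ and $\dlvalOne=1$ cases.

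Second, I would prove a computational lemma, again by induction on $\dlvaltreeOne$: for any family $\{\MongenericObjectOne_\assOne\}_{\assOne \in \cassset_\dlvaltreeOne}$ with $\MongenericObjectOne_\assOne \in \condmonad_{\dlvaltreeTwo_\assOne}(\genericSetOne)$, and any $\assTwo \in \cassset_{\dlvaltreeTwo_\assOne}$, the evaluation satisfies
\[
  \Monflatten{\Moncomp{\dlvaltreeOne}{\MongenericObjectOne_\assOne}{\assOne}{}}(\assOne \assmerge \assTwo) = \MongenericObjectOne_\assOne(\assTwo).
\]
The induction proceeds by the same case analysis: in the leaf case the composition collapses to $\MongenericObjectOne_\emptyassignment$ whose flattening is then $\MongenericObjectOne_\emptyassignment$ itself, which agrees with the right-hand side; in the node case, the path's commitment $\assOne(\dlvalOne) = b$ selects the appropriate subtree, and the induction hypothesis finishes the job.

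Finally, combining the two lemmata, for each path $\assThree = \assOne \assmerge \assTwo$ in the flattened tree, the conclusion's pointwise typing judgment reduces to $\valjudgment{\Moncontext_\assOne(\assTwo)}{\MonlcOne_\assOne(\assTwo)}{\MonvalOne_\assOne(\assTwo)}{\MontypeOne_\assOne(\assTwo)}$, which is exactly what the hypothesis $\Monvaljudgment{\Moncontext_\assOne}{\MonlcOne_\assOne}{\dlvaltreeTwo_\assOne}{\MonvalOne_\assOne}{\MontypeOne_\assOne}$ provides when unfolded at $\assTwo$. The main obstacle is not conceptual but notational: keeping track of how paths are sliced between the outer and inner trees, and proving that flattening really is the right-inverse to this slicing. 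Because these computations are pure bookkeeping over the recursive definitions of $\Monflatten{\cdot}$, $\Moncomp{\cdot}{\cdot}{}{}$ and $\cassset_\cdot$, and because the hypothesis has precisely the right shape to match the decomposed judgment, no reasoning beyond the two inductions above is needed.
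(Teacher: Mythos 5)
Your proof is correct. The paper states this lemma without proof, treating it as immediate from the definition of the lifted judgment as a branchwise family of ordinary judgments; your two auxiliary inductions — the decomposition of each path of the flattened tree as $\assThree=\assOne\assmerge\assTwo$ and the evaluation identity $\Monflatten{\Moncomp{\dlvaltreeOne}{\MongenericObjectOne_\assOne}{\assOne}{}}(\assOne\assmerge\assTwo)=\MongenericObjectOne_\assOne(\assTwo)$ — supply exactly the bookkeeping that justifies this, and the argument goes through (the disjointness condition $\namesof{\assOne}(\dlvaltreeOne)\cap\namesof{}(\dlvaltreeTwo_\assOne)=\emptyset$ needed for the unions and the flattening to be well defined is already implicit in the statement's use of $\Monflatten{\cdot}$).
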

	
	\begin{lemma}[Substitution]\label{lem: substitution}
		If $\valjudgment{\pcontextOne,\contextOne'}{\lcOne'}{\valOne}{\typeOne}$ and $\Pi$ is a type derivation, then
		\begin{enumerate}
			\item If the conclusion of $\Pi$ is $\compjudgment{\pcontextOne,\contextOne,\varOne:\typeOne}{\lcOne}{\dlvaltreeOne}{\termOne}{\MontypeOne}$, then $\compjudgment{\pcontextOne,\contextOne,\contextOne'}{\lcOne,\lcOne'}{\dlvaltreeOne}{\termOne[\valOne/\varOne]}{\MontypeOne}$,
			\item If the conclusion of $\Pi$ is $\valjudgment{\pcontextOne,\contextOne,\varOne:\typeOne}{\lcOne}{\valTwo}{\typeTwo}$, then $\valjudgment{\pcontextOne,\contextOne,\contextOne'}{\lcOne,\lcOne'}{\valTwo[\valOne/\varOne]}{\typeTwo}$,
		\end{enumerate}
	\end{lemma}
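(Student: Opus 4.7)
The plan is to proceed by simultaneous induction on the derivation $\Pi$, performing case analysis on the last typing rule applied. The overall structure follows a standard linear-substitution argument: the substituted variable $\varOne$, being linear (or, in the degenerate case, a parameter variable shared in $\pcontextOne$), can appear in at most one branch of any linear context split, so the induction hypothesis is always available on that branch. Throughout we assume capture-avoiding substitution so that \emph{let}-bound or $\lambda$-bound variables do not clash with the free occurrences of $\varOne$ being rewritten.

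For the base cases \emph{var}, \emph{unit}, and \emph{label}, the linear component $\contextOne$ must be empty and substitution either returns $\valOne$ itself (when the leaf variable is $\varOne$, in which case the hypothesis on $\valOne$ yields the goal directly) or leaves the term unchanged; re-deriving the judgment under the enlarged context $\pcontextOne,\contextOne'$ and label context $\lcOne'$ is then immediate. For the non-splitting rules \emph{abs}, \emph{force}, \emph{return} and \emph{box}, the IH applies directly to the unique sub-derivation and the conclusion re-forms under the extended context. For the splitting rules \emph{app}, \emph{tuple}, \emph{dest} and \emph{apply}, we case on which of the two sub-derivations contains $\varOne$, apply the IH there, and absorb $\contextOne', \lcOne'$ into the corresponding side; the other sub-derivation is left unchanged. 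For \emph{lift} and \emph{circ} the premise has a parameter-only linear context, hence $\varOne$ can occur only as a parameter variable in $\pcontextOne$; the IH (if needed) then applies verbatim to the enclosed computation, which in the case of \emph{circ} is vacuous since the boxed circuit has no term-level free variables.

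The main obstacle is the \emph{let} rule, because the continuation $\MontermOne \in \condmonad_{\dlvaltreeOne}(\termset)$ and its type $\MongenericObjectTwo$ are lifted over the tree $\dlvaltreeOne$ produced by the first sub-term, so the second premise bundles together one typing derivation per root-to-leaf path $\assTwo \in \cassset_{\dlvaltreeOne}$. Here we must invoke the IH once per path to obtain a family of substituted derivations of the form $\compjudgment{\pcontextOne, \contextOne_2, \contextOne'}{\lcOne_2, \lcOne'}{\dlvaltreeTwo_\assTwo}{\MontermOne(\assTwo)[\valOne/\varOne]}{\MongenericObjectTwo(\assTwo)}$, assuming $\varOne$ lies in $\contextOne_2$ (the symmetric case $\varOne \in \contextOne_1$ is handled by a single IH call on the first premise and leaves the continuations untouched). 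These per-branch judgments are then lifted back to a single effectful judgment using Lemma \ref{lem: meta-lifted-judgment}. Since substitution is defined pointwise on lifted terms, it commutes with the $\Moncomp{\dlvaltreeOne}{\cdot}{\assTwo}{}$ construction, so this recombination is essentially bookkeeping rather than a new difficulty; reapplying \emph{let} with the enlarged linear context $\contextOne_1, \contextOne_2, \contextOne'$ and label context $\lcOne_1, \lcOne_2, \lcOne'$ then closes the case.
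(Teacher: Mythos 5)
Your overall strategy --- induction on the derivation $\Pi$ with case analysis on the last rule, recombining the per-path judgments in the \emph{let} case --- is the same as the paper's. However, there is a genuine gap in how you handle the splitting rules. You assert that $\varOne$, ``being linear (or, in the degenerate case, a parameter variable shared in $\pcontextOne$), can appear in at most one branch of any linear context split,'' and accordingly you apply the induction hypothesis to only one premise of \emph{app}, \emph{tuple}, \emph{dest}, \emph{apply} and \emph{let}, leaving ``the other sub-derivation unchanged.'' This is false precisely when $\typeOne$ is a parameter type: in that case $\varOne:\typeOne$ sits in the shared context $\pcontextOne$, which is duplicated into \emph{both} premises of every splitting rule, so $\varOne$ may occur free in both sub-terms and the substitution $\termOne[\valOne/\varOne]$ rewrites both of them. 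The untouched sub-derivation would then no longer type its (substituted) sub-term, and the case breaks down.

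This is exactly why the paper proves the claim ``separately for the cases in which $\valOne$ has parameter type and linear type.'' In the linear case your argument is fine. In the parameter case you must instead apply the induction hypothesis to \emph{every} premise, and you additionally need the observation that a value of parameter type is typed in a context where $\contextOne'$ contains only parameter variables and $\lcOne'=\emptylc$, so that adjoining $\contextOne',\lcOne'$ to several branches at once does not violate linearity of the recombined context. A minor further point: to reassemble the per-path judgments in the \emph{let} case you do not need Lemma~\ref{lem: meta-lifted-judgment} (which concerns lifted \emph{value} judgments); the lifted computational judgment $\Moncompjudgment{}{}{}{}{}$ is by definition the conjunction of its branch-wise instances, so the recombination is immediate.
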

	\begin{proof}
		We prove the claim separately for the cases in which $\valOne$ has parameter type and linear type. In both cases, we proceed by induction on the size of $\Pi$ and case analysis on its last rule.
	\end{proof}
}

\longversion{
	\begin{theorem}[Subject Reduction]\label{thm: SR}
		If $\lbscfgjudgment{\lcOne}{\dlvaltreeOne}{\dlvaltreeTwo}{\bscfgl{\circuitOne}{\assOne}{\termOne}}{\MontypeOne}{\MonlcOne}$ and $\exists\circuitTwo,\MonvalOne.\bscfgl{\circuitOne}{\assOne}{\termOne}\bseval\bscfgr{\circuitTwo}{\MonvalOne}$, then $\rbscfgjudgment{\lcOne}{\Monflatten{\Moncomp{\dlvaltreeOne}{\dlvaltreeTwo}{}{\{\assOne\}}}}{\dlvaltreeTwo}{\assOne}{\bscfgr{\circuitTwo}{\MonvalOne}}{\MontypeOne}{\MonlcOne\Monsplit{\assOne}\dlvaltreeTwo}$.
	\end{theorem}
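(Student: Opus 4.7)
The plan is to proceed by induction on the derivation of $\bscfgl{\circuitOne}{\assOne}{\termOne}\bseval\bscfgr{\circuitTwo}{\MonvalOne}$, with a case analysis on the last rule applied. For each case we invert the hypothesis on the well-typedness of the left configuration to extract (i) a circuit signature $\circjudgment{\circuitOne}{\dlvaltreeOne}{\lcOne}{\MonlcOne\Monmerge{\assOne}\lcOne'}$ and (ii) a computation typing $\compjudgment{\emptycontext}{\lcOne'}{\dlvaltreeTwo}{\termOne}{\MontypeOne}$, then further invert the computation typing to match the structure of $\termOne$ demanded by the evaluation rule. The goal in each case is to reassemble a well-typed right configuration, which amounts to exhibiting a circuit signature for $\circuitTwo$ and a value typing for $\MonvalOne$ whose outputs partition correctly along $\assOne$.

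The purely functional cases (\emph{app}, \emph{dest}, \emph{force}, \emph{return}) are handled uniformly: the evaluation rule reduces to evaluation of a smaller term obtained by the Substitution Lemma (Lemma \ref{lem: substitution}), which produces a well-typed computation of the same lifted type; the induction hypothesis then closes the case, observing that the underlying circuit is either unchanged (\emph{return}) or changed only through evaluation of the sub-term (and thus already handled inductively). For \emph{return}, the future lifting tree is $\emptytree$, so the flattening in the conclusion collapses to $\MonlcOne$ itself, and the value typing comes directly from the premise of the \emph{return} typing rule.

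The two interesting cases are \emph{let} and \emph{apply}. For \emph{apply}, the evaluation uses $\append$, which first picks a boxed circuit $\boxedcirc{\vec\labTwo}{\circuitTwo'}{\MonmvalOne'}{\dlvaltreeTwo'}\cong\boxedcirc{\vec\labOne}{\circuitTwo''}{\MonmvalOne}{\dlvaltreeTwo'}$ with fresh non-input labels, then renames the abstracted lifted variables. We close the case by chaining Proposition \ref{lem: apply-1} (preservation of typing under label equivalence), Lemmas \ref{lem: apply-2.1} and \ref{lem: apply-2.2} (preservation of signatures and value typings under renaming of lifted variables), and finally Lemma \ref{lem: apply-3} (the Insertion Signature Lemma), which glues the renamed appendix circuit onto $\circuitOne$ along branch $\assOne$. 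For \emph{let}, we first apply induction to $\bscfgl{\circuitOne}{\assOne}{\termOne}\bseval\bscfgr{\circuitOne_1}{\MonvalOne}$ to obtain a right configuration well-typed with overall lifting tree $\Monflatten{\Moncomp{\dlvaltreeOne}{\dlvaltreeTwo_1}{}{\{\assOne\}}}$, where $\dlvaltreeTwo_1$ is the lifting tree produced by $\termOne$. Then for each path $\assOne_i\in\cassset_{\dlvaltreeTwo_1}$ we invoke induction on the evaluation of $\MontermOne(\assOne_i)[\MonvalOne(\assOne_i)/\varOne]$, after using the Substitution Lemma on the corresponding branch of the lifted typing premise. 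The final step stitches the resulting signatures together using Lemma \ref{lem: comma extraction} (which commutes composition with flattening) and Lemma \ref{lem: meta-lifted-judgment} to lift the per-branch value typings back into a single lifted value typing for $\Monflatten{\Moncomp{\dlvaltreeOne}{\MonvalTwo_\assOne}{\assOne}{}}$.

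The main obstacle is bookkeeping in the \emph{let} case: the future tree of the source, $\dlvaltreeTwo$, is itself of the form $\Monflatten{\Moncomp{\dlvaltreeTwo_1}{\dlvaltreeTwo'_{\assOne_i}}{}{}}$, and one must carefully show that the partial assignments $\assOne\assmerge\assOne_i$ along which the continuations are evaluated remain well-typed, that freshness of the newly introduced lifted variables is preserved by the sequential evaluation of the $n$ branches, and that the final right-configuration outputs $\MonlcOne\Monsplit{\assOne}\dlvaltreeTwo$ can indeed be obtained by iterated flattening/composition of the per-branch output contexts. Given the lemmata listed above, this reduces to a careful but routine manipulation of lifted objects; with them discharged, every case of the induction closes.
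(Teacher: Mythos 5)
Your overall strategy coincides with the paper's: induction on the evaluation derivation, the same inversion of the left-configuration well-typedness, and the same chain of lemmata for the two hard cases (Proposition \ref{lem: apply-1} and Lemmas \ref{lem: apply-2.1}, \ref{lem: apply-2.2}, \ref{lem: apply-3} for \emph{apply}; the Substitution Lemma, Lemma \ref{lem: comma extraction} and Lemma \ref{lem: meta-lifted-judgment} for the iterated stitching in \emph{let}). Those two cases are laid out exactly as the paper does them, including the bookkeeping concerns you flag about the assignments $\assOne\assmerge\assOne_i$ and the iterated flattenings.

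There is, however, one genuine gap: you never address the \emph{box} case, and it is not covered by the machinery you describe. It is not one of the ``purely functional'' cases dispatched by substitution, since the evaluation rule for $\boxt{\mtypeOne}{}(\lift\termOne)$ spawns a \emph{sandboxed} evaluation $\bscfgl{\cinput{\lcOne}}{\emptyassignment}{\letin{\varOne}{\termOne}{\Monunit{\varOne\vec\labOne}}}$ on a fresh identity circuit rather than on $\circuitOne$. To close it you must (i) build a well-typed left configuration for that sandboxed term by combining the inverted \emph{lift} premise with a typing of $\varOne\vec\labOne$ obtained from $\freshlabels(\mtypeOne)$ via Proposition \ref{prop: m-judgment and val-judgment}; (ii) apply the induction hypothesis to the sandboxed evaluation, not to the ambient one; and (iii) invoke Proposition \ref{lem: m-type implies m-val} to conclude that the resulting lifted value really consists of M-values, so that the \emph{circ} rule can type the boxed circuit $\boxedcirc{\vec\labOne}{\circuitTwo'}{\MonmvalOne}{\dlvaltreeTwo'}$ returned in the right configuration. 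This last step also reconciles the lifting tree $\dlvaltreeTwo''$ appearing in the evaluation rule with the tree $\dlvaltreeTwo'$ coming from the type, which does not follow from anything you have stated. With this case added, the proof matches the paper's.
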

	\begin{proof}
		By induction on $\bscfgl{\circuitOne}{\assOne}{\termOne}\bseval\bscfgr{\circuitTwo}{\MonvalOne}$ and case analysis on the last rule used in its derivation. The \textit{app, dest, force} cases are straightforward, so we focus on the interesting cases of \textit{return, box, apply} and \textit{let}.
				
		\subparagraph*{Case of \textit{return}}
		In this case we have $\lbscfgjudgment{\lcOne}{\dlvaltreeOne}{\dlvaltreeTwo}{\bscfgl{\circuitOne}{\assOne}{\return\valOne}}{\MontypeOne}{\MonlcOne}$ and $\bscfgl{\circuitOne}{\assOne}{\return\valOne}\bseval\bscfgr{\circuitTwo}{\MonvalOne}$. By the definition of well-typedness we know that $\assOne\in\cassset_\dlvaltreeOne,\namesof{\assOne}(\dlvaltreeOne)\cap\namesof{}(\dlvaltreeTwo)=\emptyset,\circjudgment{\circuitOne}{\dlvaltreeOne}{\lcOne}{\MonlcOne\Monmerge{\assOne}\lcOne'}$ and $\compjudgment{\emptycontext}{\lcOne'}{\dlvaltreeTwo}{\return\valOne}{\MontypeOne}$. By inversion of the applicable rules we have
		\begin{equation}
			\inference[return]
			{}
			{\bscfgl{\circuitOne}{\assOne}{\return \valOne}\bseval\bscfgr{\circuitOne}{\Monunit{\valOne}}}
			\qquad
			\inference[return]
			{\valjudgment{\emptycontext}{\lcOne'}{\valOne}{\typeOne}}
			{\compjudgment{\emptycontext}{\lcOne'}{\emptytree}{\return \valOne}{\Monunit{\typeOne}}}
		\end{equation}
		Thus $\circuitTwo\equiv\circuitOne,\MonvalOne\equiv\Monunit{\valOne},\dlvaltreeTwo\equiv\emptytree$ and $\MontypeOne\equiv\Monunit{\typeOne}$. We immediately have $\MonlcOne\Monmerge{\assOne}\lcOne'=\Moncomp{\MonlcOne'}{\lcOne',\lcOne''}{}{\{\assOne\}} = \Monflatten{\Moncomp{\MonlcOne'}{\Monunit{\lcOne'},\Monunit{\lcOne''}}{}{\{\assOne\}}}$ and $\Monvaljudgment{\Monunit{\emptycontext}}{\Monunit{\lcOne'}}{\emptytree}{\Monunit{\valOne}}{\Monunit{\typeOne}}$. Furthermore, $\dlvaltreeOne\Monsplit{\assOne}\emptytree=\dlvaltreeOne$ and $\MonlcOne\Monsplit{\assOne}\emptytree=\MonlcOne$, so we conclude $\rbscfgjudgment{\lcOne}{\dlvaltreeOne}{}{\assOne}{\bscfgr{\circuitOne}{\Monunit{\valOne}}}{\Monunit{\typeOne}}{\MonlcOne}$.
		
		\subparagraph*{Case of \textit{box}}
		In this case we have $\lbscfgjudgment{\lcOne}{\dlvaltreeOne}{\dlvaltreeTwo}{\bscfgl{\circuitOne}{\assOne}{\boxt{\mtypeOne}{}\valOne}}{\MontypeOne}{\MonlcOne}$ and $\bscfgl{\circuitOne}{\assOne}{\boxt{\mtypeOne}{}\valOne}\bseval\bscfgr{\circuitTwo}{\MonvalOne}$. By the definition of well-typedness we know that $\assOne\in\cassset_\dlvaltreeOne,\namesof{\assOne}(\dlvaltreeOne)\cap\namesof{}(\dlvaltreeTwo)=\emptyset,\circjudgment{\circuitOne}{\dlvaltreeOne}{\lcOne}{\MonlcOne\Monmerge{\assOne}\lcOne'}$ and $\compjudgment{\emptycontext}{\lcOne'}{\dlvaltreeTwo}{\boxt{\mtypeOne}{}\valOne}{\MontypeOne}$. By inversion of the applicable rules we have
		\begin{equation}
			\inference[box]
		{
			(\lcTwo,\vec\labOne)=\freshlabels(\mtypeOne)
			&
			\MonmvalOne\in\condmonad_{\dlvaltreeTwo''}(\mvalset)
			\\
			\bscfgl{\cinput{\lcTwo}}{\emptyassignment}{\letin{\varOne}{\termOne}{ \Monunit{\varOne\vec\labOne}}} \bseval \bscfgr{\circuitTwo'}{\MonmvalOne}}
		{\bscfgl{\circuitOne}{\assOne}{\boxt{\mtypeOne}{}(\lift\termOne)} \bseval \bscfgr{\circuitOne}{\Monunit{\boxedcirc{\vec\labOne}{\circuitTwo'}{\MonmvalOne}{\dlvaltreeTwo''}}}}
		\qquad
		\inference[box]
		{
			\inference[lift]
			{\compjudgment{\emptycontext}{\emptycontext}{\emptytree}{\termOne}{\Monunit{\mtypeOne \multimap_{\dlvaltreeTwo'} \MonmtypeTwo}}}
			{\valjudgment{\emptycontext}{\emptycontext}{\lift\termOne}{\bang\Monunit{\mtypeOne \multimap_{\dlvaltreeTwo'} \MonmtypeTwo}}}
		}
		{\compjudgment{\emptycontext}{\emptycontext}{\emptytree}{\boxt{\mtypeOne}{\dlvalsubOne} (\lift\termOne)}{\Monunit{\circt{\dlvaltreeTwo'}(\mtypeOne,\MonmtypeTwo)}}}
		\end{equation}
		Thus $\valOne\equiv\lift\termOne,\circuitTwo\equiv\circuitOne,\MonvalOne\equiv\Monunit{\boxedcirc{\vec\labOne}{\circuitTwo'}{\MonmvalOne}{\dlvaltreeTwo''}},\lcOne'\equiv\emptycontext,\dlvaltreeTwo\equiv\emptytree$ and $\MontypeOne\equiv\Monunit{\circt{\dlvaltreeTwo'}(\mtypeOne,\MonmtypeTwo)}$. Because $\freshlabels(\mtypeOne)$ produces $(\lcTwo,\vec\labOne)$ such that $\mjudgment{\lcTwo}{\vec\labOne}{\mtypeOne}$, by Proposition \ref{prop: m-judgment and val-judgment} we have $\valjudgment{\emptycontext}{\lcTwo}{\vec\labOne}{\mtypeOne}$ and we can derive the following:
		\begin{equation}
			\inference[let]
		{
			\compjudgment{\emptycontext}{\emptycontext}{\emptytree}{\termOne}{\Monunit{\mtypeOne \multimap_{\dlvaltreeTwo'} \MonmtypeTwo}}
			&
			\inference[app]
			{
				\inference[var]{}
				{\valjudgment{\varOne:\mtypeOne\multimap_{\dlvaltreeTwo'}\MonmtypeTwo}{\emptycontext}{\varOne}{\mtypeOne\multimap_{\dlvaltreeTwo'}\MonmtypeTwo}}
				&
				\valjudgment{\emptycontext}{\lcTwo}{\vec\labOne}{\mtypeOne}			
			}
			{\compjudgment{\varOne:\mtypeOne\multimap_{\dlvaltreeTwo'}\MonmtypeTwo}{\lcTwo}{\dlvaltreeTwo'}{\varOne\vec\labOne}{\MonmtypeTwo}}
			\\
			\Monunit{\varOne\vec\labOne}\in\condmonad_\emptytree(\termset)
		}
		{\compjudgment{\emptycontext}{\lcTwo}{\dlvaltreeTwo'}{\letin{\varOne}{\termOne}{\Monunit{\varOne\vec\labOne}}}{\MonmtypeTwo}}
		\end{equation}
		Because we know $\circjudgment{\cinput{\lcTwo}}{\emptytree}{\lcTwo}{\Monunit{\lcTwo}}$, we can say $\lbscfgjudgment{\lcTwo}{\emptytree}{\dlvaltreeTwo'}{\bscfgl{\cinput{\lcTwo}}{\emptyassignment}{\letin{\varOne}{\termOne}{\Monunit{\varOne\vec\labOne}}}}{\MonmtypeTwo}{\Monunit{\emptycontext}}$. By inductive hypothesis we therefore get $\rbscfgjudgment{\lcTwo}{\dlvaltreeTwo'}{}{\emptyassignment}{\bscfgr{\circuitTwo'}{\MonmvalOne}}{\MonmtypeTwo}{\Moncomp{\dlvaltreeTwo'}{\emptycontext}{\assTwo}{}}$, which entails $\circjudgment{\circuitTwo'}{\dlvaltreeTwo'}{\lcTwo}{\MonlcTwo}$ and $\Monvaljudgment{\emptycontext}{\MonlcTwo}{\dlvaltreeTwo'}{\MonmvalOne}{\MonmtypeTwo}$. This implies $\dlvaltreeTwo'\equiv\dlvaltreeTwo''$ and, together with Proposition \ref{lem: m-type implies m-val}, $\MonmvalOne\in\condmonad_{\dlvaltreeTwo'}(\mvalset)$. We can now derive
		\begin{equation}
			\inference[circ]
		{
			\circjudgment{\circuitTwo'}{\dlvaltreeTwo'}{\lcTwo}{\MonlcTwo}
			&
			\valjudgment{\emptycontext}{\lcTwo}{\vec\labOne}{\mtypeOne}
			&
			\Monvaljudgment{\emptycontext}{\MonlcTwo}{\dlvaltreeTwo'}{\MonmvalOne}{\MonmtypeTwo}
		}
		{
			\valjudgment{\emptycontext}{\emptycontext}{\boxedcirc{\vec\labOne}{\circuitTwo'}{\MonmvalOne}{\dlvaltreeTwo'}}{\circt{\dlvaltreeTwo'}(\mtypeOne,\MonmtypeTwo)}
		}
		\end{equation}
		which allows us to conclude $\rbscfgjudgment{\lcOne}{\dlvaltreeOne}{}{\assOne}{\bscfgr{\circuitOne}{\Monunit{\boxedcirc{\vec\labOne}{\circuitTwo'}{\MonmvalOne}{\dlvaltreeTwo'}}}}{\Monunit{\circt{\dlvaltreeTwo'}(\mtypeOne,\MonmtypeTwo)}}{\MonlcOne}$.
		
		\subparagraph*{Case of \textit{apply}}
		In this case we have $\lbscfgjudgment{\lcOne}{\dlvaltreeOne}{\dlvaltreeTwo}{\bscfgl{\circuitOne}{\assOne}{\apply{\dlvalTwo_1,\dots,\dlvalTwo_n}(\valOne,\valTwo)}}{\MontypeOne}{\MonlcOne}$ as well as $\bscfgl{\circuitOne}{\assOne}{\apply{\dlvalTwo_1,\dots,\dlvalTwo_n}(\valOne,\valTwo)}\bseval\bscfgr{\circuitTwo}{\MonvalOne}$. By the definition of well-typedness we know $\assOne\in\cassset_\dlvaltreeOne,\namesof{\assOne}(\dlvaltreeOne)\cap\namesof{}(\dlvaltreeTwo)=\emptyset,\circjudgment{\circuitOne}{\dlvaltreeOne}{\lcOne}{\MonlcOne\Monmerge{\assOne}\lcOne'}$ and $\compjudgment{\emptycontext}{\lcOne'}{\dlvaltreeTwo}{\apply{\dlvalTwo_1,\dots,\dlvalTwo_n}(\valOne,\valTwo)}{\MontypeOne}$. By inversion of the applicable rules we have
		\begin{equation}
			\inference[apply]
		{
			(\circuitTwo,\MonvalOne) = \append(\circuitOne,\assOne,\vec\labTwo,\boxedcirc{\vec\labOne}{\circuitTwo'}{\MonmvalOne}{\dlvaltreeTwo'},\dlvalTwo_1,\dots,\dlvalTwo_n)
		}
		{
			\bscfgl{\circuitOne}{\assOne}{\apply{\dlvalTwo_1,\dots,\dlvalTwo_n}(\boxedcirc{\vec\labOne}{\circuitTwo'}{\MonmvalOne}{\dlvaltreeTwo'},\vec\labTwo)} \bseval \bscfgr{\circuitTwo}{\MonvalOne}
		}
		\end{equation}
		\begin{equation}
			\inference[apply]
		{
			\valjudgment{\emptycontext}{\emptycontext}{\boxedcirc{\vec\labOne}{\circuitTwo'}{\MonmvalOne}{\dlvaltreeTwo'}}{\circt{\dlvaltreeTwo'}(\mtypeOne,\MonmtypeTwo)}
			&
			\valjudgment{\emptycontext}{\lcOne'}{\vec\labTwo}{\mtypeOne}
			\\
			\namesof{}(\dlvaltreeTwo')=\{\dlvalOne_1,\dots,\dlvalOne_n\}
			&
			\alpharenamingOne=\dlvalTwo_1/\dlvalOne_1,\dots,\dlvalTwo_n/\dlvalOne_n
		}
		{
			\compjudgment{\emptycontext}{\lcOne'}{\alpharenamed{\dlvaltreeTwo'}{\alpharenamingOne}}{\apply{\dlvalTwo_1,\dots,\dlvalTwo_n}(\boxedcirc{\vec\labOne}{\circuitTwo'}{\MonmvalOne}{\dlvaltreeTwo'},\vec\labTwo)}{\alpharenamed{\MonmtypeTwo}{\alpharenamingOne}}
		}
		\end{equation}
		
		Thus $\valOne\equiv\boxedcirc{\vec\labOne}{\circuitTwo'}{\MonmvalOne}{\dlvaltreeTwo'},\valTwo\equiv\vec\labTwo,\dlvaltreeTwo\equiv\alpharenamed{\dlvaltreeTwo'}{\alpharenamingOne}$ and $\MontypeOne\equiv\alpharenamed{\MonmtypeTwo}{\alpharenamingOne}$. We know that the invocation of $\append(\circuitOne,\assOne,\vec\labTwo,\boxedcirc{\vec\labOne}{\circuitTwo'}{\MonmvalOne}{\dlvaltreeTwo'},\dlvalTwo_1,\dots,\dlvalTwo_n)$ follows the following procedure:
		\begin{enumerate}
			
			\item Finds $\boxedcirc{\vec\labTwo}{\circuitTwo''}{\MonmvalOne'}{\dlvaltreeTwo'} \cong \boxedcirc{\vec\labOne}{\circuitTwo'}{\MonmvalOne}{\dlvaltreeTwo'}$ such that all labels occurring in $\circuitTwo''$, but not in $\vec\labTwo$, are fresh in $\circuitOne$. By Proposition \ref{lem: apply-1} we know that $\valjudgment{\emptycontext}{\emptycontext}{\boxedcirc{\vec\labOne}{\circuitTwo'}{\MonmvalOne}{\dlvaltreeTwo'}}{\circt{\dlvaltreeTwo'}(\mtypeOne,\MonmtypeTwo)}$ implies $\valjudgment{\emptycontext}{\emptycontext}{\boxedcirc{\vec\labTwo}{\circuitTwo''}{\MonmvalOne'}{\dlvaltreeTwo'}}{\circt{\dlvaltreeTwo'}(\mtypeOne,\MonmtypeTwo)}$. By inversion we thus get
			\begin{equation}
				\inference[circ]
			{
				\circjudgment{\circuitTwo''}{\dlvaltreeTwo'}{\lcOne'}{\MonlcTwo}
				&
				\valjudgment{\emptycontext}{\lcOne'}{\vec\labTwo}{\mtypeOne}
				&
				\Monvaljudgment{\emptycontext}{\MonlcTwo}{\dlvaltreeTwo'}{\MonmvalOne'}{\MonmtypeTwo}
			}
			{
				\valjudgment{\emptycontext}{\emptycontext}{\boxedcirc{\vec\labTwo}{\circuitTwo''}{\MonmvalOne'}{\dlvaltreeTwo'}}{\circt{\dlvaltreeTwo'}(\mtypeOne,\MonmtypeTwo)}
			}
			\end{equation}
			
			\item Computes $\alpharenamed{\circuitTwo''}{\alpharenamingOne}$ and $\alpharenamed{\MonmvalOne''}{\alpharenamingOne}=\MonvalOne$. By Lemma \ref{lem: apply-2.1} we know $\circjudgment{\alpharenamed{\circuitTwo''}{\alpharenamingOne}}{\alpharenamed{\dlvaltreeTwo'}{\alpharenamingOne}}{\lcOne'}{\alpharenamed{\MonlcTwo}{\alpharenamingOne}}$ and by Lemma \ref{lem: apply-2.2} we get $\Monvaljudgment{\emptycontext}{\alpharenamed{\MonlcTwo}{\alpharenamingOne}}{\alpharenamed{\dlvaltreeTwo'}{\alpharenamingOne}}{\MonvalOne}{\alpharenamed{\MonmtypeTwo}{\alpharenamingOne}}$.
			
			\item Computes $\circuitOne\cconcat_\assOne \alpharenamed{\circuitTwo''}{\alpharenamingOne}=\circuitTwo$ and returns $(\circuitTwo,\MonvalOne)$.
			By Lemma \ref{lem: apply-3} we know $\circjudgment{\circuitTwo}{\Monflatten{\Moncomp{\dlvaltreeOne}{\alpharenamed{\dlvaltreeTwo'}{\alpharenamingOne}}{}{\{\assOne\}}}}{\lcOne}{\Monflatten{\Moncomp{\MonlcOne'}{\alpharenamed{\MonlcTwo}{\alpharenamingOne},\lcOne''}{}{\{\assOne\}}}}$, where $\MonlcOne=\Monflatten{\Moncomp{\MonlcOne'}{\lcOne''}{}{\{\assOne\}}}$.
		\end{enumerate}
	
		Since $\Monflatten{\Moncomp{\MonlcOne'}{\Moncomp{\alpharenamed{\dlvaltreeTwo'}{\alpharenamingOne}}{\lcOne''}{\assTwo}{}}{}{\{\assOne\}}} = \MonlcOne\Monsplit{\assOne}\alpharenamed{\dlvaltreeTwo'}{\alpharenamingOne}$, we conclude $\rbscfgjudgment{\lcOne}{\Monflatten{\Moncomp{\dlvaltreeOne}{\alpharenamed{\dlvaltreeTwo'}{\alpharenamingOne}}{}{\{\assOne\}}}}{}{\assOne}{\bscfgr{\circuitTwo}{\MonmvalOne}}{\alpharenamed{\MonmtypeTwo}{\alpharenamingOne}}{\MonlcOne\Monsplit{\assOne}\alpharenamed{\dlvaltreeTwo'}{\alpharenamingOne}}$.
		
		\subparagraph*{Case of \textit{let}}
		In this case we have $\lbscfgjudgment{\lcOne}{\dlvaltreeOne}{\dlvaltreeTwo}{\bscfgl{\circuitOne}{\assOne}{\letin{\varOne}{\termOne}{\MontermOne}}}{\MontypeOne}{\MonlcOne}$ and $\bscfgl{\circuitOne}{\assOne}{\letin{\varOne}{\termOne}{\MontermOne}}\bseval\bscfgr{\circuitTwo}{\MonvalOne}$. By the definition of well-typedness we know that $\assOne\in\cassset_\dlvaltreeOne,\namesof{\assOne}(\dlvaltreeOne)\cap\namesof{}(\dlvaltreeTwo)=\emptyset,\circjudgment{\circuitOne}{\dlvaltreeOne}{\lcOne}{\MonlcOne\Monmerge{\assOne}\lcOne'}$ and $\compjudgment{\emptycontext}{\lcOne'}{\dlvaltreeTwo}{\letin{\varOne}{\termOne}{\MontermOne}}{\MontypeOne}$. By inversion of the applicable rules we have
		
		\begin{equation}
			\inference[let]
		{\bscfgl{\circuitOne}{\assOne}{\termOne}\bseval\bscfgr{\circuitOne_1}{\MonvalOne_{\lindex}}
			&
			\MonvalOne_{\lindex}\in\condmonad_{\dlvaltreeTwo_{\lindex}}(\valset)
			&
			\MontermOne\in\condmonad_{\dlvaltreeTwo_{\lindex}}(\termset)
			\\
			\cassset_{\dlvaltreeTwo_{\lindex}} = \{\assOne_1,\dots,\assOne_n\}
			&
			\bscfgl{\circuitOne_i}{\assOne\cup\assOne_i}{\MontermOne(\assOne_i)[\MonvalOne_{\lindex}(\assOne_i)/\varOne]} \bseval \bscfgr{\circuitOne_{i+1}}{\MonvalTwo_i} \for i =1,\dots,n
		}
		{\bscfgl{\circuitOne}{\assOne}{\letin{\varOne}{\termOne}{\MontermOne}} \bseval \bscfgr{\circuitOne_{n+1}}{\Monflatten{\Moncomp{\dlvaltreeTwo_{\lindex}}{\MonvalTwo_i}{\assOne_i}{}}}}
		\end{equation}				
		\begin{equation}
			\inference[let]
		{\compjudgment{\emptycontext}{\lcOne'_{\lindex}}{\dlvaltreeTwo_{\lindex}}{\termOne}{\MontypeTwo}
			&
			\MontermOne\in\condmonad_{\dlvaltreeTwo_{\lindex}}(\termset)
			&
			\Moncompjudgment{\varOne:\MontypeTwo}{\lcOne'_{\rindex}}{\Moncomp{\dlvaltreeTwo_{\lindex}}{\dlvaltreeTwo_{\assOne_i}}{\assOne_i}{}}{\MontermOne}{\MongenericObjectTwo}
		}
		{\compjudgment{\emptycontext}{\lcOne'_{\lindex},\lcOne'_{\rindex}}{\Monflatten{\Moncomp{\dlvaltreeTwo_{\lindex}}{\dlvaltreeTwo_{\assOne_i}}{\assOne_i}{}}}{\letin{\varOne}{\termOne}{\MontermOne}}{\Monflatten{\MongenericObjectTwo}}}
		\end{equation}
		
		Note that, for the sake of brevity, when we know $\cassset_{\dlvaltreeTwo_{\lindex}} = \{\assOne_1,\dots,\assOne_n\}$ we write (with an abuse of notation) $\Moncomp{\dlvaltreeTwo_{\lindex}}{\MonvalTwo_i}{\assOne_i}{}$ as a sort of ``pattern matching'' to denote $\Moncomp{\dlvaltreeTwo_{\lindex}}{\MonvalTwo_\assTwo}{\assTwo}{\cassset_{\dlvaltreeTwo_{\lindex}}}$, where $\MonvalTwo_\assTwo=\MonvalTwo_i$ when $\assTwo=\assOne_i$. We have $\circuitTwo\equiv\circuitOne_{n+1},\MonvalOne\equiv\Monflatten{\Moncomp{\dlvaltreeTwo_{\lindex}}{\MonvalTwo_i}{\assOne_i}{}},\lcOne'\equiv(\lcOne'_{\lindex},\lcOne'_{\rindex}),\dlvaltreeTwo\equiv\Monflatten{\Moncomp{\dlvaltreeTwo_{\lindex}}{\dlvaltreeTwo_{\assOne_i}}{\assOne_i}{}}$ and $\MontypeOne\equiv\Monflatten{\MongenericObjectTwo}$. Suppose $\MonlcOne=\Moncomp{\MonlcOne'}{\lcOne''}{}{\{\assOne\}}$ and thus $\MonlcOne\Monmerge{\assOne}(\lcOne'_{\lindex},\lcOne'_{\rindex})=\Moncomp{\MonlcOne'}{\lcOne'_{\lindex},\lcOne'_{\rindex},\lcOne''}{}{\{\assOne\}}$. By the definition of well-typed left configuration we have $\lbscfgjudgment{\lcOne}{\dlvaltreeOne}{\dlvaltreeTwo_{\lindex}}{\bscfgl{\circuitOne}{\assOne}{\termOne}}{\MontypeTwo}{\Moncomp{\MonlcOne'}{\lcOne'_{\rindex},\lcOne''}{}{\{\assOne\}}}$.
		By inductive hypothesis, we therefore get $\rbscfgjudgment{\lcOne}{\Monflatten{\Moncomp{\dlvaltreeOne}{\dlvaltreeTwo_{\lindex}}{}{\{\assOne\}}}}{}{\assOne}{\bscfgr{\circuitOne_1}{\MonvalOne_{\lindex}}}{\MontypeTwo}{\Monflatten{\Moncomp{\MonlcOne'}{\Monconst{(\lcOne'_{\rindex},\lcOne'')}{\dlvaltreeTwo_{\lindex}}}{}{\{\assOne\}}}}$, where $\Monconst{(\lcOne'_{\rindex},\lcOne'')}{\dlvaltreeTwo_{\lindex}}$ is shorthand for $\Moncomp{\dlvaltreeTwo_{\lindex}}{\lcOne'_{\rindex},\lcOne''}{\assTwo}{}$. This entails $\circjudgment{\circuitOne_1}{\Monflatten{\Moncomp{\dlvaltreeOne}{\dlvaltreeTwo_{\lindex}}{}{\{\assOne\}}}}{\lcOne}{\Monflatten{\Moncomp{\MonlcOne'}{\MonlcTwo,\lcOne'_{\rindex},\lcOne''}{}{\{\assOne\}}}}$ for some $\MonlcTwo\in\condmonad_{\dlvaltreeTwo_{\lindex}}(\lcset)$ such that $\Monvaljudgment{\emptycontext}{\MonlcTwo}{\dlvaltreeTwo_{\lindex}}{\MonvalOne_{\lindex}}{\MontypeTwo}$. Knowing this and the last premise of the \textit{let} typing rule above, by expanding the lifted judgments and applying Lemma \ref{lem: substitution} $n$ times, we get
		
		\begin{equation}\label{eq: lifted-sub}
			\forall\assOne_i\in\cassset_{\dlvaltreeTwo_{\lindex}}.\compjudgment{\emptycontext}{\MonlcTwo(\assOne_i),\lcOne'_{\rindex}}{\dlvaltreeTwo_{\assOne_i}}{\MontermOne(\assOne_i)[\MonvalOne_{\lindex}(\assOne_i)/\varOne]}{\MongenericObjectTwo(\assOne_i)}.
		\end{equation}
		
		Consider now the configuration $\bscfgl{\circuitOne_1}{\assOne\assmerge\assOne_1}{\MontermOne(\assOne_1)[\MonvalOne_{\lindex}(\assOne_1)/\varOne]}$. Because by hypothesis $\assOne\in\cassset_{\dlvaltreeOne}$ and $\assOne_1\in\cassset_{\dlvaltreeTwo_{\lindex}}$, we know $\assOne\assmerge\assOne_1\in\cassset_{\Monflatten{\Moncomp{\dlvaltreeOne}{\dlvaltreeTwo_{\lindex}}{}{\{\assOne\}}}}$. Furthermore, we have $\namesof{\assOne\cup\assOne_1}(\Monflatten{\Moncomp{\dlvaltreeOne}{\dlvaltreeTwo_{\lindex}}{}{\{\assOne\}}}) \cap \namesof{}(\dlvaltreeTwo_{\assOne_1}) = (\namesof{\assOne}(\dlvaltreeOne)\cup\namesof{\assOne_1}(\dlvaltreeTwo_{\lindex}))\cap\namesof{}(\dlvaltreeTwo_{\assOne_1}) = \emptyset$, since $\namesof{\assOne}(\dlvaltreeOne)\cap\namesof{}(\dlvaltreeTwo_{\assOne_1}) \subseteq \namesof{\assOne}(\dlvaltreeOne)\cap\namesof{}(\dlvaltreeTwo)=\emptyset$ by hypothesis and $\namesof{\assOne_1}(\dlvaltreeTwo_{\lindex})\cap\namesof{}(\dlvaltreeTwo_{\assOne_1})=\emptyset$ since the flattening of $\Moncomp{\dlvaltreeTwo_{\lindex}}{\dlvaltreeTwo_{\assOne_i}}{\assOne_i}{}$ is well-defined in the \textit{let} typing rule above.
		Lastly, consider again $\circjudgment{\circuitOne_1}{\Monflatten{\Moncomp{\dlvaltreeOne}{\dlvaltreeTwo_{\lindex}}{}{\{\assOne\}}}}{\lcOne}{\Monflatten{\Moncomp{\MonlcOne'}{\MonlcTwo,\lcOne'_{\rindex},\lcOne''}{}{\{\assOne\}}}}$ and rewrite $\Monflatten{\Moncomp{\MonlcOne'}{\MonlcTwo,\lcOne'_{\rindex},\lcOne''}{}{\{\assOne\}}}$ first as $\Monflatten{\Moncomp{\MonlcOne'}{\Moncomp{\dlvaltreeTwo_{\lindex}}{\MonlcTwo(\assOne_j),\lcOne'_{\rindex},\lcOne''}{\assOne_j}{\{\assOne_j\mid j=1\dots n\}}}{}{\{\assOne\}}}$ and then as $\Moncomp{\Monflatten{\Moncomp{\MonlcOne'}{\dlvaltreeTwo_{\lindex}}{}{\{\assOne\}}}}{\MonlcTwo(\assOne_j),\lcOne'_{\rindex},\lcOne''}{\assOne\assmerge\assOne_j}{\{\assOne\assmerge\assOne_j\mid j=1\dots n\}}$, by Lemma \ref{lem: comma extraction}.
		This, together with Equation \ref{eq: lifted-sub}, lets us say
		
		\begin{equation}
			\begin{aligned}
				\lbscfgjudgment{\lcOne}{\Monflatten{\Moncomp{\dlvaltreeOne}{\dlvaltreeTwo_{\lindex}}{}{\{\assOne\}}}}{\dlvaltreeTwo_{\assOne_1}}{&\bscfgl{\circuitOne_1}{\assOne\assmerge\assOne_1}{\MontermOne(\assOne_1)[\MonvalOne_{\lindex}(\assOne_1)/\varOne]}}{\MongenericObjectTwo(\assOne_1)}{\\&\Moncomp{\Moncomp{\Monflatten{\Moncomp{\MonlcOne'}{\dlvaltreeTwo_{\lindex}}{}{\{\assOne\}}}}{\lcOne''}{}{\{\assOne\assmerge\assOne_1\}}}{\MonlcTwo(\assOne_j),\lcOne'_{\rindex},\lcOne''}{\assOne\assmerge\assOne_j}{\{\assOne\assmerge\assOne_j\mid j=2\dots n\}}}.
			\end{aligned}
		\end{equation}
	
		For brevity, let $\dlvaltreeOne_{\lindex} = \Monflatten{\Moncomp{\dlvaltreeOne}{\dlvaltreeTwo_{\lindex}}{}{\{\assOne\}}}$ and $\MonlcOne_{\lindex} = \Monflatten{\Moncomp{\MonlcOne'}{\dlvaltreeTwo_{\lindex}}{}{\{\assOne\}}}$ in the rest of this proof case. By inductive hypothesis, we get
		
		\begin{equation}
			\rbscfgjudgment{\lcOne}{\Monflatten{\Moncomp{\dlvaltreeOne_{\lindex}}{\dlvaltreeTwo_{\assOne_1}}{}{\{\assOne\assmerge\assOne_1\}}}}{}{\assOne\assmerge\assOne_1}{\bscfgr{\circuitOne_2}{\MonvalTwo_1}}{\MongenericObjectTwo(\assOne_1)}{\Monflatten{\Moncomp{\Moncomp{\MonlcOne_{\lindex}}{\Monconst{\lcOne''}{\dlvaltreeTwo_{\assOne_1}}}{}{\{\assOne\assmerge\assOne_1\}}}{\MonlcTwo(\assOne_j),\lcOne'_{\rindex},\lcOne''}{\assOne\assmerge\assOne_j}{\{\assOne\assmerge\assOne_j\mid j=2\dots n\}}}}.
		\end{equation}
	
		By the definition of well-typed right configuration, this entails
		
		\begin{equation}
			\circjudgment{\circuitOne_2}{\Monflatten{\Moncomp{\dlvaltreeOne_{\lindex}}{\dlvaltreeTwo_{\assOne_1}}{}{\{\assOne\assmerge\assOne_1\}}}}{\lcOne}{\Monflatten{\Moncomp{\Moncomp{\MonlcOne_{\lindex}}{\MonlcTwo'_1,\lcOne''}{}{\{\assOne\assmerge\assOne_1\}}}{\MonlcTwo(\assOne_j),\lcOne'_{\rindex},\lcOne''}{\assOne\assmerge\assOne_j}{\{\assOne\assmerge\assOne_j\mid j=2\dots n\}}}}
		\end{equation}
		for some $\MonlcTwo'_1\in\condmonad_{\dlvaltreeTwo_{\assOne_1}}(\lcset)$ such that $\Monvaljudgment{\emptycontext}{\MonlcTwo'_1}{\dlvaltreeTwo_{\assOne_1}}{\MonvalTwo_1}{\MongenericObjectTwo(\assOne_1)}$.
		Since the various label contexts $(\MonlcTwo(\assOne_j),\lcOne'_{\rindex},\lcOne'')$, where $j\neq 1$, are not tree-like structures and are composed on branches disjoint from $\assOne_1$, we can factor them out of the outermost flattening, thus rewriting the output of $\circuitOne_2$ as 
		\begin{equation}
			\Moncomp{\Monflatten{\Moncomp{\MonlcOne_{\lindex}}{\MonlcTwo'_1,\lcOne''}{}{\{\assOne\assmerge\assOne_1\}}}}{\MonlcTwo(\assOne_j),\lcOne'_{\rindex},\lcOne''}{\assOne\assmerge\assOne_j}{\{\assOne\assmerge\assOne_j\mid j=2\dots n\}}.
		\end{equation}
		
		Moving on, consider the configuration $\bscfgl{\circuitOne_2}{\assOne\assmerge\assOne_2}{\MontermOne(\assOne_2)[\MonvalOne_{\lindex}(\assOne_2)/\varOne]}$. We know that $\assOne_1$ and $\assOne_2$ are distinct, so because by hypothesis $\assOne\in\cassset_{\dlvaltreeOne}$ and $\assOne_2\in\cassset_{\dlvaltreeTwo_{\lindex}}$, we know $\assOne\assmerge\assOne_2\in\cassset_{\Monflatten{\Moncomp{\dlvaltreeOne_{\lindex}}{\dlvaltreeTwo_{\assOne_1}}{}{\{\assOne\assmerge\assOne_1\}}}}$. Furthermore, we have $\namesof{\assOne\cup\assOne_2}(\Monflatten{\Moncomp{\dlvaltreeOne_{\lindex}}{\dlvaltreeTwo_{\assOne_1}}{}{\{\assOne\assmerge\assOne_1\}}}) \cap \namesof{}(\dlvaltreeTwo_{\assOne_2}) = \namesof{\assOne\cup\assOne_2}(\Monflatten{\Moncomp{\Monflatten{\Moncomp{\dlvaltreeOne}{\dlvaltreeTwo_{\lindex}}{}{\{\assOne\}}}}{\dlvaltreeTwo_{\assOne_1}}{}{\{\assOne\assmerge\assOne_1\}}}) \cap \namesof{}(\dlvaltreeTwo_{\assOne_2}) = (\namesof{\assOne}(\dlvaltreeOne)\cup\namesof{\assOne_2}(\dlvaltreeTwo_{\lindex}))\cap\namesof{}(\dlvaltreeTwo_{\assOne_2}) = \emptyset$.
		This, together with $\circuitOne_2$'s signature and Equation \ref{eq: lifted-sub}, lets us say
 \begin{equation}
		 	\begin{aligned}
		 		\lbscfgjudgment{\lcOne}{\Monflatten{\Moncomp{\dlvaltreeOne_{\lindex}}{\dlvaltreeTwo_{\assOne_1}}{}{\{\assOne\assmerge\assOne_1\}}}}{\dlvaltreeTwo_{\assOne_2}}{&\bscfgl{\circuitOne_2}{\assOne\assmerge\assOne_2}{\MontermOne(\assOne_2)[\MonvalOne_{\lindex}(\assOne_2)/\varOne]}}{\MongenericObjectTwo(\assOne_2)}{\\&\Moncomp{\Moncomp{\Monflatten{\Moncomp{\MonlcOne_{\lindex}}{\MonlcTwo'_1,\lcOne''}{}{\{\assOne\assmerge\assOne_1\}}}}{\lcOne''}{}{\{\assOne\assmerge\assOne_2\}}}{\MonlcTwo(\assOne_j),\lcOne'_{\rindex},\lcOne''}{\assOne\assmerge\assOne_j}{\{\assOne\assmerge\assOne_j\mid j=3\dots n\}}}
		 	\end{aligned}
 \end{equation}
		and by inductive hypothesis we get
		\begin{equation}
			\begin{aligned}
				\rbscfgjudgment{\lcOne}{\Monflatten{\Moncomp{\Monflatten{\Moncomp{\dlvaltreeOne_{\lindex}}{\dlvaltreeTwo_{\assOne_1}}{}{\{\assOne\assmerge\assOne_1\}}}}{\dlvaltreeTwo_{\assOne_2}}{}{\{\assOne\assmerge\assOne_2\}}}}{}{\dlvaltreeTwo_{\assOne_2}}{&\bscfgr{\circuitOne_3}{\MonvalTwo_2}}{\MongenericObjectTwo(\assOne_2)}{\\&\Moncomp{\Monflatten{\Moncomp{\Monflatten{\Moncomp{\MonlcOne_{\lindex}}{\MonlcTwo'_1,\lcOne''}{}{\{\assOne\assmerge\assOne_1\}}}}{\Monconst{\lcOne''}{\dlvaltreeTwo_{\assOne_2}}}{}{\{\assOne\assmerge\assOne_2\}}}}{\MonlcTwo(\assOne_j),\lcOne'_{\rindex},\lcOne''}{\assOne\assmerge\assOne_j}{\{\assOne\assmerge\assOne_j\mid j=3\dots n\}}},
			\end{aligned}
		\end{equation}
		which entails that $\circuitOne_3$ has signature
		\begin{equation}
			\begin{aligned}
				\circjudgment{\circuitOne_3&}{\Monflatten{\Moncomp{\Monflatten{\Moncomp{\dlvaltreeOne_{\lindex}}{\dlvaltreeTwo_{\assOne_1}}{}{\{\assOne\assmerge\assOne_1\}}}}{\dlvaltreeTwo_{\assOne_2}}{}{\{\assOne\assmerge\assOne_2\}}}}{\\&\lcOne}{\Moncomp{\Monflatten{\Moncomp{\Monflatten{\Moncomp{\MonlcOne_{\lindex}}{\MonlcTwo'_1,\lcOne''}{}{\{\assOne\assmerge\assOne_1\}}}}{\MonlcTwo'_2,\lcOne''}{}{\{\assOne\assmerge\assOne_2\}}}}{\MonlcTwo(\assOne_j),\lcOne'_{\rindex},\lcOne''}{\assOne\assmerge\assOne_j}{\{\assOne\assmerge\assOne_j\mid j=3\dots n\}}}
			\end{aligned}
		\end{equation}
		for some $\MonlcTwo'_2\in\condmonad_{\dlvaltreeTwo_{\assOne_2}}(\lcset)$ such that $\Monvaljudgment{\emptycontext}{\MonlcTwo'_2}{\dlvaltreeTwo_{\assOne_2}}{\MonvalTwo_2}{\MongenericObjectTwo(\assOne_2)}$.
		
		Now, when we consider the configuration $\bscfgl{\circuitOne_3}{\assOne\assmerge\assOne_3}{\MontermOne(\assOne_3)[\MonvalOne_{\lindex}(\assOne_3)/\varOne]}$ we can apply to it the same reasoning that we applied to $\bscfgl{\circuitOne_2}{\assOne\assmerge\assOne_2}{\MontermOne(\assOne_2)[\MonvalOne_{\lindex}(\assOne_2)/\varOne]}$ after we considered $\bscfgl{\circuitOne_1}{\assOne\assmerge\assOne_1}{\MontermOne(\assOne_1)[\MonvalOne_{\lindex}(\assOne_1)/\varOne]}$. In fact, we can apply this kind of reasoning pattern iteratively to the remaining $\bscfgl{\circuitOne_i}{\assOne\assmerge\assOne_i}{\MontermOne(\assOne_i)[\MonvalOne_{\lindex}(\assOne_i)/\varOne]}$, for $i=3\dots n$. After $n-2$ additional iterations, we end up with
		\begin{equation}
			\circjudgment
		{\circuitOne_{n+1}}
		{\Monflatten{\Moncomp{\Monflatten{\cdots\Monflatten{\Moncomp{\dlvaltreeOne_{\lindex}}{\dlvaltreeTwo_{\assOne_1}}{}{\{\assOne\assmerge\assOne_1\}}}\cdots}}{\dlvaltreeTwo_{\assOne_n}}{}{\{\assOne\assmerge\assOne_n\}}}}
		{\lcOne}
		{\Monflatten{\Moncomp{\Monflatten{\cdots\Monflatten{\Moncomp{\MonlcOne_{\lindex}}{\MonlcTwo'_1,\lcOne''}{}{\{\assOne\assmerge\assOne_1\}}}\cdots}}{\MonlcTwo'_n,\lcOne''}{}{\{\assOne\assmerge\assOne_n\}}}}.
		\end{equation}
	
		Since all the assignments of the form $\assOne\assmerge\assOne_j$ are by definition distinct, we can rewrite $\circuitOne_{n+1}$'s signature by replacing the $n$ flattening operations with just one final flattening. If we also expand $\dlvaltreeOne_{\lindex}$ and $\MonlcOne_{\lindex}$ back, we obtain the following signature:
		\begin{equation}
			\circjudgment
		{\circuitOne_{n+1}}
		{\Monflatten{\Moncomp{\Monflatten{\Moncomp{\dlvaltreeOne}{\dlvaltreeTwo_{\lindex}}{}{\{\assOne\}}}}{\dlvaltreeTwo_{\assOne_i}}{\assOne\assmerge\assOne_i}{\{\assOne\assmerge\assOne_i \mid i=1\dots n\}}}}
		{\lcOne}
		{\Monflatten{\Moncomp{\Monflatten{\Moncomp{\MonlcOne'}{\dlvaltreeTwo_{\lindex}}{}{\{\assOne\}}}}{\MonlcTwo'_i,\lcOne''}{\assOne\assmerge\assOne_i}{\{\assOne\assmerge\assOne_i \mid i=1\dots n\}}}}.
		\end{equation}
		
		Next, by Lemma \ref{lem: comma extraction}, we further rewrite the signature above as
		\begin{equation}
			\circjudgment
		{\circuitOne_{n+1}}
		{\Monflatten{\Monflatten{\Moncomp{\dlvaltreeOne}{\Moncomp{\dlvaltreeTwo_{\lindex}}{\dlvaltreeTwo_{\assOne_i}}{\assOne_i}{\cassset_{\dlvaltreeTwo_{\lindex}}}}{}{\{\assOne\}}}}}
		{\lcOne}
		{\Monflatten{\Monflatten{\Moncomp{\MonlcOne'}{\Moncomp{\dlvaltreeTwo_{\lindex}}{\MonlcTwo'_i,\lcOne''}{\assOne_i}{\cassset_{\dlvaltreeTwo_{\lindex}}}}{}{\{\assOne\}}}}}.
		\end{equation}
		
		Furthermore, because $\dlvaltreeOne$ and $\MonlcOne'$ only have tree-like leaves where we explicitly attached them, we can write
		\begin{equation}
			\circjudgment
		{\circuitOne_{n+1}}
		{\Monflatten{\Moncomp{\dlvaltreeOne}{\Monflatten{\Moncomp{\dlvaltreeTwo_{\lindex}}{\dlvaltreeTwo_{\assOne_i}}{\assOne_i}{\cassset_{\dlvaltreeTwo_{\lindex}}}}}{}{\{\assOne\}}}}
		{\lcOne}
		{\Monflatten{\Moncomp{\MonlcOne'}{\Monflatten{\Moncomp{\dlvaltreeTwo_{\lindex}}{\MonlcTwo'_i,\lcOne''}{\assOne_i}{\cassset_{\dlvaltreeTwo_{\lindex}}}}}{}{\{\assOne\}}}}.
		\end{equation}
		
		Lastly, because $\dlvaltreeTwo\equiv\Monflatten{\Moncomp{\dlvaltreeTwo_{\lindex}}{\dlvaltreeTwo_{\assOne_i}}{\assOne_i}{\cassset_{\dlvaltreeTwo_{\lindex}}}}$, we can write
		\begin{equation}
			\circjudgment
		{\circuitOne_{n+1}}
		{\Monflatten{\Moncomp{\dlvaltreeOne}{\dlvaltreeTwo}{}{\{\assOne\}}}}
		{\lcOne}
		{\Monflatten{\Moncomp{\MonlcOne'}{\Monflatten{\Moncomp{\dlvaltreeTwo_{\lindex}}{\MonlcTwo'_i}{\assOne_i}{\cassset_{\dlvaltreeTwo_{\lindex}}}},\lcOne''^\dlvaltreeTwo}{}{\{\assOne\}}}},
		\end{equation}
		
		where $\Monvaljudgment{\emptycontext}{\MonlcTwo'_i}{\dlvaltreeTwo_{\assOne_i}}{\MonvalTwo_i}{\MongenericObjectTwo(\assOne_i)}$ for $i=1\dots n$. By Lemma \ref{lem: meta-lifted-judgment}, this allows us to say $\Monvaljudgment{\emptycontext}{\Monflatten{\Moncomp{\dlvaltreeTwo_{\lindex}}{\MonlcTwo'_i}{\assOne_i}{\cassset_{\dlvaltreeTwo_{\lindex}}}}}{\dlvaltreeTwo}{\Monflatten{\Moncomp{\dlvaltreeTwo_{\lindex}}{\MonvalTwo_i}{\assOne_i}{\cassset_{\dlvaltreeTwo_{\lindex}}}}}{\Monflatten{\MongenericObjectTwo}}$, by which we conclude
		\begin{equation}
			\rbscfgjudgment{\lcOne}{\Monflatten{\Moncomp{\dlvaltreeOne}{\dlvaltreeTwo}{}{\{\assOne\}}}}{}{\assOne}{\bscfgr{\circuitOne_{n+1}}{\Monflatten{\Moncomp{\dlvaltreeTwo_{\lindex}}{\MonvalTwo_i}{\assOne_i}{}}}}{\Monflatten{\MongenericObjectTwo}}{\Monflatten{\Moncomp{\MonlcOne'}{\lcOne''^\dlvaltreeTwo}{}{\{\assOne\}}}}.
		\end{equation}
	\end{proof}
}
		
\longversion{
	
	\subsection{Progress}
	
	To prove the progress result, we avail ourselves of the following lemmata.
	
	\begin{lemma}\label{lem: varset after alpha-renaming}
		For any renaming of lifted labels $\alpharenamingOne$ we have $\namesof{\alpharenamed{\assOne}{\alpharenamingOne}}(\alpharenamed{\dlvaltreeOne}{\alpharenamingOne})=\alpharenamingOne[\namesof{\assOne}(\dlvaltreeOne)]$.
	\end{lemma}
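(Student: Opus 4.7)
The plan is to proceed by structural induction on the lifting tree $\dlvaltreeOne$. The key definitional facts that will drive the argument are: (i) $\alpharenamed{\emptytree}{\alpharenamingOne} = \emptytree$ and $\alpharenamed{(\treeNode{\dlvalOne}{\dlvaltreeOne_0}{\dlvaltreeOne_1})}{\alpharenamingOne} = \treeNode{\alpharenamingOne(\dlvalOne)}{\alpharenamed{\dlvaltreeOne_0}{\alpharenamingOne}}{\alpharenamed{\dlvaltreeOne_1}{\alpharenamingOne}}$; (ii) by definition, $\alpharenamed{\assOne}{\alpharenamingOne} = \assOne \circ \invert{\alpharenamingOne}$, so in particular $\alpharenamed{\assOne}{\alpharenamingOne}(\alpharenamingOne(\dlvalOne)) = \assOne(\dlvalOne)$ as either both sides are defined and equal or both are undefined; (iii) since $\alpharenamingOne$ is a bijection, the image operator distributes over unions and $\alpharenamingOne[\{\dlvalOne\}\cup S] = \{\alpharenamingOne(\dlvalOne)\}\cup\alpharenamingOne[S]$.

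The base case $\dlvaltreeOne = \emptytree$ is immediate: both sides equal $\emptyset$ by the first clause in the definition of $\namesof{\assOne}$.

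For the inductive step, let $\dlvaltreeOne = \treeNode{\dlvalOne}{\dlvaltreeOne_0}{\dlvaltreeOne_1}$. Using (i) we rewrite the left-hand side as
\begin{equation*}
\namesof{\alpharenamed{\assOne}{\alpharenamingOne}}\bigl(\treeNode{\alpharenamingOne(\dlvalOne)}{\alpharenamed{\dlvaltreeOne_0}{\alpharenamingOne}}{\alpharenamed{\dlvaltreeOne_1}{\alpharenamingOne}}\bigr),
\end{equation*}
and then split into three subcases according to whether $\assOne(\dlvalOne)$ equals $0$, equals $1$, or is undefined. Fact (ii) ensures that $\alpharenamed{\assOne}{\alpharenamingOne}(\alpharenamingOne(\dlvalOne))$ falls into exactly the corresponding case, so the definition of $\namesof{\cdot}$ selects the matching branch(es) on both sides. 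Applying the induction hypothesis to $\dlvaltreeOne_0$ and $\dlvaltreeOne_1$ yields $\namesof{\alpharenamed{\assOne}{\alpharenamingOne}}(\alpharenamed{\dlvaltreeOne_i}{\alpharenamingOne}) = \alpharenamingOne[\namesof{\assOne}(\dlvaltreeOne_i)]$ for $i\in\{0,1\}$, and then (iii) lets us pull $\alpharenamingOne[\cdot]$ out of the union with the singleton $\{\alpharenamingOne(\dlvalOne)\}$, reconstructing exactly $\alpharenamingOne[\namesof{\assOne}(\dlvaltreeOne)]$.

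There is no real obstacle here: the lemma is a routine commutation of a renaming with a recursively defined variable-collection function. The only thing that deserves care is the third subcase (where $\assOne(\dlvalOne)$ is undefined), since we must check that $\alpharenamed{\assOne}{\alpharenamingOne}(\alpharenamingOne(\dlvalOne))$ is undefined as well; this follows immediately from $\alpharenamed{\assOne}{\alpharenamingOne} = \assOne \circ \invert{\alpharenamingOne}$ and the bijectivity of $\alpharenamingOne$.
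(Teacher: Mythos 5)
Your proof is correct: the structural induction on $\dlvaltreeOne$, combined with the observation that $\alpharenamed{\assOne}{\alpharenamingOne}(\alpharenamingOne(\dlvalOne))=\assOne(\dlvalOne)$ (both defined and equal, or both undefined) so that the same branch of the definition of $\namesof{}$ is selected on each side, is exactly the routine argument needed. The paper states this lemma without proof, so there is nothing to compare against beyond noting that your induction is the intended one and handles all three subcases (including the undefined one) properly.
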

	
	\begin{lemma}\label{lem: same labels tuple and context}
		If $\valjudgment{\pcontextOne}{\lcOne}{\vec\labOne}{\mtypeOne}$, then $\dom(\lcOne)$ contains all and only the labels that occur in $\vec\labOne$.
	\end{lemma}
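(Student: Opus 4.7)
The plan is to proceed by induction on the structure of the M-value $\vec\labOne$, which is equivalent to an induction on the derivation of $\valjudgment{\pcontextOne}{\lcOne}{\vec\labOne}{\mtypeOne}$ thanks to the syntactic restriction on M-values (unit, label, or tuple). In each case, I would invert the last value typing rule, read off the shape of $\lcOne$, and then appeal to the inductive hypothesis in the tuple case.

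In the base case $\vec\labOne = \unitv$, only the \emph{unit} rule can conclude $\valjudgment{\pcontextOne}{\lcOne}{\unitv}{\mtypeOne}$, forcing $\lcOne = \emptylc$; since $\unitv$ mentions no labels, the claim is trivial. In the base case $\vec\labOne = \labOne$ the only applicable rule is \emph{label} (the \emph{var} rule does not apply, since labels and variables come from disjoint sets), which yields $\lcOne = \labOne : \wtypeOne$, matching exactly the single label occurring in $\vec\labOne$.

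For the inductive case $\vec\labOne = \tuple{\vec{\labOne_1}, \vec{\labOne_2}}$, the last rule must be \emph{tuple}, giving $\mtypeOne = \typeOne_1 \otimes \typeOne_2$, $\lcOne = \lcOne_1, \lcOne_2$ (disjoint union), and premises $\valjudgment{\pcontextOne, \contextOne_1}{\lcOne_1}{\vec{\labOne_1}}{\typeOne_1}$ and $\valjudgment{\pcontextOne, \contextOne_2}{\lcOne_2}{\vec{\labOne_2}}{\typeOne_2}$. Since $\mtypeOne$ is an M-type, both $\typeOne_1$ and $\typeOne_2$ are M-types, and the induction hypothesis applied to each premise tells us that $\dom(\lcOne_i)$ contains precisely the labels occurring in $\vec{\labOne_i}$. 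Taking the disjoint union on both sides yields the desired conclusion.

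The only subtle point, which I would isolate as the main obstacle, is making sure that the induction hypothesis can legitimately be invoked on the sub-derivations: the statement of the lemma ranges over $\pcontextOne$ (a purely parameter context), so one must observe that the tuple rule preserves this shape, i.e. that $\pcontextOne, \contextOne_i$ can be treated as a parameter context for the purposes of the IH. This is immediate because the lemma's conclusion does not depend on $\contextOne_i$ at all (it constrains only $\lcOne$), so in fact I would state and prove the lemma directly for an arbitrary typing context $\contextOne$ in place of $\pcontextOne$, which removes this friction and is what the induction actually establishes.
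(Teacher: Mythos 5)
Your proof is correct; the paper states this lemma without giving an explicit proof, and your induction on the structure of the M-value is exactly the intended routine argument. The two points you isolate — that the \emph{var} rule cannot apply to a label because labels and variables are drawn from disjoint sets, and that the statement should be generalized to an arbitrary typing context so the inductive hypothesis applies to the premises of the \emph{tuple} rule — are the right ones to make the induction go through cleanly.
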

	
	\begin{lemma}\label{lem: same type implies renaming possible}
		If $\valjudgment{\pcontextOne}{\lcOne}{\vec\labOne}{\mtypeOne}$ and $\valjudgment{\pcontextOne}{\lcTwo}{\vec\labTwo}{\mtypeOne}$, then there exists a renaming of labels $\renamingOne$ such that $\renamed{\vec\labOne}{\renamingOne}=\vec\labTwo$.
	\end{lemma}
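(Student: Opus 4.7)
The plan is to proceed by induction on the M-type $\mtypeOne$ (equivalently, on the derivation of $\valjudgment{\pcontextOne}{\lcOne}{\vec\labOne}{\mtypeOne}$), leveraging the fact that the M-value typing rules in Figure \ref{table: mvalues and mtypes} are fully syntax-directed: the shape of a well-typed M-value is determined by its M-type, and only the identity of the labels occurring within it is free to vary.

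For the base case $\mtypeOne = \unitt$, inversion forces $\vec\labOne = \vec\labTwo = \unitv$, and the identity renaming trivially works. For $\mtypeOne = \wtypeOne$, inversion forces $\vec\labOne = \labOne$ and $\vec\labTwo = \labTwo$ for some labels $\labOne, \labTwo$; the transposition swapping $\labOne$ and $\labTwo$ (or the identity permutation when $\labOne = \labTwo$) is the desired $\renamingOne$. For the inductive case $\mtypeOne = \mtypeOne_1 \otimes \mtypeOne_2$, inversion yields $\vec\labOne = \tuple{\vec\labOne_1, \vec\labOne_2}$ and $\vec\labTwo = \tuple{\vec\labTwo_1, \vec\labTwo_2}$ with $\valjudgment{\pcontextOne}{\lcOne_i}{\vec\labOne_i}{\mtypeOne_i}$ and $\valjudgment{\pcontextOne}{\lcTwo_i}{\vec\labTwo_i}{\mtypeOne_i}$ for $i = 1, 2$, where $\lcOne = \lcOne_1, \lcOne_2$ and $\lcTwo = \lcTwo_1, \lcTwo_2$ are disjoint unions. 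The induction hypothesis supplies renamings $\renamingOne_1, \renamingOne_2$ such that $\renamed{\vec\labOne_i}{\renamingOne_i} = \vec\labTwo_i$ for $i = 1, 2$.

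The main technical obstacle is that a renaming of labels is, by definition, a permutation of the \emph{entire} set $\labelset$, so in the inductive step one cannot simply ``glue'' $\renamingOne_1$ and $\renamingOne_2$ together as if they were partial maps; the naive union need not be injective on labels irrelevant to $\vec\labOne$ and $\vec\labTwo$. To get around this, I first extract from $\renamingOne_1$ and $\renamingOne_2$ only their restrictions to the finite sets of labels that actually occur in $\vec\labOne_1$ and $\vec\labOne_2$. By Lemma \ref{lem: same labels tuple and context}, these sets coincide with $\dom(\lcOne_1)$ and $\dom(\lcOne_2)$ respectively, hence are disjoint; the images, similarly, populate the disjoint sets $\dom(\lcTwo_1)$ and $\dom(\lcTwo_2)$. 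The two restrictions therefore combine into a single well-defined partial bijection between finite subsets of $\labelset$, which extends to a permutation $\renamingOne$ of the countably infinite set $\labelset$ (pairing up leftover labels arbitrarily, e.g.\ fixing them pointwise outside the finite bijection and then adjusting finitely many pairs to restore injectivity). By construction $\renamed{\vec\labOne_i}{\renamingOne} = \vec\labTwo_i$ for $i = 1, 2$, so $\renamed{\tuple{\vec\labOne_1, \vec\labOne_2}}{\renamingOne} = \tuple{\vec\labTwo_1, \vec\labTwo_2}$, completing the induction.
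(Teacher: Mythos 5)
The paper states this lemma without an explicit proof, so there is nothing to compare against; your argument is correct and is clearly the intended one. Induction on $\mtypeOne$ with inversion is sound because the M-value typing rules are syntax-directed, and your handling of the only delicate point --- that renamings are total permutations of $\labelset$, so the two inductively obtained renamings must be restricted to the (disjoint, by linearity of the label contexts) label sets of the two components before being merged and re-extended to a permutation --- is exactly right.
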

	
	\begin{theorem}[Progress]\label{thm: Progress}
		If $\lbscfgjudgment{\lcOne}{\dlvaltreeOne}{\dlvaltreeTwo}{\bscfgl{\circuitOne}{\assOne}{\termOne}}{\MontypeOne}{\MonlcOne}$, then either $\exists\circuitTwo,\MonvalOne.\bscfgl{\circuitOne}{\assOne}{\termOne}\bseval\bscfgr{\circuitTwo}{\MonvalOne}$ or $\bscfgl{\circuitOne}{\assOne}{\termOne}\bsdiverge$.
	\end{theorem}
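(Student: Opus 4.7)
We argue by classical dichotomy over the coinductively defined predicate $\bsdiverge$: either there exist $\circuitTwo, \MonvalOne$ with $\bscfgl{\circuitOne}{\assOne}{\termOne} \bseval \bscfgr{\circuitTwo}{\MonvalOne}$, in which case we are done, or no such pair exists, and we exhibit $\bscfgl{\circuitOne}{\assOne}{\termOne} \bsdiverge$ by coinduction. Concretely, let $S$ be the set of well-typed left configurations (in the sense of the hypothesis of the theorem) for which no evaluation to a right configuration exists. We show that every member of $S$ matches the conclusion of some divergence rule whose premise-configurations are themselves in $S$; by the coinduction principle, every element of $S$ diverges.

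The argument proceeds by case analysis on the shape of $\termOne$, using canonical forms induced by the value-typing rules of Figure~\ref{fig: type rules}, the Substitution Lemma (Lemma~\ref{lem: substitution}), and Subject Reduction (Theorem~\ref{thm: SR}). The terms $\return \valOne$ and $\apply{\dlvalTwo_1,\dots,\dlvalTwo_n}(\valOne,\valTwo)$ are never in $S$, since each matches a unique evaluation rule with no sub-evaluation premise that always fires ($\append$ is well-defined thanks to well-typedness and the structural invariants of boxed circuits, cf.\ Proposition~\ref{lem: apply-1}). For the redex forms $(\lambda \varOne_\typeOne.\termThree)\valOne$, $\force(\lift \termThree)$, and $\letin{\tuple{\varOne,\varTwo}}{\tuple{\valOne,\valTwo}}{\termThree}$, the unique applicable evaluation rule has a single sub-premise obtained by substitution; the Substitution Lemma ensures this sub-configuration is well-typed, and its failure to evaluate (inherited from the outer configuration being in $S$) places it again in $S$, so the corresponding divergence rule closes the loop. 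For $\boxt{\mtypeOne}{}(\lift \termThree)$, the sandboxed configuration $\bscfgl{\cinput{\lcOne}}{\emptyassignment}{\letin{\varOne}{\termThree}{\Monunit{\varOne\vec\labOne}}}$ built via $\freshlabels(\mtypeOne)$ is well-typed (as verified in the box case of Subject Reduction), and its divergence is inherited through the box-divergence rule.

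The principal obstacle is the $\letin{\varOne}{\termThree}{\MontermOne}$ case, which requires handling the branch-indexed structure of the let evaluation rule. If $\bscfgl{\circuitOne}{\assOne}{\termThree}$ itself fails to evaluate, then by inversion of the \textit{let} typing rule it is itself a well-typed configuration in $S$, and let-now applies. Otherwise, Subject Reduction yields a well-typed right configuration $\bscfgr{\circuitOne_1}{\MonvalOne}$ with $\MonvalOne \in \condmonad_{\dlvaltreeOne'}(\valset)$ for some tree $\dlvaltreeOne'$, which in turn (together with the last premise of the \textit{let} typing rule and the Substitution Lemma) yields well-typed continuation configurations $\bscfgl{\circuitOne_i}{\assOne \cup \assOne_i}{\MontermOne(\assOne_i)[\MonvalOne(\assOne_i)/\varOne]}$ along each branch $\assOne_i \in \cassset_{\dlvaltreeOne'}$. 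If every branch evaluated, iterating Subject Reduction across the branches would assemble a full \textit{let} evaluation of the outer term, contradicting membership in $S$; hence some first branch $j$ fails to evaluate, placing its configuration in $S$ and triggering let-then. The bookkeeping is precisely the one used in the \textit{let} case of Subject Reduction, and is where the bulk of the technical work lies.
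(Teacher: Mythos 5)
Your proposal is correct and follows essentially the same route as the paper: the paper likewise proves the equivalent claim that a well-typed, non-evaluating configuration diverges, proceeding by coinduction and case analysis on $\termOne$, using Subject Reduction to propagate well-typedness through the sandboxed $\justboxt$ configuration and the branch-indexed continuations of $\mathsf{let}$, and showing $\append$ is always defined so that $\apply{}$ (like $\return$) always evaluates. Your explicit formulation via the backward-closed set $S$ is just the unfolded coinduction principle the paper uses implicitly; the only detail you elide is that in the $\justboxt$ case one must also invoke the fact that closed values of M-type are M-values (Proposition~\ref{lem: m-type implies m-val}) to discharge the side condition $\MonmvalOne\in\condmonad_{\dlvaltreeOne}(\mvalset)$ of the evaluation rule.
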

	\begin{proof}
		We prove the equivalent claim that if $\lbscfgjudgment{\lcOne}{\dlvaltreeOne}{\dlvaltreeTwo}{\bscfgl{\circuitOne}{\assOne}{\termOne}}{\MontypeOne}{\MonlcOne}$ and $\nexists\circuitTwo,\MonvalOne.\bscfgl{\circuitOne}{\assOne}{\termOne}\bseval\bscfgr{\circuitTwo}{\MonvalOne}$, then $\bscfgl{\circuitOne}{\assOne}{\termOne}\bsdiverge$. We proceed by coinduction and case analysis on $\termOne$. The cases in which $\termOne$ is an application, a destructuring or an use of return or force are fairly straightforward, so we focus on the cases in which $\termOne$ is a use of $\justboxt,\apply{}$, or a let statement.
		
		\subparagraph*{Case $\termOne\equiv\boxt{\mtypeOne}{}\valOne.$}
		In this case we have $\lbscfgjudgment{\lcOne}{\dlvaltreeOne}{\dlvaltreeTwo}{\bscfgl{\circuitOne}{\assOne}{\boxt{\mtypeOne}{}\valOne}}{\MontypeOne}{\MonlcOne}$ and $\nexists\circuitTwo,\MonvalOne.\bscfgl{\circuitOne}{\assOne}{\boxt{\mtypeOne}{}\valOne}\bseval\bscfgr{\circuitTwo}{\MonvalOne}$. By the definition of well-typedness we know that $\assOne\in\cassset_\dlvaltreeOne,\namesof{\assOne}(\dlvaltreeOne)\cap\namesof{}(\dlvaltreeTwo)=\emptyset,\circjudgment{\circuitOne}{\dlvaltreeOne}{\lcOne}{\MonlcOne\Monmerge{\assOne}\lcOne'}$ and $\compjudgment{\emptycontext}{\lcOne'}{\dlvaltreeTwo}{\boxt{\mtypeOne}{}\valOne}{\MontypeOne}$. By inversion of the \textit{box} and \textit{lift} rules we have
		\begin{equation}
			\inference[box]
		{
			\inference[lift]
			{\compjudgment{\emptycontext}{\emptycontext}{\emptytree}{\termTwo}{\Monunit{\mtypeOne \multimap_{\dlvaltreeTwo'} \MonmtypeTwo}}}
			{\valjudgment{\emptycontext}{\emptycontext}{\lift\termTwo}{\bang\Monunit{\mtypeOne \multimap_{\dlvaltreeTwo'} \MonmtypeTwo}}}
		}
		{\compjudgment{\emptycontext}{\emptycontext}{\emptytree}{\boxt{\mtypeOne}{\dlvalsubOne} (\lift\termTwo)}{\Monunit{\circt{\dlvaltreeTwo'}(\mtypeOne,\MonmtypeTwo)}}}
		\end{equation}
		where $\valOne\equiv\lift\termTwo$ for some $\termTwo$ because $\lift\termTwo$ is the only kind of value that can be given type $\bang\Monunit{\mtypeOne \multimap_{\dlvaltreeTwo'} \MonmtypeTwo}$ in an empty typing context. We also have $\lcOne'\equiv\emptycontext,\dlvaltreeTwo\equiv\emptytree$ and $\MontypeOne\equiv\Monunit{\circt{\dlvaltreeTwo'}(\mtypeOne,\MonmtypeTwo)}$. Let $(\lcTwo,\vec\labOne)=\freshlabels(\mtypeOne)$, which implies $\mjudgment{\lcTwo}{\vec\labOne}{\mtypeOne}$ and therefore $\valjudgment{\emptycontext}{\lcTwo}{\vec\labOne}{\mtypeOne}$ by Proposition \ref{prop: m-judgment and val-judgment isomorphism}. We need to prove that $\bscfgl{\cinput{\lcTwo}}{\emptyassignment}{\letin{\varOne}{\termTwo}{\Monunit{\varOne\vec\labOne}}}$ is well-typed and cannot be evaluated. We know $\circjudgment{\cinput{\lcTwo}}{\emptytree}{\lcTwo}{\Monunit{\lcTwo}}$ and we derive the following:
		\begin{equation}
			\inference[let]
		{
			\compjudgment{\emptycontext}{\emptycontext}{\emptytree}{\termTwo}{\Monunit{\mtypeOne \multimap_{\dlvaltreeTwo'} \MonmtypeTwo}}
			&
			\inference[app]
			{
				\inference[var]{}
				{\valjudgment{\varOne:\mtypeOne\multimap_{\dlvaltreeTwo'}\MonmtypeTwo}{\emptycontext}{\varOne}{\mtypeOne\multimap_{\dlvaltreeTwo'}\MonmtypeTwo}}
				&
				\valjudgment{\emptycontext}{\lcTwo}{\vec\labOne}{\mtypeOne}			
			}
			{\compjudgment{\varOne:\mtypeOne\multimap_{\dlvaltreeTwo'}\MonmtypeTwo}{\lcTwo}{\dlvaltreeTwo'}{\varOne\vec\labOne}{\MonmtypeTwo}}
			\\
			\Monunit{\varOne\vec\labOne}\in\condmonad_\emptytree(\termset)
		}
		{\compjudgment{\emptycontext}{\lcTwo}{\dlvaltreeTwo'}{\letin{\varOne}{\termTwo}{\Monunit{\varOne\vec\labOne}}}{\MonmtypeTwo}}
		\end{equation}
		
		by which we can say $\lbscfgjudgment{\lcTwo}{\emptytree}{\dlvaltreeTwo'}{\bscfgl{\cinput{\lcTwo}}{\emptyassignment}{\letin{\varOne}{\termTwo}{\Monunit{\varOne\vec\labOne}}}}{\MonmtypeTwo}{\Monunit{\emptycontext}}$. Now, either there exist $\circuitTwo',\MonvalTwo$ such that $\bscfgl{\cinput{\lcTwo}}{\emptyassignment}{\letin{\varOne}{\termTwo}{\Monunit{\varOne\vec\labOne}}}\bseval\bscfgr{\circuitTwo'}{\MonvalTwo}$ or not. In the latter case, we immediately conclude the claim by coinduction, using the \textit{box} divergence rule. In the former case, by Theorem \ref{thm: SR} we have $\rbscfgjudgment{\lcTwo}{\dlvaltreeTwo'}{}{\emptyassignment}{\bscfgr{\circuitTwo'}{\MonvalTwo}}{\MonmtypeTwo}{\Moncomp{\dlvaltreeTwo'}{\emptycontext}{\assTwo}{}}$. This entails, among other things, that $\MonvalTwo$ is given M-type $\MonmtypeTwo$. By Proposition \ref{lem: m-type implies m-val}, this implies $\MonvalTwo\in\condmonad_{\dlvaltreeTwo'}(\mvalset)$, which allows us to derive $\bscfgl{\circuitOne}{\assOne}{\boxt{\mtypeOne}{}(\lift \termTwo)}\bseval\bscfgr{\circuitOne}{\boxedcirc{\vec\labOne}{\circuitTwo'}{\MonvalTwo}{\dlvaltreeTwo'}}$, contradicting the hypothesis.
		
		\subparagraph*{Case $\termOne\equiv\apply{\dlvalTwo_1,\dots,\dlvalTwo_n}(\valOne,\valTwo).$}
		In this case we have $\lbscfgjudgment{\lcOne}{\dlvaltreeOne}{\dlvaltreeTwo}{\bscfgl{\circuitOne}{\assOne}{\apply{\dlvalTwo_1,\dots,\dlvalTwo_n}(\valOne,\valTwo)}}{\MontypeOne}{\MonlcOne}$ and $\nexists\circuitTwo,\MonvalOne.\bscfgl{\circuitOne}{\assOne}{\apply{\dlvalTwo_1,\dots,\dlvalTwo_n}(\valOne,\valTwo)}\bseval\bscfgr{\circuitTwo}{\MonvalOne}$. By the definition of well-typedness we know that $\assOne\in\cassset_\dlvaltreeOne,\namesof{\assOne}(\dlvaltreeOne)\cap\namesof{}(\dlvaltreeTwo)=\emptyset,\circjudgment{\circuitOne}{\dlvaltreeOne}{\lcOne}{\MonlcOne\Monmerge{\assOne}\lcOne'}$ and $\compjudgment{\emptycontext}{\lcOne'}{\dlvaltreeTwo}{\apply{\dlvalTwo_1,\dots,\dlvalTwo_n}(\valOne,\valTwo)}{\MontypeOne}$. By inversion of the \textit{apply} and \textit{circ} rules we have
		\begin{equation}
			\inference[apply]
		{
			\inference[circ]
			{
				\circjudgment{\circuitTwo'}{\dlvaltreeTwo'}{\lcTwo}{\MonlcTwo}
				&
				\valjudgment{\emptycontext}{\lcTwo}{\vec\labOne}{\mtypeOne}
				&
				\Monvaljudgment{\emptycontext}{\MonlcTwo}{\dlvaltreeTwo'}{\MonmvalOne}{\MonmtypeTwo}
			}
			{
				\valjudgment{\emptycontext}{\emptycontext}{\boxedcirc{\vec\labOne}{\circuitTwo'}{\MonmvalOne}{\dlvaltreeTwo'}}{\circt{\dlvaltreeTwo'}(\mtypeOne,\MonmtypeTwo)}
			}
			&
			\valjudgment{\emptycontext}{\lcOne'}{\vec\labTwo}{\mtypeOne}
			\\
			\namesof{}(\dlvaltreeTwo')=\{\dlvalOne_1,\dots,\dlvalOne_n\}
			&
			\alpharenamingOne=\dlvalTwo_1/\dlvalOne_1,\dots,\dlvalTwo_n/\dlvalOne_n
		}
		{
			\compjudgment{\emptycontext}{\lcOne'}{\alpharenamed{\dlvaltreeTwo'}{\alpharenamingOne}}{\apply{\dlvalTwo_1,\dots,\dlvalTwo_n}(\boxedcirc{\vec\labOne}{\circuitTwo'}{\MonmvalOne}{\dlvaltreeTwo'},\vec\labTwo)}{\alpharenamed{\MonmtypeTwo}{\alpharenamingOne}}
		}
		\end{equation}
		where $\valOne\equiv\boxedcirc{\vec\labOne}{\circuitTwo'}{\MonmvalOne}{\dlvaltreeTwo'}$ and $\valTwo\equiv\vec\labTwo$ for some $\vec\labOne,\circuitTwo',\MonmvalOne,\dlvaltreeTwo'$ and $\vec\labTwo$, because $\boxedcirc{\vec\labOne}{\circuitTwo'}{\MonmvalOne}{\dlvaltreeTwo'}$ and $\vec\labTwo$ are the only kinds of value that can respectively be given types $\circt{\dlvaltreeTwo'}(\mtypeOne,\MonmtypeTwo)$ and $\mtypeOne$ in an empty typing context. We also have $\dlvaltreeTwo\equiv\alpharenamed{\dlvaltreeTwo'}{\alpharenamingOne}$ and $\MontypeOne\equiv\alpharenamed{\MonmtypeTwo}{\alpharenamingOne}$. We prove that the claim is vacuously true by showing that $\bscfgl{\circuitOne}{\assOne}{\apply{\dlvalTwo_1,\dots,\dlvalTwo_n}(\boxedcirc{\vec\labOne}{\circuitTwo'}{\MonmvalOne}{\dlvaltreeTwo'},\vec\labTwo)}$ always evaluates to some configuration. To do this, it suffices to show that $\append(\circuitOne,\assOne,\vec\labTwo,\boxedcirc{\vec\labOne}{\circuitTwo'}{\MonmvalOne}{\dlvaltreeTwo'},\dlvalTwo_1,\dots,\dlvalTwo_n)$ is always defined under the hypothesis. We start with the preconditions of $\append$. So far we already know $\circjudgment{\circuitOne}{\dlvaltreeOne}{\lcOne}{\MonlcOne\Monmerge{\assOne}\lcOne'},\namesof{}(\dlvaltreeTwo')=\{\dlvalOne_1,\dots,\dlvalOne_n\},\assOne\in\cassset_\dlvaltreeOne$ and, with the help of Lemma \ref{lem: varset after alpha-renaming}, $\namesof{\assOne}(\dlvaltreeOne)\cap\namesof{}(\dlvaltreeTwo) = \namesof{\assOne}(\dlvaltreeOne)\cap\namesof{}(\alpharenamed{\dlvaltreeTwo'}{\alpharenamingOne}) = \namesof{\assOne}(\dlvaltreeOne)\cap\alpharenamingOne[\{\dlvalOne_1,\dots,\dlvalOne_n\}] = \namesof{\assOne}(\dlvaltreeOne)\cap\{\dlvalTwo_1,\dots,\dlvalTwo_n\} = \emptyset$. We also know that the labels in $\vec\labTwo$ all occur in $(\MonlcOne\Monmerge{\assOne}\lcOne')(\assOne)$, since by definition $(\MonlcOne\Monmerge{\assOne}\lcOne')(\assOne)=\lcOne',\lcOne''$ for some $\lcOne''$ and by Lemma \ref{lem: same labels tuple and context} the judgment $\valjudgment{\emptycontext}{\lcOne'}{\vec\labTwo}{\mtypeOne}$ implies that all the labels in $\vec\labTwo$ occur in $\lcOne'$. Lastly, we know that $\dlvalTwo_1,\dots,\dlvalTwo_n$ are distinct, since otherwise $\alpharenamingOne$ would not be a valid permutation and the variable renamings in the \textit{apply} rule would be undefined. The preconditions are met, so we proceed with $\append$'s steps:
		\begin{enumerate}
			\item $\append$ must find $\boxedcirc{\vec\labTwo}{\circuitTwo''}{\MonmvalOne'}{\dlvaltreeTwo'} \cong \boxedcirc{\vec\labOne}{\circuitTwo'}{\MonmvalOne}{\dlvaltreeTwo'}$ such that the labels that occur in $\circuitTwo''$, but not in $\vec\labTwo$, are fresh in $\circuitOne$. Because $\vec\labOne$ and $\vec\labTwo$ share the same type $\mtypeOne$, by Lemma \ref{lem: same type implies renaming possible} we know that there exists a label renaming $\renamingOne$ such that $\renamed{\vec\labOne}{\renamingOne}=\vec\labTwo$. Next, let $\labsubOne_{\vec\labOne}$ be the set of labels that occur in $\vec\labOne$, $\labsubOne_{\circuitTwo'}$ the set of labels that occur in $\circuitTwo'$, $\labsubOne_{\vec\labTwo}$ the set of labels that occur in $\vec\labTwo$ and $\labsubOne_{\circuitOne}$ the set of labels that occur in $\circuitOne$. We have $\labsubOne_{\vec\labTwo}\subseteq\labsubOne_{\circuitOne}$ and $\labsubOne_{\vec\labOne}\subseteq\labsubOne_{\circuitTwo'}$. Because $\circuitOne$ and $\circuitTwo'$ are finite syntactic entities, it is always possible to find a set of labels $\labsubOne_f$ of cardinality $|\labsubOne_f| = |\labsubOne_{\circuitTwo'}\setminus\labsubOne_{\vec\labOne}\,|$ such that $\labsubOne_f\cap(\labsubOne_{\circuitOne}\cup\labsubOne_{\circuitTwo'})=\emptyset$. Let $f$ be any bijection between $\labsubOne_{\circuitTwo'}\setminus\labsubOne_{\vec\labOne}$ and $\labsubOne_f$, and let $\renamingOne'$ be any permutation of $\labsubOne_{\circuitTwo'}\cup\labsubOne_{\vec\labTwo}\cup\labsubOne_f$ satisfying the following: 
			\begin{equation}
				\renamingOne'(\labOne)=\renamingOne(\labOne) \textnormal{ for all } \labOne\in\labsubOne_{\vec\labOne}, \quad \renamingOne'(\labOne)=f(\labOne) \textnormal{ for all } \labOne\in\labsubOne_{\circuitTwo'}\setminus\labsubOne_{\vec\labOne}.
			\end{equation}
			
			We can then extend $\renamingOne'$ from $\labsubOne_{\circuitTwo'}\cup\labsubOne_{\vec\labTwo}\cup\labsubOne_f$ to $\labelset$ by ``padding'' it with the identity function. We obtain a renaming of labels $\renamingOne''$ such that $\renamed{\vec\labOne}{\renamingOne''} = \vec\labTwo$. If we further set $\circuitTwo''=\renamed{\circuitTwo'}{\renamingOne''}$ and $\MonmvalOne'=\renamed{\MonmvalOne}{\renamingOne''}$ we get $\boxedcirc{\vec\labTwo}{\circuitTwo''}{\MonmvalOne'}{\dlvaltreeTwo'} \cong \boxedcirc{\vec\labOne}{\circuitTwo'}{\MonmvalOne}{\dlvaltreeTwo'}$, where $\circuitTwo''$ has the required freshness properties.
			
			\item $\append$ must compute $\alpharenamed{\circuitTwo''}{\alpharenamingOne}$ and $\alpharenamed{\MonmvalOne'}{\alpharenamingOne}$. Because we know that $\alpharenamingOne$ is a valid renaming of lifted variables, and thus a valid permutation, this step cannot fail.
			
			\item $\append$ must return $(\circuitOne\cconcat_\assOne\alpharenamed{\circuitTwo''}{\alpharenamingOne}, \alpharenamed{\MonmvalOne'}{\alpharenamingOne})$. This step could fail if there existed an assignment $\assTwo$ occurring in $\alpharenamed{\circuitTwo''}{\alpharenamingOne}$ such that $\dom(\assOne)\cap\dom(\assTwo)\neq\emptyset$, in which case $\assOne\assmerge\assTwo$, and therefore $\circuitOne\cconcat_\assOne\alpharenamed{\circuitTwo''}{\alpharenamingOne}$, would be undefined. However, by Lemma \ref{lem: apply-1a} we know $\circjudgment{\circuitTwo''}{\dlvaltreeTwo'}{\renamed{\lcTwo}{\renamingOne}}{\renamed{\MonlcTwo}{\renamingOne}}$ and by Lemma \ref{lem: apply-2.1} we know $\circjudgment{\alpharenamed{\circuitTwo''}{\alpharenamingOne}}{\alpharenamed{\dlvaltreeTwo'}{\alpharenamingOne}}{\renamed{\lcTwo}{\renamingOne}}{\alpharenamed{(\renamed{\MonlcTwo}{\renamingOne})}{\alpharenamingOne}}$. Therefore, for every assignment $\assTwo$ occurring in $\alpharenamed{\circuitTwo''}{\alpharenamingOne}$ we have  $\dom(\assTwo)\subseteq\namesof{}(\alpharenamed{\dlvaltreeTwo'}{\alpharenamingOne})$, and because $\assOne\in\cassset_\dlvaltreeOne$ we also have $\dom(\assOne)=\namesof{\assOne}(\dlvaltreeOne)$. By hypothesis we know $\namesof{}(\alpharenamed{\dlvaltreeTwo'}{\alpharenamingOne})\cap\namesof{\assOne}(\dlvaltreeOne) = \emptyset$ and therefore $\dom(\assTwo)\cap\dom(\assOne)=\emptyset$, by which we conclude that this step always succeeds.
		\end{enumerate}
		This allows us to derive $\bscfgl{\circuitOne}{\assOne}{\apply{\dlvalTwo_1,\dots,\dlvalTwo_n}(\boxedcirc{\vec\labOne}{\circuitTwo'}{\MonmvalOne}{\dlvaltreeTwo'},\vec\labTwo)}\bseval\bscfgr{\circuitOne\cconcat_\assOne\alpharenamed{\circuitTwo''}{\alpharenamingOne}}{\alpharenamed{\MonmvalOne'}{\alpharenamingOne}}$ by the \textit{apply} rule, contradicting the hypothesis.
		
		\subparagraph*{Case $\termOne\equiv \letin{\varOne}{\termTwo}{\MontermOne}.$}
		In this case we have $\lbscfgjudgment{\lcOne}{\dlvaltreeOne}{\dlvaltreeTwo}{\bscfgl{\circuitOne}{\assOne}{\letin{\varOne}{\termTwo}{\MontermOne}}}{\MontypeOne}{\MonlcOne}$ and $\nexists\circuitTwo,\MonvalOne.\bscfgl{\circuitOne}{\assOne}{\letin{\varOne}{\termTwo}{\MontermOne}}\bseval\bscfgr{\circuitTwo}{\MonvalOne}$. By the definition of well-typedness we know that $\assOne\in\cassset_\dlvaltreeOne,\namesof{\assOne}(\dlvaltreeOne)\cap\namesof{}(\dlvaltreeTwo)=\emptyset,\circjudgment{\circuitOne}{\dlvaltreeOne}{\lcOne}{\MonlcOne\Monmerge{\assOne}\lcOne'}$ and $\compjudgment{\emptycontext}{\lcOne'}{\dlvaltreeTwo}{\letin{\varOne}{\termTwo}{\MontermOne}}{\MontypeOne}$. By inversion of the \textit{let} rule we have
		\begin{equation}
			\inference[let]
		{\compjudgment{\emptycontext}{\lcOne'_{\lindex}}{\dlvaltreeTwo_{\lindex}}{\termTwo}{\MontypeTwo}
			&
			\MontermOne\in\condmonad_{\dlvaltreeTwo_{\lindex}}(\termset)
			&
			\Moncompjudgment{\varOne:\MontypeTwo}{\lcOne'_{\rindex}}{\Moncomp{\dlvaltreeTwo_{\lindex}}{\dlvaltreeTwo_{\assOne_i}}{\assOne_i}{}}{\MontermOne}{\MongenericObjectTwo}
		}
		{\compjudgment{\emptycontext}{\lcOne'_{\lindex},\lcOne'_{\rindex}}{\Monflatten{\Moncomp{\dlvaltreeTwo_{\lindex}}{\dlvaltreeTwo_{\assOne_i}}{\assOne_i}{}}}{\letin{\varOne}{\termTwo}{\MontermOne}}{\Monflatten{\MongenericObjectTwo}}}
		\end{equation}
	
		Note that, for the sake of brevity, we assume $\cassset_{\dlvaltreeTwo_{\lindex}} = \{\assOne_1,\dots,\assOne_n\}$ and write (with an abuse of notation) $\Moncomp{\dlvaltreeTwo_{\lindex}}{\dlvaltreeTwo_{\assOne_i}}{\assOne_i}{}$ as a sort of ``pattern matching'' to denote $\Moncomp{\dlvaltreeTwo_{\lindex}}{\dlvaltreeTwo_{\assTwo}}{\assTwo}{\cassset_{\dlvaltreeTwo_\lindex}}$, where $\dlvaltreeTwo_\assTwo=\dlvaltreeTwo_{\assOne_i}$ when $\assTwo=\assOne_i$. This notation will be particularly useful in less trivial compositions in the rest of this proof case. We have $\lcOne'\equiv(\lcOne'_{\lindex},\lcOne'_{\rindex}),\dlvaltreeTwo\equiv\Monflatten{\Moncomp{\dlvaltreeTwo_{\lindex}}{\dlvaltreeTwo_{\assOne_i}}{\assOne_i}{}}$ and $\MontypeOne\equiv\Monflatten{\MongenericObjectTwo}$. Suppose $\MonlcOne=\Moncomp{\MonlcOne'}{\lcOne''}{}{\{\assOne\}}$ and therefore $\MonlcOne\Monmerge{\assOne}(\lcOne'_{\lindex},\lcOne'_{\rindex})=\Moncomp{\MonlcOne'}{\lcOne'_{\lindex},\lcOne'_{\rindex},\lcOne''}{}{\{\assOne\}}$. By the definition of well-typed left configuration we have $\lbscfgjudgment{\lcOne}{\dlvaltreeOne}{\dlvaltreeTwo_{\lindex}}{\bscfgl{\circuitOne}{\assOne}{\termTwo}}{\MontypeTwo}{\Moncomp{\MonlcOne'}{\lcOne'_{\rindex},\lcOne''}{}{\{\assOne\}}}$.
		At this point, either there exist $\circuitOne_1,\MonvalOne_{\lindex}$ such that $\bscfgl{\circuitOne}{\assOne}{\termTwo}\bseval\bscfgr{\circuitOne_1}{\MonvalOne_{\lindex}}$ or not. In the latter case, we immediately conclude the claim by coinduction through the \textit{let-now} rule. In the former case, by Theorem \ref{thm: SR} we get $\rbscfgjudgment{\lcOne}{\Monflatten{\Moncomp{\dlvaltreeOne}{\dlvaltreeTwo_{\lindex}}{}{\{\assOne\}}}}{}{\assOne}{\bscfgr{\circuitOne_1}{\MonvalOne_{\lindex}}}{\MontypeTwo}{\Monflatten{\Moncomp{\MonlcOne'}{\Monconst{(\lcOne'_{\rindex},\lcOne'')}{\dlvaltreeTwo_{\lindex}}}{}{\{\assOne\}}}}$, where $\Monconst{(\lcOne'_{\rindex},\lcOne'')}{\dlvaltreeTwo_{\lindex}}$ is shorthand for $\Moncomp{\dlvaltreeTwo_{\lindex}}{\lcOne'_{\rindex},\lcOne''}{\assTwo}{}$. This entails $\circjudgment{\circuitOne_1}{\Monflatten{\Moncomp{\dlvaltreeOne}{\dlvaltreeTwo_{\lindex}}{}{\{\assOne\}}}}{\lcOne}{\Monflatten{\Moncomp{\MonlcOne'}{\MonlcTwo,\lcOne'_{\rindex},\lcOne''}{}{\{\assOne\}}}}$ for some $\MonlcTwo\in\condmonad_{\dlvaltreeTwo_{\lindex}}(\lcset)$ such that $\Monvaljudgment{\emptycontext}{\MonlcTwo}{\dlvaltreeTwo_{\lindex}}{\MonvalOne_{\lindex}}{\MontypeTwo}$. Knowing this and the last premise of the \textit{let} typing rule above, by expanding the lifted judgments and applying Lemma \ref{lem: substitution} $n$ times, we get
		
		\begin{equation}\label{eq: lifted-sub'}
			\forall\assOne_i\in\cassset_{\dlvaltreeTwo_{\lindex}}.\compjudgment{\emptycontext}{\MonlcTwo(\assOne_i),\lcOne'_{\rindex}}{\dlvaltreeTwo_{\assOne_i}}{\MontermOne(\assOne_i)[\MonvalOne_{\lindex}(\assOne_i)/\varOne]}{\MongenericObjectTwo(\assOne_i)}.
		\end{equation}
		
		Now consider the configuration $\bscfgl{\circuitOne_1}{\assOne\assmerge\assOne_1}{\MontermOne(\assOne_1)[\MonvalOne_{\lindex}(\assOne_1)/\varOne]}$. Because by hypothesis $\assOne\in\cassset_{\dlvaltreeOne}$ and $\assOne_1\in\cassset_{\dlvaltreeTwo_{\lindex}}$, we know $\assOne\assmerge\assOne_1\in\cassset_{\Monflatten{\Moncomp{\dlvaltreeOne}{\dlvaltreeTwo_{\lindex}}{}{\{\assOne\}}}}$. Furthermore, we have $\namesof{\assOne\cup\assOne_1}(\Monflatten{\Moncomp{\dlvaltreeOne}{\dlvaltreeTwo_{\lindex}}{}{\{\assOne\}}}) \cap \namesof{}(\dlvaltreeTwo_{\assOne_1}) = (\namesof{\assOne}(\dlvaltreeOne)\cup\namesof{\assOne_1}(\dlvaltreeTwo_{\lindex}))\cap\namesof{}(\dlvaltreeTwo_{\assOne_1}) = \emptyset$, since $\namesof{\assOne}(\dlvaltreeOne)\cap\namesof{}(\dlvaltreeTwo_{\assOne_1}) \subseteq \namesof{\assOne}(\dlvaltreeOne)\cap\namesof{}(\dlvaltreeTwo)=\emptyset$ by hypothesis and $\namesof{\assOne_1}(\dlvaltreeTwo_{\lindex})\cap\namesof{}(\dlvaltreeTwo_{\assOne_1})=\emptyset$ since the flattening of $\Moncomp{\dlvaltreeTwo_{\lindex}}{\dlvaltreeTwo_{\assOne_i}}{\assOne_i}{}$ is well-defined in the \textit{let} typing rule above.
		Lastly, we consider again $\circjudgment{\circuitOne_1}{\Monflatten{\Moncomp{\dlvaltreeOne}{\dlvaltreeTwo_{\lindex}}{}{\{\assOne\}}}}{\lcOne}{\Monflatten{\Moncomp{\MonlcOne'}{\MonlcTwo,\lcOne'_{\rindex},\lcOne''}{}{\{\assOne\}}}}$ and we rewrite $\Monflatten{\Moncomp{\MonlcOne'}{\MonlcTwo,\lcOne'_{\rindex},\lcOne''}{}{\{\assOne\}}}$ first as $\Monflatten{\Moncomp{\MonlcOne'}{\Moncomp{\dlvaltreeTwo_{\lindex}}{\MonlcTwo(\assOne_j),\lcOne'_{\rindex},\lcOne''}{\assOne_j}{\{\assOne_j\mid j=1\dots n\}}}{}{\{\assOne\}}}$ and then by Lemma \ref{lem: comma extraction} as $\Moncomp{\Monflatten{\Moncomp{\MonlcOne'}{\dlvaltreeTwo_{\lindex}}{}{\{\assOne\}}}}{\MonlcTwo(\assOne_j),\lcOne'_{\rindex},\lcOne''}{\assOne\assmerge\assOne_j}{\{\assOne\assmerge\assOne_j\mid j=1\dots n\}}$. This, together with Equation \ref{eq: lifted-sub'}, lets us say
		\begin{equation}
			\begin{aligned}
		 \lbscfgjudgment{\lcOne}{\Monflatten{\Moncomp{\dlvaltreeOne}{\dlvaltreeTwo_{\lindex}}{}{\{\assOne\}}}}{\dlvaltreeTwo_{\assOne_1}}{&\bscfgl{\circuitOne_1}{\assOne\assmerge\assOne_1}{\MontermOne(\assOne_1)[\MonvalOne_{\lindex}(\assOne_1)/\varOne]}}{\MongenericObjectTwo(\assOne_1)}{\\&\Moncomp{\Moncomp{\Monflatten{\Moncomp{\MonlcOne'}{\dlvaltreeTwo_{\lindex}}{}{\{\assOne\}}}}{\lcOne''}{}{\{\assOne\assmerge\assOne_1\}}}{\MonlcTwo(\assOne_j),\lcOne'_{\rindex},\lcOne''}{\assOne\assmerge\assOne_j}{\{\assOne\assmerge\assOne_j\mid j=2\dots n\}}}.
			\end{aligned}
		\end{equation}
		For brevity, let $\dlvaltreeOne_{\lindex} = \Monflatten{\Moncomp{\dlvaltreeOne}{\dlvaltreeTwo_{\lindex}}{}{\{\assOne\}}}$ and $\MonlcOne_{\lindex} = \Monflatten{\Moncomp{\MonlcOne'}{\dlvaltreeTwo_{\lindex}}{}{\{\assOne\}}}$ in the rest of this proof case. At this point, either there exist $\circuitOne_2,\MonvalTwo_1$ such that $\bscfgl{\circuitOne_1}{\assOne\assmerge\assOne_1}{\MontermOne(\assOne_1)[\MonvalOne_{\lindex}(\assOne_1)/\varOne]}\bseval\bscfgr{\circuitOne_2}{\MonvalTwo_1}$ or not. In the latter case, we immediately conclude the claim by coinduction through the \textit{let-then} rule, with $j=1$. In the former case, by Theorem \ref{thm: SR} we get
		\begin{equation}
			\rbscfgjudgment{\lcOne}{\Monflatten{\Moncomp{\dlvaltreeOne_{\lindex}}{\dlvaltreeTwo_{\assOne_1}}{}{\{\assOne\assmerge\assOne_1\}}}}{}{\assOne\assmerge\assOne_1}{\bscfgr{\circuitOne_2}{\MonvalTwo_1}}{\MongenericObjectTwo(\assOne_1)}{\Monflatten{\Moncomp{\Moncomp{\MonlcOne_{\lindex}}{\Monconst{\lcOne''}{\dlvaltreeTwo_{\assOne_1}}}{}{\{\assOne\assmerge\assOne_1\}}}{\MonlcTwo(\assOne_j),\lcOne'_{\rindex},\lcOne''}{\assOne\assmerge\assOne_j}{\{\assOne\assmerge\assOne_j\mid j=2\dots n\}}}}.
		\end{equation}
	
		By the definition of well-typed right configuration, this entails
		\begin{equation}
			\circjudgment{\circuitOne_2}{\Monflatten{\Moncomp{\dlvaltreeOne_{\lindex}}{\dlvaltreeTwo_{\assOne_1}}{}{\{\assOne\assmerge\assOne_1\}}}}{\lcOne}{\Monflatten{\Moncomp{\Moncomp{\MonlcOne_{\lindex}}{\MonlcTwo'_1,\lcOne''}{}{\{\assOne\assmerge\assOne_1\}}}{\MonlcTwo(\assOne_j),\lcOne'_{\rindex},\lcOne''}{\assOne\assmerge\assOne_j}{\{\assOne\assmerge\assOne_j\mid j=2\dots n\}}}}
		\end{equation}
		for some $\MonlcTwo'_1\in\condmonad_{\dlvaltreeTwo_{\assOne_1}}(\lcset)$ such that $\Monvaljudgment{\emptycontext}{\MonlcTwo'_1}{\dlvaltreeTwo_{\assOne_1}}{\MonvalTwo_1}{\MongenericObjectTwo(\assOne_1)}$.
		Since the various label contexts $(\MonlcTwo(\assOne_j),\lcOne'_{\rindex},\lcOne'')$, where $j\neq 1$, are not tree-like structures and are composed on branches disjoint from $\assOne_1$, we can factor them out of the outermost flattening, thus rewriting the output of $\circuitOne_2$ as 
		\begin{equation}
			\Moncomp{\Monflatten{\Moncomp{\MonlcOne_{\lindex}}{\MonlcTwo'_1,\lcOne''}{}{\{\assOne\assmerge\assOne_1\}}}}{\MonlcTwo(\assOne_j),\lcOne'_{\rindex},\lcOne''}{\assOne\assmerge\assOne_j}{\{\assOne\assmerge\assOne_j\mid j=2\dots n\}}.
		\end{equation}
		
		Moving on, consider the configuration $\bscfgl{\circuitOne_2}{\assOne\assmerge\assOne_2}{\MontermOne(\assOne_2)[\MonvalOne_{\lindex}(\assOne_2)/\varOne]}$. We know that $\assOne_1$ and $\assOne_2$ are distinct, so because by hypothesis $\assOne\in\cassset_{\dlvaltreeOne}$ and $\assOne_2\in\cassset_{\dlvaltreeTwo_{\lindex}}$, we know $\assOne\assmerge\assOne_2\in\cassset_{\Monflatten{\Moncomp{\dlvaltreeOne_{\lindex}}{\dlvaltreeTwo_{\assOne_1}}{}{\{\assOne\assmerge\assOne_1\}}}}$. Furthermore, we have $\namesof{\assOne\cup\assOne_2}(\Monflatten{\Moncomp{\dlvaltreeOne_{\lindex}}{\dlvaltreeTwo_{\assOne_1}}{}{\{\assOne\assmerge\assOne_1\}}}) \cap \namesof{}(\dlvaltreeTwo_{\assOne_2}) = \namesof{\assOne\cup\assOne_2}(\Monflatten{\Moncomp{\Monflatten{\Moncomp{\dlvaltreeOne}{\dlvaltreeTwo_{\lindex}}{}{\{\assOne\}}}}{\dlvaltreeTwo_{\assOne_1}}{}{\{\assOne\assmerge\assOne_1\}}}) \cap \namesof{}(\dlvaltreeTwo_{\assOne_2}) = (\namesof{\assOne}(\dlvaltreeOne)\cup\namesof{\assOne_2}(\dlvaltreeTwo_{\lindex}))\cap\namesof{}(\dlvaltreeTwo_{\assOne_2}) = \emptyset$.
		This, together with $\circuitOne_2$'s signature and Equation \ref{eq: lifted-sub'}, lets us say
		\begin{equation}
			\begin{aligned}
				\lbscfgjudgment{\lcOne}{\Monflatten{\Moncomp{\dlvaltreeOne_{\lindex}}{\dlvaltreeTwo_{\assOne_1}}{}{\{\assOne\assmerge\assOne_1\}}}}{\dlvaltreeTwo_{\assOne_2}}{&\bscfgl{\circuitOne_2}{\assOne\assmerge\assOne_2}{\MontermOne(\assOne_2)[\MonvalOne_{\lindex}(\assOne_2)/\varOne]}}{\MongenericObjectTwo(\assOne_2)}{\\&\Moncomp{\Moncomp{\Monflatten{\Moncomp{\MonlcOne_{\lindex}}{\MonlcTwo'_1,\lcOne''}{}{\{\assOne\assmerge\assOne_1\}}}}{\lcOne''}{}{\{\assOne\assmerge\assOne_2\}}}{\MonlcTwo(\assOne_j),\lcOne'_{\rindex},\lcOne''}{\assOne\assmerge\assOne_j}{\{\assOne\assmerge\assOne_j\mid j=3\dots n\}}}
			\end{aligned}
		\end{equation}
		once again, either there exist $\circuitOne_3,\MonvalTwo_2$ such that $\bscfgl{\circuitOne_2}{\assOne\assmerge\assOne_2}{\MontermOne(\assOne_2)[\MonvalOne_{\lindex}(\assOne_2)/\varOne]}\bseval\bscfgr{\circuitOne_3}{\MonvalTwo_2}$ or not. In the latter case, we immediately conclude the claim by coinduction through the \textit{let-then} rule, with $j=2$. In the former case, by Theorem \ref{thm: SR} we get
		\begin{equation}
			\begin{aligned}
				\rbscfgjudgment{\lcOne}{\Monflatten{\Moncomp{\Monflatten{\Moncomp{\dlvaltreeOne_{\lindex}}{\dlvaltreeTwo_{\assOne_1}}{}{\{\assOne\assmerge\assOne_1\}}}}{\dlvaltreeTwo_{\assOne_2}}{}{\{\assOne\assmerge\assOne_2\}}}}{}{\dlvaltreeTwo_{\assOne_2}}{&\bscfgr{\circuitOne_3}{\MonvalTwo_2}}{\MongenericObjectTwo(\assOne_2)}{\\&\Moncomp{\Monflatten{\Moncomp{\Monflatten{\Moncomp{\MonlcOne_{\lindex}}{\MonlcTwo'_1,\lcOne''}{}{\{\assOne\assmerge\assOne_1\}}}}{\Monconst{\lcOne''}{\dlvaltreeTwo_{\assOne_2}}}{}{\{\assOne\assmerge\assOne_2\}}}}{\MonlcTwo(\assOne_j),\lcOne'_{\rindex},\lcOne''}{\assOne\assmerge\assOne_j}{\{\assOne\assmerge\assOne_j\mid j=3\dots n\}}},
			\end{aligned}
		\end{equation}
		which entails that $\circuitOne_3$ has signature
		\begin{equation}
			\begin{aligned}
				\circjudgment{\circuitOne_3&}{\Monflatten{\Moncomp{\Monflatten{\Moncomp{\dlvaltreeOne_{\lindex}}{\dlvaltreeTwo_{\assOne_1}}{}{\{\assOne\assmerge\assOne_1\}}}}{\dlvaltreeTwo_{\assOne_2}}{}{\{\assOne\assmerge\assOne_2\}}}}{\\&\lcOne}{\Moncomp{\Monflatten{\Moncomp{\Monflatten{\Moncomp{\MonlcOne_{\lindex}}{\MonlcTwo'_1,\lcOne''}{}{\{\assOne\assmerge\assOne_1\}}}}{\MonlcTwo'_2,\lcOne''}{}{\{\assOne\assmerge\assOne_2\}}}}{\MonlcTwo(\assOne_j),\lcOne'_{\rindex},\lcOne''}{\assOne\assmerge\assOne_j}{\{\assOne\assmerge\assOne_j\mid j=3\dots n\}}}
			\end{aligned}
		\end{equation}
		for some $\MonlcTwo'_2\in\condmonad_{\dlvaltreeTwo_{\assOne_2}}(\lcset)$ such that $\Monvaljudgment{\emptycontext}{\MonlcTwo'_2}{\dlvaltreeTwo_{\assOne_2}}{\MonvalTwo_2}{\MongenericObjectTwo(\assOne_2)}$.
		
		Now, when we consider the configuration $\bscfgl{\circuitOne_3}{\assOne\assmerge\assOne_3}{\MontermOne(\assOne_3)[\MonvalOne_{\lindex}(\assOne_3)/\varOne]}$ we can apply to it the same reasoning that we applied to $\bscfgl{\circuitOne_2}{\assOne\assmerge\assOne_2}{\MontermOne(\assOne_2)[\MonvalOne_{\lindex}(\assOne_2)/\varOne]}$ after we considered $\bscfgl{\circuitOne_1}{\assOne\assmerge\assOne_1}{\MontermOne(\assOne_1)[\MonvalOne_{\lindex}(\assOne_1)/\varOne]}$. In fact, we can apply this kind of reasoning pattern iteratively to the remaining $\bscfgl{\circuitOne_i}{\assOne\assmerge\assOne_i}{\MontermOne(\assOne_i)[\MonvalOne_{\lindex}(\assOne_i)/\varOne]}$, for $i=3\dots n$. At each step $i$, we know that $\bscfgl{\circuitOne_i}{\assOne\assmerge\assOne_i}{\MontermOne(\assOne_i)[\MonvalOne_{\lindex}(\assOne_i)/\varOne]}$ is well-typed and we distinguish between the case in which there exist $\circuitOne_{i+1},\MonvalTwo_i$ such that $\bscfgl{\circuitOne_i}{\assOne\assmerge\assOne_i}{\MontermOne(\assOne_i)[\MonvalOne_{\lindex}(\assOne_i)/\varOne]}\bseval\bscfgr{\circuitOne_{i+1}}{\MonvalTwo_i}$ and the case in which such $\circuitOne_{i+1},\MonvalTwo_i$ do not exist. In the latter case, we immediately conclude by coinduction through the \textit{let-then} rule, with $j=i$, while in the former case we proceed by Theorem \ref{thm: SR} and obtain that $\bscfgl{\circuitOne_{i+1}}{\assOne\assmerge\assOne_{i+1}}{\MontermOne(\assOne_{i+1})[\MonvalOne_{\lindex}(\assOne_{i+1})/\varOne]}$ is well-typed too. If after $n-2$ iterations we have not yet concluded, we end up in a situation in which we have $\bscfgl{\circuitOne_i}{\assOne\assmerge\assOne_i}{\MontermOne(\assOne_i)[\MonvalOne_{\lindex}(\assOne_i)/\varOne]}\bseval\bscfgr{\circuitOne_{i+1}}{\MonvalTwo_i}$ for $i=1\dots n$, where $\MonvalTwo_i\in\condmonad_{\dlvaltreeTwo_{\assOne_i}}(\valset)$. Because $\Monflatten{\Moncomp{\dlvaltreeTwo_{\lindex}}{\dlvaltreeTwo_{\assOne_i}}{\assOne_i}{}}$ is defined in the \textit{let} typing rule, we know that $\namesof{\assOne_i}(\dlvaltreeTwo_{\lindex})\cap\namesof{}(\dlvaltreeTwo_{\assOne_i})=\emptyset$ for $i=1,\dots,n$ and therefore $\Monflatten{\Moncomp{\dlvaltreeTwo_{\lindex}}{\MonvalTwo_i}{\assOne_i}{}}$ is also defined. This allows us to derive $\bscfgl{\circuitOne}{\assOne}{\letin{\varOne}{\termTwo}{\MontermOne}}\bseval\bscfgr{\circuitOne_{n+1}}{\Monflatten{\Moncomp{\dlvaltreeTwo_{\lindex}}{\MonvalTwo_i}{\assOne_i}{}}}$, contradicting the hypothesis.
	\end{proof}
}

\longversion{\subsection{Closed Computations}}
\longshortversion{Although the previous results hold for all computations starting from arbitrary well-typed configurations,}{That being said,} we are mainly interested in \textit{closed} computations, in which the evaluation of a term builds the underlying circuit entirely from scratch. That is, we are interested in computations that start from configurations of the form $\bscfgl{\cinput{\emptylc}}{\emptyassignment}{\termOne}$, for some $\termOne$.
Whenever $\bscfgl{\cinput{\emptylc}}{\emptyassignment}{\termOne}\bseval\bscfgr{\circuitOne}{\MonvalOne}$ for some $\circuitOne,\MonvalOne$, we simply write $\termOne\bseval\bscfgr{\circuitOne}{\MonvalOne}$ and whenever $\bscfgl{\cinput{\emptylc}}{\emptyassignment}{\termOne}\bsdiverge$ we write $\termOne\bsdiverge$.
In the same spirit, we say that $\termOne$ is a \emph{well-typed term with lifted type $\MontypeOne$ depending on $\dlvaltreeOne$}, and we write $\simpjudgment{\termOne}{\dlvaltreeOne}{\MontypeOne}$, whenever $\lbscfgjudgment{\emptylc}{\emptytree}{\dlvaltreeOne}{\bscfgl{\cinput{\emptylc}}{\emptyassignment}{\termOne}}{\MontypeOne}{\Monunit{\emptylc}}$, while we say that $\bscfgr{\circuitOne}{\MonvalOne}$ is a \emph{well-typed closed configuration with lifted type $\MontypeOne$ depending on $\dlvaltreeOne$}, and we write $\simpjudgment{\bscfgr{\circuitOne}{\MonvalOne}}{\dlvaltreeOne}{\MontypeOne}$, whenever $\rbscfgjudgment{\emptylc}{\dlvaltreeOne}{\dlvaltreeOne}{\emptyassignment}{\bscfgr{\circuitOne}{\MonvalOne}}{\MontypeOne}{\dlvaltreeOne[\emptylc]_\assTwo}$.
\longshortversion{The previous subject reduction and progress results are therefore declined in the following versions for closed computations.}{We are now ready to give the relevant type safety results for \PQK.}

\longshortversion{
	\begin{theorem}[Subject Reduction for Closed Computations]
		If $\simpjudgment{\termOne}{\dlvaltreeOne}{\MontypeOne}$ and $\exists\circuitOne,\MonvalOne.\termOne \simpbseval \bscfgr{\circuitOne}{\MonvalOne}$, then $\simpjudgment{\bscfgr{\circuitOne}{\MonvalOne}}{\dlvaltreeOne}{\MontypeOne}$.
	\end{theorem}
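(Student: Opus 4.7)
The plan is to derive this as an immediate corollary of the general Subject Reduction result (Theorem \ref{thm: SR}), by specialising all of its parameters to the shape of a closed initial configuration and then simplifying the resulting tree-algebraic expressions to the form required by $\vdash^{\dlvaltreeOne}$.

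First I would unfold the definition of $\simpjudgment{\termOne}{\dlvaltreeOne}{\MontypeOne}$, which by definition means $\lbscfgjudgment{\emptylc}{\emptytree}{\dlvaltreeOne}{\bscfgl{\cinput{\emptylc}}{\emptyassignment}{\termOne}}{\MontypeOne}{\Monunit{\emptylc}}$. Since the hypothesis also supplies $\circuitOne,\MonvalOne$ such that $\bscfgl{\cinput{\emptylc}}{\emptyassignment}{\termOne}\bseval\bscfgr{\circuitOne}{\MonvalOne}$, the premises of Theorem \ref{thm: SR} are met with $\lcOne=\emptylc$, past tree $\emptytree$, future tree $\dlvaltreeOne$, branch $\assOne=\emptyassignment$ and outputs $\MonlcOne=\Monunit{\emptylc}$. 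Applying Theorem \ref{thm: SR} yields
\begin{equation*}
\rbscfgjudgment{\emptylc}{\Monflatten{\Moncomp{\emptytree}{\dlvaltreeOne}{}{\{\emptyassignment\}}}}{\dlvaltreeOne}{\emptyassignment}{\bscfgr{\circuitOne}{\MonvalOne}}{\MontypeOne}{\Monunit{\emptylc}\Monsplit{\emptyassignment}\dlvaltreeOne}.
\end{equation*}

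Next I would carry out the two bookkeeping simplifications needed to recognise this as the closed-configuration judgment. For the overall lifting tree, $\cassset_{\emptytree}=\{\emptyassignment\}$, so $\Moncomp{\emptytree}{\dlvaltreeOne}{}{\{\emptyassignment\}}=\Monunit{\dlvaltreeOne}$, and therefore $\Monflatten{\Monunit{\dlvaltreeOne}}=\dlvaltreeOne$. For the output lifted label context, $\Monunit{\emptylc}\Monsplit{\emptyassignment}\dlvaltreeOne$ unfolds (by the definition of $\Monsplit{\assOne}$) to $\Monflatten{\Moncomp{\Monunit{\emptylc}}{\Moncomp{\dlvaltreeOne}{\emptylc}{\assTwo}{}}{}{\{\emptyassignment\}}}$, which after flattening is exactly the constantly-$\emptylc$-labelled tree $\Moncomp{\dlvaltreeOne}{\emptylc}{\assTwo}{}$, i.e.\ $\dlvaltreeOne[\emptylc]_{\assTwo}$ in the notation used in the definition of $\simpjudgment{\bscfgr{\circuitOne}{\MonvalOne}}{\dlvaltreeOne}{\MontypeOne}$.

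Combining these two simplifications, the conclusion of Theorem \ref{thm: SR} rewrites to $\rbscfgjudgment{\emptylc}{\dlvaltreeOne}{\dlvaltreeOne}{\emptyassignment}{\bscfgr{\circuitOne}{\MonvalOne}}{\MontypeOne}{\dlvaltreeOne[\emptylc]_{\assTwo}}$, which is precisely $\simpjudgment{\bscfgr{\circuitOne}{\MonvalOne}}{\dlvaltreeOne}{\MontypeOne}$. There is essentially no hard step here: the entire content of the argument lives in the already-proved general Subject Reduction theorem, and the only potential pitfall is making sure that the identifications $\Monflatten{\Moncomp{\emptytree}{\dlvaltreeOne}{}{\{\emptyassignment\}}}=\dlvaltreeOne$ and $\Monunit{\emptylc}\Monsplit{\emptyassignment}\dlvaltreeOne=\dlvaltreeOne[\emptylc]_{\assTwo}$ follow strictly from the definitions of composition, flattening and $\Monsplit{}$ given in Section \ref{sec: generalized circuits}, which they do by inspection of the base cases.
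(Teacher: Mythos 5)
Your proposal is correct and takes exactly the same route as the paper, which simply states that the claim ``follows naturally from Theorem~\ref{thm: SR}''; you have merely spelled out the instantiation ($\lcOne=\emptylc$, past tree $\emptytree$, branch $\emptyassignment$, outputs $\Monunit{\emptylc}$) and the two tree-algebraic simplifications that the paper leaves implicit. Both identifications $\Monflatten{\Moncomp{\emptytree}{\dlvaltreeOne}{}{\{\emptyassignment\}}}=\dlvaltreeOne$ and $\Monunit{\emptylc}\Monsplit{\emptyassignment}\dlvaltreeOne=\Moncomp{\dlvaltreeOne}{\emptylc}{\assTwo}{}$ do indeed hold by unfolding the definitions of composition, flattening and $\Monsplit{}$, so nothing is missing.
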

	\begin{proof}
		The claim follows naturally from Theorem \ref{thm: SR}.
	\end{proof}
	
	\begin{theorem}[Progress for Closed Computations]
		If $\simpjudgment{\termOne}{\dlvaltreeOne}{\MontypeOne}$, then either $\exists\circuitOne,\MonvalOne.\termOne \simpbseval \bscfgr{\circuitOne}{\MonvalOne}$ or $\termOne\simpbsdiverge$.
	\end{theorem}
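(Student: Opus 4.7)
The plan is to derive this statement as a direct corollary of the general Progress theorem (Theorem~\ref{thm: Progress}), in the same way that Subject Reduction for Closed Computations is derived from Theorem~\ref{thm: SR}. The hypothesis $\simpjudgment{\termOne}{\dlvaltreeOne}{\MontypeOne}$ is by definition a shorthand for the well-typedness of the initial left configuration $\lbscfgjudgment{\emptylc}{\emptytree}{\dlvaltreeOne}{\bscfgl{\cinput{\emptylc}}{\emptyassignment}{\termOne}}{\MontypeOne}{\Monunit{\emptylc}}$, so the goal reduces to applying the general result to this particular configuration.

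First I would verify that the side conditions for well-typedness of the initial configuration hold, essentially as a sanity check: $\emptyassignment\in\cassset_{\emptytree}$ holds since $\cassset_\emptytree=\{\emptyassignment\}$ by definition; the disjointness $\namesof{\emptyassignment}(\emptytree)\cap\namesof{}(\dlvaltreeOne)=\emptyset$ is immediate because $\namesof{\emptyassignment}(\emptytree)=\emptyset$; and the signature $\circjudgment{\cinput{\emptylc}}{\emptytree}{\emptylc}{\Monunit{\emptylc}}$ follows directly from the \emph{id} rule of Figure~\ref{table: circ rules}. The remaining ingredient $\compjudgment{\emptycontext}{\emptylc}{\dlvaltreeOne}{\termOne}{\MontypeOne}$ is exactly what is packaged inside the premise.

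Having checked that the premises of Theorem~\ref{thm: Progress} are met, invoking it yields two possibilities: either $\bscfgl{\cinput{\emptylc}}{\emptyassignment}{\termOne}\bseval\bscfgr{\circuitOne}{\MonvalOne}$ for some $\circuitOne$ and $\MonvalOne$, or $\bscfgl{\cinput{\emptylc}}{\emptyassignment}{\termOne}\bsdiverge$. Unfolding the closed-computation shorthands $\termOne\simpbseval\bscfgr{\circuitOne}{\MonvalOne}$ and $\termOne\simpbsdiverge$, these are precisely the two alternatives in the statement, so the claim follows. Since the argument is essentially notational, I do not foresee any real technical obstacle; the only thing one has to be careful about is making sure that the empty-initial instance of the well-typedness predicate genuinely aligns with the shape required by Theorem~\ref{thm: Progress}, but this is a routine unfolding of definitions.
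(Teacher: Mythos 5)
Your proposal is correct and matches the paper's own proof, which simply observes that the claim follows from the general Progress theorem (Theorem~\ref{thm: Progress}) by instantiating it at the initial configuration $\bscfgl{\cinput{\emptylc}}{\emptyassignment}{\termOne}$; your additional unfolding of the side conditions is harmless since the hypothesis $\simpjudgment{\termOne}{\dlvaltreeOne}{\MontypeOne}$ already packages exactly that well-typedness judgment.
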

	\begin{proof}
		The claim follows naturally from Theorem \ref{thm: Progress}.
	\end{proof}
}
{
	\begin{theorem}[Subject Reduction]
		\label{thm: closed SR}
		If $\simpjudgment{\termOne}{\dlvaltreeOne}{\MontypeOne}$ and $\exists\circuitOne,\MonvalOne.\termOne \simpbseval \bscfgr{\circuitOne}{\MonvalOne}$, then $\simpjudgment{\bscfgr{\circuitOne}{\MonvalOne}}{\dlvaltreeOne}{\MontypeOne}$.
	\end{theorem}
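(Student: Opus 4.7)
The plan is to derive Theorem \ref{thm: closed SR} as a direct corollary of the general subject reduction result, Theorem \ref{thm: SR}, by specializing its parameters to the closed setting. Recall that the hypothesis $\simpjudgment{\termOne}{\dlvaltreeOne}{\MontypeOne}$ unfolds by definition to the well-typed left configuration $\lbscfgjudgment{\emptylc}{\emptytree}{\dlvaltreeOne}{\bscfgl{\cinput{\emptylc}}{\emptyassignment}{\termOne}}{\MontypeOne}{\Monunit{\emptylc}}$, and the assumption $\termOne\simpbseval\bscfgr{\circuitOne}{\MonvalOne}$ is by definition $\bscfgl{\cinput{\emptylc}}{\emptyassignment}{\termOne}\bseval\bscfgr{\circuitOne}{\MonvalOne}$. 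So both preconditions of Theorem \ref{thm: SR} are met, with the instantiations $\lcOne\equiv\emptylc$, past tree $\emptytree$, future tree $\dlvaltreeOne$, branch $\assOne\equiv\emptyassignment$, circuit $\circuitOne\equiv\cinput{\emptylc}$, and output $\MonlcOne\equiv\Monunit{\emptylc}$.

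Applying Theorem \ref{thm: SR} then yields the right-configuration judgment
\[
\rbscfgjudgment{\emptylc}{\Monflatten{\Moncomp{\emptytree}{\dlvaltreeOne}{}{\{\emptyassignment\}}}}{\dlvaltreeOne}{\emptyassignment}{\bscfgr{\circuitOne}{\MonvalOne}}{\MontypeOne}{\Monunit{\emptylc}\Monsplit{\emptyassignment}\dlvaltreeOne}.
\]
The remaining step is to simplify the two decorated tree expressions. For the overall lifting tree, unfolding the composition $\Moncomp{\emptytree}{\dlvaltreeOne}{}{\{\emptyassignment\}}$ gives $\Monunit{\dlvaltreeOne}$, and its flattening yields $\dlvaltreeOne$ itself, since $\emptytree$ has no variables so $\namesof{\emptyassignment}(\emptytree)\cap\namesof{}(\dlvaltreeOne)=\emptyset$ and the flattening operation is well-defined. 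For the output, by the definition of $\Monsplit{\assOne}$ we have $\Monunit{\emptylc}\Monsplit{\emptyassignment}\dlvaltreeOne = \Monflatten{\Moncomp{\dlvaltreeOne}{\emptylc}{\assTwo}{\cassset_{\dlvaltreeOne}}}$, which is exactly the lifted label context that the paper denotes by $\dlvaltreeOne[\emptylc]_\assTwo$ in the definition of $\simpjudgment{\bscfgr{\circuitOne}{\MonvalOne}}{\dlvaltreeOne}{\MontypeOne}$.

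Substituting these simplifications gives $\rbscfgjudgment{\emptylc}{\dlvaltreeOne}{\dlvaltreeOne}{\emptyassignment}{\bscfgr{\circuitOne}{\MonvalOne}}{\MontypeOne}{\dlvaltreeOne[\emptylc]_\assTwo}$, which is precisely $\simpjudgment{\bscfgr{\circuitOne}{\MonvalOne}}{\dlvaltreeOne}{\MontypeOne}$ by definition, as required. The only potentially subtle point in this derivation is checking that the composition/flattening identities $\Monflatten{\Moncomp{\emptytree}{\dlvaltreeOne}{}{\{\emptyassignment\}}}=\dlvaltreeOne$ and $\Monunit{\emptylc}\Monsplit{\emptyassignment}\dlvaltreeOne=\dlvaltreeOne[\emptylc]_\assTwo$ hold on the nose, but both are immediate from the inductive definitions of composition and flattening applied to a trivial (leaf) starting tree, so there is no real obstacle here — the entire argument is essentially a bookkeeping unfolding of notation on top of the general result.
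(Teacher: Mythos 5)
Your proposal is correct and follows exactly the paper's own route: the paper also obtains this theorem by instantiating the general subject reduction result (Theorem~\ref{thm: SR}) with $\circuitTwo=\cinput{\emptylc}$, $\assOne=\emptyassignment$, $\lcOne=\emptylc$, past tree $\emptytree$ and $\MonlcOne=\Monunit{\emptylc}$. Your explicit verification of the identities $\Monflatten{\Moncomp{\emptytree}{\dlvaltreeOne}{}{\{\emptyassignment\}}}=\dlvaltreeOne$ and $\Monunit{\emptylc}\Monsplit{\emptyassignment}\dlvaltreeOne=\Moncomp{\dlvaltreeOne}{\emptylc}{\assTwo}{}$ is sound and merely spells out the bookkeeping the paper leaves implicit.
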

	\begin{proof}
		We prove the more general claim that if $\lbscfgjudgment{\lcOne}{\dlvaltreeTwo}{\dlvaltreeOne}{\bscfgl{\circuitTwo}{\assOne}{\termOne}}{\MontypeOne}{\MonlcOne}$ and $\exists\circuitOne,\MonvalOne.\bscfgl{\circuitTwo}{\assOne}{\termOne}\bseval\bscfgr{\circuitOne}{\MonvalOne}$, then $\rbscfgjudgment{\lcOne}{\Monflatten{\Moncomp{\dlvaltreeTwo}{\dlvaltreeOne}{}{\{\assOne\}}}}{\dlvaltreeTwo}{\assOne}{\bscfgr{\circuitOne}{\MonvalOne}}{\MontypeOne}{\MonlcOne\Monsplit{\assOne}\dlvaltreeOne}$, by induction on $\bscfgl{\circuitTwo}{\assOne}{\termOne}\bseval\bscfgr{\circuitOne}{\MonvalOne}$. The subject reduction claim is then obtained by choosing $\circuitTwo=\cinput{\emptylc},\assOne=\emptyassignment,\lcOne=\emptylc,\dlvaltreeTwo=\emptytree,\MonlcOne=\Monunit{\emptylc}$.
	\end{proof}
	
	\begin{theorem}[Progress]
		\label{thm: closed progress}
		If $\simpjudgment{\termOne}{\dlvaltreeOne}{\MontypeOne}$, then either $\exists\circuitOne,\MonvalOne.\termOne \simpbseval \bscfgr{\circuitOne}{\MonvalOne}$ or $\termOne\simpbsdiverge$.
	\end{theorem}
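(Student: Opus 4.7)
The plan is to reduce the closed-computation claim to a stronger generalized statement over arbitrary well-typed left configurations, namely: if $\lbscfgjudgment{\lcOne}{\dlvaltreeTwo}{\dlvaltreeOne}{\bscfgl{\circuitOne}{\assOne}{\termOne}}{\MontypeOne}{\MonlcOne}$ then either $\bscfgl{\circuitOne}{\assOne}{\termOne}\bseval\bscfgr{\circuitTwo}{\MonvalOne}$ for some $\circuitTwo,\MonvalOne$, or $\bscfgl{\circuitOne}{\assOne}{\termOne}\bsdiverge$. The theorem as stated is just the instance with $\circuitOne=\cinput{\emptylc},\assOne=\emptyassignment,\dlvaltreeTwo=\emptytree,\lcOne=\emptylc,\MonlcOne=\Monunit{\emptylc}$. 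I would prove the contrapositive-style equivalent: if no reduction of the configuration terminates, then the configuration diverges. Since divergence is coinductively defined (in the spirit of the inference rules for $\bsdiverge$ shown in the text), I would proceed by coinduction and case analysis on $\termOne$.

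The easy cases are $\valOne\valTwo$, $\letin{\tuple{\varOne,\varTwo}}{\valOne}{\termTwo}$, $\force\valOne$, and $\return\valOne$: by inversion of the value typing rules, $\valOne$ must have the expected shape ($\lambda$-abstraction, pair, or $\lift\termTwo$), so each term exposes a unique reduction step to a smaller configuration which is itself well-typed by Theorem \ref{thm: closed SR}; the coinductive hypothesis then yields divergence of the outer configuration if the inner one diverges, while termination of the inner configuration immediately gives termination of the outer, contradicting the assumption. The two genuinely interesting cases are $\justboxt$ and $\apply{}$.

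For $\boxt{\mtypeOne}{}(\lift\termTwo)$, I would use $\freshlabels(\mtypeOne)$ to build the sandboxed configuration $\bscfgl{\cinput{\lcTwo}}{\emptyassignment}{\letin{\varOne}{\termTwo}{\Monunit{\varOne\vec\labOne}}}$; by Proposition relating M-typing to value-typing plus the \textit{let} and \textit{app} rules I would show it is well-typed, and by the coinductive hypothesis it either diverges (giving divergence of the outer term via the \textit{box} divergence rule) or converges; in the latter case, Theorem \ref{thm: closed SR} forces the resulting lifted value to have M-type, hence to be an M-value, which makes the \textit{box} evaluation rule fireable. For $\apply{\dlvalTwo_1,\dots,\dlvalTwo_n}(\valOne,\valTwo)$, by inversion $\valOne$ is a boxed circuit and $\valTwo$ an M-value; I would verify every precondition of $\append$ using well-typedness: freshness of the $\dlvalTwo_i$ against $\namesof{\assOne}(\dlvaltreeTwo)$ from the disjointness assumption of well-typed configurations (combined with a lemma stating $\namesof{\alpharenamed{\assOne}{\alpharenamingOne}}(\alpharenamed{\dlvaltreeOne}{\alpharenamingOne})=\alpharenamingOne[\namesof{\assOne}(\dlvaltreeOne)]$), existence of a suitable renaming of labels turning the boxed inputs into $\valTwo$ and freshening the remaining labels (a bijection between finite label sets, padded with the identity), and disjointness of the domains of the assignments of the appended circuit from $\dom(\assOne)$, which follows from $\circjudgment{\alpharenamed{\circuitTwo''}{\alpharenamingOne}}{\alpharenamed{\dlvaltreeTwo'}{\alpharenamingOne}}{\cdot}{\cdot}$. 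So $\append$ returns successfully and the \textit{apply} rule fires.

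The genuinely hard case, and the main obstacle, is $\letin{\varOne}{\termTwo}{\MontermOne}$, because here the semantics iterates over all branches $\assOne_1,\dots,\assOne_n$ of $\cassset_{\dlvaltreeTwo_\lindex}$, threading the underlying circuit through each branch. I would first well-type the head $\bscfgl{\circuitOne}{\assOne}{\termTwo}$ and apply the coinductive hypothesis: if it diverges, conclude via the \textit{let-now} rule; otherwise Theorem \ref{thm: closed SR} gives a right configuration typed with $\MonlcTwo$ and a lifted substitution lemma instance yielding well-typed branches $\MontermOne(\assOne_i)[\MonvalOne_\lindex(\assOne_i)/\varOne]$. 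I would then iteratively process each branch, at each step constructing the well-typed left configuration (carefully tracking the evolving lifting tree via Lemma on comma extraction and the disjointness of $\namesof{\assOne\cup\assOne_i}$ from $\namesof{}(\dlvaltreeTwo_{\assOne_i})$), invoking the coinductive hypothesis, and either concluding by \textit{let-then} at the first diverging branch $j$ or continuing with Theorem \ref{thm: closed SR} to get the next well-typed configuration. If no branch diverges after $n$ iterations, the flattening $\Monflatten{\Moncomp{\dlvaltreeTwo_\lindex}{\MonvalTwo_i}{\assOne_i}{}}$ is well-defined (thanks to the disjointness inherited from the \textit{let} typing rule) and the \textit{let} evaluation rule fires, contradicting the no-reduction assumption. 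This bookkeeping on composition/flattening interleaved with repeated application of subject reduction is what makes this case technically heavy.
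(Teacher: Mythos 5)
Your proposal follows essentially the same route as the paper's proof: the same generalization to arbitrary well-typed left configurations, the same coinductive contrapositive formulation, the same case split with $\justboxt$, $\apply{}$ and $\mathsf{let}$ as the substantive cases, and the same supporting lemmas (the $\namesof{}$-renaming identity, the label-renaming construction padded with the identity, comma extraction, and the interleaving of the coinductive hypothesis with subject reduction branch by branch). The only slight imprecision is that in the easy cases ($\valOne\valTwo$, destructuring) well-typedness of the contractum comes from the substitution lemma rather than from subject reduction, but this does not affect the argument.
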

	\begin{proof}
		We consider the equivalent claim that if $\simpjudgment{\termOne}{\dlvaltreeOne}{\MontypeOne}$ and $\nexists\circuitOne,\MonvalOne.\termOne \simpbseval \bscfgr{\circuitOne}{\MonvalOne}$, then $\termOne\simpbsdiverge$. We prove the more general result that if $\lbscfgjudgment{\lcOne}{\dlvaltreeTwo}{\dlvaltreeOne}{\bscfgl{\circuitTwo}{\assOne}{\termOne}}{\MontypeOne}{\MonlcOne}$ and $\nexists\circuitOne,\MonvalOne.\bscfgl{\circuitTwo}{\assOne}{\termOne}\bseval\bscfgr{\circuitOne}{\MonvalOne}$, then $\bscfgl{\circuitTwo}{\assOne}{\termOne}\bsdiverge$, by coinduction and case analysis on $\termOne$. The progress claim is then obtained by choosing $\circuitTwo=\cinput{\emptylc},\assOne=\emptyassignment,\lcOne=\emptylc,\dlvaltreeTwo=\emptytree,\MonlcOne=\Monunit{\emptylc}$.
	\end{proof}
	
	The complete proofs of theorems \ref{thm: closed SR} and \ref{thm: closed progress} can be found in the extended version of this paper \cite{extended-ver}.
}

\section{Conclusion}
\label{sect:related}

\paragraph*{Our Contributions}
This paper introduces a new paradigmatic language for dynamic lifting belonging to the \PQ\ family of languages. The language, called \PQK, can be seen as an extension of \PQM\ which allows for the Boolean information flowing within bit wires to be lifted from the circuit level to the program level. In order to make the circuit construction process as flexible and as general as possible, a powerful type and effect system based on lifting trees has been introduced. This allows for the typing of programs which produce highly non-uniform circuits. The main technical results we obtained are type soundness, in the sense of subject reduction and progress theorems.

\paragraph*{Future Work}
In this paper we focused on the operational aspects, leaving an 
investigation about a possible denotational account of \PQK\ as future work. 
A related problem is that of understanding the precise nature of the lifting 
operation that we use pervasively in the paper. In particular, while it would be tempting to interpret $\condmonad_{\dlvaltreeOne}$ as a graded monad with unit 
$\munit:\genericSetOne\to\condmonad_\emptytree(\genericSetOne)$ defined as 
$\munit\genericObjectOne=\treeLeaf{\genericObjectOne}$ and multiplication 
$\Monflatten{\cdot}: 
\condmonad_\dlvaltreeOne(\condmonad_\dlvaltreeTwo(\genericSetOne))\to\condmonad_{\Moncomp{\dlvaltreeOne}{\dlvaltreeTwo}{\assOne}{}}(\genericSetOne)$, we generally work with objects that do not belong to $\condmonad_\dlvaltreeOne(\condmonad_\dlvaltreeTwo(\genericSetOne))$ for some fixed $\dlvaltreeOne,\dlvaltreeTwo$, so we find that this interpretation is not entirely appropriate, at least if we assume grades to be elements of \emph{one fixed monoid}.

Another problem that we leave
open concerns the notion of generalized circuits used in this paper. On the one hand, it can certainly be 
said that it can model quantum circuits in their full generality, including 
those measurement patterns found in measurement-based quantum computing~\cite{measurement-calculus}. On the other hand, while it is 
clear that such circuits make computational sense (after all, any such computation can be simulated by a classically controlled quantum Turing 
machines~\cite{ccqc}), it is not clear what kind of correspondence exists between them and quantum circuits in their usual form~\cite{nielsen-chuang}, which is the one 
most often considered in the literature.

\paragraph*{Related Work}
As already mentioned in the Introduction, various paradigmatic  		$\lambda$-calculi modeling the \Quipper\ programming language have been introduced in the literature~\cite{proto-quipper-d,proto-quipper-l,proto-quipper-m,proto-quipper-s}. In this work, we took inspiration from \PQM\ \cite{proto-quipper-m}, which however cannot handle dynamic lifting. The only member of the \PQ \ family that can handle dynamic lifting, in a uniform and more restricted form than ours, is \PQL\ \cite{proto-quipper-l}, which has been introduced very recently  		and independently of this work. Noticeably, the class of circuits targeted by \PQL\ (which the authors have named \textit{quantum channels}) indeed includes non-uniform circuits like the one in  Figure~\ref{fig: one-way}, but the type system of the language rejects the programs that would build them. Most of the aforementioned contributions differ from ours in that they focus mainly on the denotational semantics of the language rather than its
operational semantics.

The type and effect system paradigm is well known from the literature \cite{monadic-affine,parametric-effect-monads,effect-systems-revisited,types-and-effects} and has been used in various contexts as a way to reflect information on the effects produced by the underlying program in its type. In our case, the relevant effect is a choice effect, which is mirrored both in the operational semantics and in the type system. As already stated, the problem of giving a proper monadic status to the considered choice effect remains open, although this work shows that \emph{operationally speaking} everything works smoothly.

Finally, it is worth mentioning that the circuit description paradigm is not the only approach to designing quantum programming languages and calculi. For instance, \QCL\ \cite{qcl}, \QML\ \cite{qml} and Selinger and Valiron's quantum $\lambda$-calculus \cite{quantum-lambda-calculus} are some examples of quantum programming languages whose instructions are designed to be executed individually and directly on quantum hardware, without any direct reference to quantum circuits.

\bibliography{bibliography}

\newpage

\appendix

\section{Type Derivations}
\label{app: type derivations}

\begin{figure}[ht]
	\fbox{\begin{minipage}{.98\textwidth}
		$$
		\inference[abs]
		{
		\inference[return]
		{
		\inference[abs]
		{
		\inference[let]
		{
			\inference[apply]
			{
				\valjudgment{\emptycontext}{\emptycontext}{\gateid{H}}{\circt{\emptytree}(\qubitt,\Monunit{\qubitt})}
				&
				\inference[var]{}
				{
					\valjudgment{q:\qubitt}{\emptycontext}{q}{\qubitt}
				}
			}{
				\compjudgment{q:\qubitt}{\emptycontext}{\emptytree}{\apply{}(\gateid{H},q)}{\Monunit{\qubitt}}
			}
			&
			\Pi
			\\
			\Monunit{\letin{\_}{\apply{u}(\gateid{ML},q)}{\treeNode{u}{\return a}{\apply{}(\gateid{Meas},a)}}} \in \condmonad_\emptytree(\termset)
			}
		{
			\begin{aligned}
				\compjudgment{a:\qubitt,q:\qubitt}{\emptycontext}{\treeNode{\dlvalOne}{\emptytree}{\emptytree}}{
					&\letin{q}{\apply{}(\gateid{H},q)}{\Monunit{\\
						&\letin{\_}{\apply{u}(\gateid{ML},q)}{\\
							&\treeNode{u}{\return a}{\apply{}(\gateid{Meas},a)}}}}}
				{\treeNode{u}{\qubitt}{\bitt}}
			\end{aligned}}
		}
		{
			\begin{aligned}
				\valjudgment{q:\qubitt}{\emptycontext}{
					\lambda a_{\qubitt}.
					&\letin{q}{\apply{}(\gateid{H},q)}{\Monunit{\\
					&\letin{\_}{\apply{u}(\gateid{ML},q)}{\\
							&\treeNode{u}{\return a}{\apply{}(\gateid{Meas},a)}}}}}
				{\qubitt\multimap\treeNode{u}{\qubitt}{\bitt}}
		\end{aligned}}
		}
		{
		\begin{aligned}
			\compjudgment{q:\qubitt}{\emptycontext}{\emptytree}{
				\return\lambda a_{\qubitt}.
				&\letin{q}{\apply{}(\gateid{H},q)}{\Monunit{\\
				&\letin{\_}{\apply{u}(\gateid{ML},q)}{\\
				&\treeNode{u}{\return a}{\apply{}(\gateid{Meas},a)}}}}\\&}
			{\Monunit{\qubitt\multimap\treeNode{u}{\qubitt}{\bitt}}}
		\end{aligned}}
		}
		{
		\begin{aligned}
			\valjudgment{\emptycontext}{\emptycontext}{
				\lambda q_{\qubitt}.\return\lambda a_{\qubitt}.
				&\letin{q}{\apply{}(\gateid{H},q)}{\Monunit{\\
				&\letin{\_}{\apply{u}(\gateid{ML},q)}{\\
				&\treeNode{u}{\return a}{\apply{}(\gateid{Meas},a)}}}}\\&}
			{\qubitt\multimap\Monunit{\qubitt\multimap\treeNode{u}{\qubitt}{\bitt}}}
			\end{aligned}}	
		$$
	\end{minipage}}
	\caption{Type derivation for the \PQK\ program in Figure \protect\ref{fig: pqk one-way}. $\gateid{H},\gateid{Meas}$ and $\gateid{ML}$ are shorthand for the boxed circuits employed in Figure \protect\ref{fig: pqk one-way}. Arrow annotations are omitted for brevity. Sub-derivation $\Pi$ is given in Figure \protect\ref{fig: subder}.}
\end{figure}

\begin{figure}[t]
	\fbox{\begin{minipage}{.98\textwidth}
		$$
		\inference[let]
		{
			\inference[apply]
			{
				\valjudgment{\emptycontext}{\emptycontext}{\gateid{ML}}{\circt{\treeNode{\dlvalOne}{\emptytree}{\emptytree}}(\qubitt,\treeNode{\dlvalOne}{\unitt}{\unitt})}
				&
				\inference[var]{}
				{
					\valjudgment{q:\qubitt}{\emptycontext}{q}{\qubitt}
				}
			}{
				\compjudgment{q:\qubitt}{\emptycontext}{\treeNode{u}{\emptytree}{\emptytree}}{\apply{u}(\gateid{ML},q)}{\treeNode{\dlvalOne}{\unitt}{\unitt}}
			}
			&
			\Pi'
			&
			\Pi''
			\\
			\treeNode{u}{\return a}{\apply{}(\gateid{Meas},a)}\in\condmonad_{\treeNode{\dlvalOne}{\emptytree}{\emptytree}}(\termset)
		}{
			\begin{aligned}
				\compjudgment{q:\qubitt,a:\qubitt}{\emptycontext}{\treeNode{\dlvalOne}{\emptytree}{\emptytree}}{
					&\letin{\_}{\apply{u}(\gateid{ML},q)}{\\
						&\treeNode{u}{\return a}{\apply{}(\gateid{Meas},a)}}
				}{
					\treeNode{u}{\qubitt}{\bitt}
				}
			\end{aligned}
		}
		$$
	\end{minipage}}
	\caption{Sub-derivation $\Pi$. Sub-derivations $\Pi'$ and $\Pi''$ are given in figures \protect\ref{fig: subsubder0} and \protect\ref{fig: subsubder1}, respectively.}
	\label{fig: subder}
\end{figure}

\begin{figure}[ht]
	\fbox{\begin{minipage}{.98\textwidth}
			$$
			\inference[return]
			{
				\inference[var]{\void}
				{
					\valjudgment{\_:\unitt,a:\qubitt}{\emptycontext}{a}{\qubitt}
				}
			}
			{
				\compjudgment{\_:\unitt,a:\qubitt}{\emptycontext}{\emptytree}{\return a}{\Monunit{\qubitt}}
			}
			$$
		\end{minipage}
	}
	\caption{Sub-derivation $\Pi'$, corresponding to the $u=0$ branch of the lifted computational judgment $\Moncompjudgment{\_:\treeNode{u}{\unitt}{\unitt},a:\qubitt}{\emptycontext}{\treeNode{u}{\Monunit{\emptytree}}{\Monunit{\emptytree}}}{\treeNode{u}{\return a}{\apply{}(\gateid{Meas},a)}}{\treeNode{u}{\Monunit{\qubitt}}{\Monunit{\bitt}}}$}
	\label{fig: subsubder0}
\end{figure}

\begin{figure}[ht]
	\fbox{\begin{minipage}{.98\textwidth}
			$$
			\inference[apply]
			{
				\valjudgment{\_:\unitt}{\emptycontext}{\gateid{Meas}}{\circt{\emptytree}(\qubitt,\Monunit{\bitt})}
				&
				\inference[var]{\void}
				{
					\valjudgment{\_:\unitt,a:\qubitt}{\emptycontext}{a}{\qubitt}
				}
			}
			{
				\compjudgment{\_:\unitt,a:\qubitt}{\emptycontext}{\emptytree}{\apply{}(\gateid{Meas},a)}{\Monunit{\bitt}}
			}
			$$
	\end{minipage}}
	\caption{Sub-derivation $\Pi''$, corresponding to the $u=1$ branch of the lifted computational judgment $\Moncompjudgment{\_:\treeNode{u}{\unitt}{\unitt},a:\qubitt}{\emptycontext}{\treeNode{u}{\Monunit{\emptytree}}{\Monunit{\emptytree}}}{\treeNode{u}{\return a}{\apply{}(\gateid{Meas},a)}}{\treeNode{u}{\Monunit{\qubitt}}{\Monunit{\bitt}}}$}
	\label{fig: subsubder1}
\end{figure}

\end{document}